\documentclass[11pt,letterpaper]{article}

\usepackage[margin=1in]{geometry}

\usepackage{fullpage}

\usepackage[subtle,mathspacing=normal]{savetrees}
\usepackage{url}
\usepackage{amsmath}
\usepackage{algorithm}
\usepackage[noend]{algpseudocode}
\usepackage{amssymb}
\usepackage{amsthm}
\usepackage{color}
\usepackage{xcolor}
\usepackage{array}
\usepackage{xy}
\usepackage{setspace}
\usepackage{varwidth}% http://ctan.org/pkg/varwidth
\usepackage{multicol}
\usepackage{algorithm}
\usepackage{hyperref}
\usepackage[capitalize]{cleveref}
\usepackage{mathtools}
\usepackage{cite}
\usepackage{thmtools,thm-restate}

\usepackage{comment}
\newtheorem{theorem}{Theorem}[section]

\newtheorem{lemma}[theorem]{Lemma}

\newtheorem{claim}[theorem]{Claim}
\newtheorem{corollary}[theorem]{Corollary}

\theoremstyle{remark}

\crefname{thm}{Theorem}{Theorems}
\crefname{lem}{Lemma}{Lemmas}
\crefname{clm}{Claim}{Claims}
\crefname{obs}{Observation}{Observations}
\crefname{cor}{Corollary}{Corollaries}

\usepackage{authblk}

\usepackage{enumitem}
\setlist[enumerate]{label=\arabic*.}
\newcommand{\defn}[1]{\textbf{\emph{#1}}}
\renewcommand{\paragraph}[1]{\vspace{.2 cm} \noindent \textbf{#1}}

\renewcommand{\epsilon}{\varepsilon}

\newcommand{\E}{\mathbb{E}}

\newcommand{\alg}{\mathcal{A}}
\newcommand{\balg}{\mathcal{B}}
\newcommand{\workload}{W}
\newcommand{\adv}{V}

\newcommand{\hwm}{S}
\newcommand{\Q}{Q}

\newcommand{\size}[1]{|#1|}
\newcommand{\fragnum}{f}
\newcommand{\advmem}{M}
\newcommand{\advreq}{Q}
\newcommand{\M}{M}
\newcommand{\m}{m}
\newcommand{\Mmin}{\M_{\textrm{min}}}
\newcommand{\Mmax}{\M_{\textrm{max}}}
\newcommand{\Mwork}{\M(\workload)}

\newcommand{\Mguess}{\widetilde{\M}}
\newcommand{\Qguess}{{\widetilde{\Q}}}

\newcommand{\comp}{\mathcal{C}}
\newcommand{\opt}{\textsc{Opt}}

\newcommand{\alignedevenfrag}[1]{#1-aligned-even-fragmentation}

\newcommand{\trickyalignedevenfrag}[1]{selective #1-aligned-even-fragmentation}
\newcommand{\tricky}{selective }
\newcommand{\trickiness}{selectiveness }

\begin{document}

\title{Tight Bounds for Memory Allocation \\ With and Without Request Fragmentation} 

\author{%
\makebox[\textwidth][c]{%
\parbox[t]{0.31\textwidth}{\centering
{\large Michael A. Bender\par}
{\normalsize Stony Brook University\par}
{\small\texttt{bender@cs.stonybrook.edu}\par}}
\hfill
\parbox[t]{0.31\textwidth}{\centering
{\large Alex Conway\par}
{\normalsize Cornell University\par}
{\small\texttt{me@ajhconway.com}\par}}
\hfill
\parbox[t]{0.31\textwidth}{\centering
{\large Mart\'in Farach-Colton\par}
{\normalsize New York University\par}
{\small\texttt{martin@farach-colton.com}\par}}
}%

\par\vspace{0.5em}

\makebox[\textwidth][c]{%
\parbox[t]{0.31\textwidth}{\centering
{\large Hanna Koml\'os\par}
{\normalsize Max Planck Inst. for Informatics\par}
{\small\texttt{hkomlos@gmail.com}\par}}
\hfill
\parbox[t]{0.31\textwidth}{\centering
{\large William Kuszmaul\par}
{\normalsize Carnegie Mellon University\par}
{\small\texttt{kuszmaul.cmu.edu}\par}}
\hfill
\parbox[t]{0.31\textwidth}{\centering
{\large Nicole Wein\par}
{\normalsize University of Michigan\par}
{\small\texttt{nswein@umich.edu}\par}}
}%
}

\date{}
\maketitle

\begin{abstract}
The classical memory-allocation problem captures the task of placing objects of different sizes in memory, while minimizing the so-called memory high-water mark. It has been known since the early 1970s that the optimal competitive ratio for any deterministic online allocator is $\Theta(\log M)$, where $M$ is the volume high-water mark of the underlying request sequence.

This paper begins with a simple observation: many real-world allocators seem to bypass the 1971 lower bound by adopting a slightly different model for memory allocation. These allocators use what we call \emph{$k$-aggregate request fragmentation}, meaning that the memory allocator is permitted to break requests into multiple \emph{fragments}, so long as the all-time maximum number of simultaneous fragments is at most $k$ times the all-time maximum number of simultaneous requests. 

We consider the following basic question: Does request fragmentation fundamentally change the problem of memory allocation, and if so, how? Our results come with several surprises. Among these, we find that even using $k = 1 + o(1)$ request fragmentation, the optimal competitive ratio---which was $\Theta(\log M)$ in the classical setting---collapses to $\Theta(\log \log M)$. This result is shown to be tight with matching upper and lower bounds, applying to both deterministic and randomized algorithms. 

\end{abstract}
\pagenumbering{gobble}
%\newpage
\pagenumbering{arabic}

\section{Introduction}
Introduced in the early 1960s and 70s \cite{collins1961experience,maher1961problems,Robson71,Robson74,MR351411,Robson77}, the memory allocation problem\footnote{Also known as ``online memory allocation'', ``dynamic memory allocation'' and ``storage allocation''} is one of the oldest and most basic problems in computer science. 

In its classical form, a memory allocator must support two operations: a memory \defn{request}, which specifies some size $s > 0$, and requires an allocation of $s$ consecutive free slots in memory; and a memory \defn{free}, which specifies some previous request, and frees the slots formerly allocated to that request. Memory is represented as an unbounded one-dimensional array, and the goal of the allocator is to satisfy the given input sequence using \emph{as small a prefix of memory as possible}.

Formally, given an input sequence $W$ (also known as a \defn{workload}), and a memory allocation algorithm $\mathcal{A}$, the \defn{memory high-water mark} $\mathbf{\hwm(\alg, \workload)}$ is defined to be the smallest $q \ge 0$ such that all of $\alg$'s allocations on $\workload$ use slots in the interval $[1, q]$. The goal is to keep $\hwm(\alg, \workload)$ as close as possible to the so-called \defn{volume high-water mark} $\mathbf{\M(W)}$, which is the maximum, over all times $t$, of the sum of the sizes of the live requests at time $t$.\footnote{This quantity $\M(W)$ turns out to be equivalent, up to constant factors, to $\hwm(\opt, \workload)$, where $\opt$ is the \emph{offline} optimal allocation algorithm~\cite{MR2066646,MR968855,kierstead1991polynomial,DBLP:conf/mfcs/Slusarek89,approx_G96,gergov1999algorithms}.}

\paragraph{Classical results: bounds for both deterministic and randomized algorithms. }
An ideal algorithm would achieve \defn{competitive ratio} $\hwm(\alg, \workload) / \M(W)$ of $O(1)$ on all workloads $W$. However, in a classic 1971 paper, Robson \cite{Robson71} showed that no such (deterministic) algorithm exists. In fact, as a function of $M = M(W)$, the best competitive ratio achievable by any deterministic algorithm is $\Theta(\log M)$.\footnote{All logarithms in this paper are base 2.} This bound can be achieved by algorithms that do not know $M$, and is optimal even for algorithms that do know $M$ in advance. 

This discovery led researchers to explore two main directions, the task of optimizing the constants in Robson's $\Theta(\log M(W))$ bound~\cite{Robson74, MR351411, randomized_LNO96}, and the question of whether \emph{randomized} algorithms might be able to do better \cite{randomized_LNO96}. Interestingly, it has remained an open question whether randomized algorithms might be able to beat deterministic ones. The best lower bound, due to Luby, Naor and Orda in 1994~\cite{randomized_LNO96}, is a bound of $\Omega(\log M / \log \log M)$ which holds for the competitive ratio of any randomized algorithm, and which, again, holds regardless of whether or not the algorithm knows $M$ in advance.\footnote{The classical lower bounds are often stated in terms of other parameters, such as the request high-water mark or the ratio of the largest to the smallest request sizes. In their constructions and ours, all of these parameters differ by a constant factor.}

\paragraph{The modern approach: request fragmentation.} In practice, researchers and practitioners have been able to \emph{bypass} Robson's lower bound by allowing the allocation algorithm to perform \defn{request fragmentation}, in which the algorithm splits a request into multiple \defn{fragments} that are each allocated in different parts of memory.

Perhaps the most famous application of request fragmentation is at the operating-system level \cite{DBLP:books/wi/SGG2018,DBLP:books/daglib/0026450,arpaci-dusseau2018ostep}. When a program (or heap allocator) requests memory of some size $s$, modern operating systems break the request into fragments of different sizes (each fragment corresponds to a fixed-sized ``page'' or ``huge page'' in the operating system) and places these fragments non-contiguously in memory. The operating system then \emph{hides} this fragmentation from the user, by assigning the fragments virtual addresses that appear contiguous, even though the physical addresses are not. 

Given that modern allocators are capable of performing request fragmentation, one might wonder: Why not take this approach to the extreme? Why not, for example, break each request of size $s$ into $s$ different fragments each of size $1$? To see the problem with such an approach, it is helpful to once again consider the operating-system example. To support request fragmentation, the operating system must store a data structure called the \emph{page table} that maps the virtual address of each fragment to its physical address. Not only does the page table need to fit in RAM, it should ideally also fit in a relatively small hardware cache. But the size of the page table is directly proportional to the \emph{total number of fragments} in the system.

Thus, even if an allocator performs request fragmentation, it should seek to minimize the total number of fragments that are present at a time. With this goal in mind, we say that an allocator $\mathcal{A}$ uses \defn{$k$-aggregate fragmentation} if, on every workload $W$, the maximum number of fragments that are ever live at once is at most $k \cdot Q(W)$, where $Q(W)$ is the maximum number of \emph{requests} that are ever live at once. 

It is worth taking a few words to remark on why $k$-aggregate fragmentation is a natural notion to consider. This definition models the very practical capacity constraint of minimizing the size of the data structure needed to index fragments, which is determined by the maximum number of fragments ever in the system. In a non-fragmented allocator, the size required for the index would simply be the peak number of jobs ever alive in the system, so the ratio $k$ captures how the size complexity of the fragment index scales with respect to a) the input to the index, i.e., the number of jobs it must support, and b) an analogous unfragmented index. An ideal allocator should use $k$-aggregate fragmentation for some small $k$ (perhaps even $k = 1 + o(1)$). 

Although request fragmentation is widely used in practice, the question of how well it performs in theory has remained largely unexplored. How large does $k$ need to be in order for $k$-aggregate request fragmentation to break Robson's $\Omega(\log M)$ lower bound? And, if request fragmentation \emph{does} break Robson's lower bound, then what is the new best bound that one can achieve? 

\paragraph{This paper: Tight bounds for both classical and fragmented memory allocation.} The goal of this paper is to develop tight bounds for both deterministic and randomized memory allocation, with and without $k$-aggregate fragmentation. We summarize the results for the different regimes we study in Table~\ref{tab:summary}.

We give matching upper and lower bounds on the best competitive ratio that any allocation algorithm (deterministic or randomized) can achieve using $k$-aggregate fragmentation. Supposing $k \ge 2$, and given a known upper bound $\overline{M} \ge 1$ on $\M(W)$, we show that the optimal competitive ratio for any algorithm is
\begin{equation} 
\Theta(\log_k \log \overline{M}).
\label{eq:fragmented}
\end{equation}
Our upper bound is achieved by a simple deterministic algorithm that \emph{does not need to know $\overline{M}$ in advance}, while our lower bound applies to all randomized algorithms, including those that know $\overline{M}$ in advance. The lower bound is proven by combining the techniques which we use to prove \eqref{eq:classical} with another entirely separate set of arguments that capture the limitations of request fragmentation. 

We remark that the tradeoff given by \eqref{eq:fragmented} comes with several significant surprises. First, on the upper-bound side, it says that even a very \emph{small} amount of fragmentation ($k = O(1)$) is enough to trigger an exponential collapse in the competitive ratio, as a function of $\overline{M}$.\footnote{In most operating-system level applications, where memory is typically restricted to fit in, say, a $32$ GB RAM, and where memory is measured in $64$-bit words, one can trivially obtain a bound on $\overline{M}$ of, say $2^{32}$, which for $k = 3$ gives $\log_k \log \overline{M} \approx 3.1$.}
Second, on the lower-bound side, it says that using large values of $k$ comes with a comparably shallow payoff---increasing $k$ to be $\omega(1)$ divides the optimal competitive ratio by a relatively meager factor of $\Theta(\log k)$. 

We also generalize \eqref{eq:fragmented} in several ways. First, to the setting where $k = 1 + \epsilon$ for some $\epsilon \in (0, 1)$, where we find that the optimal competitive ratio becomes 
$$\Theta(\log \log \overline{M} + \log \epsilon^{-1}).$$
This means that, even using $k$ as small as $1 + 1 / \operatorname*{polylog} \overline{M}$ is enough to collapse the optimal competitive ratio to $\Theta(\log \log \overline{M})$. Second, we extend our results to the setting where $M$ is known to be in a restricted range $[\Mmin, \Mmax]$, and where $\Mmax / \Mmin = R$ for some parameter $R$.\footnote{Interestingly, our algorithms for this setting, given by Theorems \ref{thm:unknownMupper} and \ref{thm:unknownMuppereps}, actually require knowledge only of $\Mmin$, and not of $\Mmax$ or $R$.} In this case, we show that, for $k > 1$ and $R \ge 1$, the optimal competitive ratio becomes
\begin{equation}
\begin{cases}
    \Theta(1 + \log_k \log R) & \text{if } k \ge 2 \\
    \Theta(\log \log R + \log (k - 1)^{-1}) & \text{if }k \in (1, 2).
    \end{cases}
\label{eq:fragmented2}
\end{equation}
A notable special case is the setting where $\M(W)$ is fully known in advance (i.e., $R = 1$). In this case, for $k = \Theta(1)$, the optimal competitive ratio drops to $O(1)$. In contrast, in the classical setting, the optimal competitive ratio would remain at $\Omega(\log \M(W))$ (by \eqref{eq:classical}). 

Finally, we also consider a stronger notion of request fragmentation, called \defn{$k$-per-request fragmentation}, in which the allocator is permitted to break each \emph{individual} request into up to $k$ fragments, but cannot break any request into more than that.
Note that $k$-per-request fragmentation trivially cannot improve the optimal competitive ratio \eqref{eq:classical} by more than a factor of $k$, since any request of size $s$ must have a fragment of size at least $s/k$ -- and even if we allocated these fragments \emph{alone}, our memory-high water mark would be at least $\Omega(\frac{1}{k} M \log M)$, resulting in a competitive ratio of at least $\Omega(k^{-1} \log M)$. Our main result for $k$-per-request fragmentation is that, in fact, even this factor of $k$ is overly optimistic. We show that, using $k$-per-request fragmentation, the optimal competitive ratio (for both deterministic and randomized algorithms) is given by 
\begin{equation}
    \Theta(\log_{k + 1} M),
    \label{eq:fragmented3}
\end{equation}
regardless of whether or not any information about $M = \M(W)$ is known in advance. 

The contrast between Equations \eqref{eq:fragmented} and \eqref{eq:fragmented3} provides a strong theoretical justification for why, in practice, modern allocators (implicitly) use $k$-\emph{aggregate} fragmentation instead of $k$-\emph{per-request} fragmentation. The bounds achieved by the former are (at least) exponentially better than those achievable by the latter.

From our lower bound for $k$-per-request fragmentation, we also obtain as a corollary an improved lower bound for classical (non-fragmented) memory allocation. We prove that any randomized algorithm must incur a competitive ratio of at least 
\begin{equation} \Omega(\log M),
\label{eq:classical}
\end{equation}
even if $M = \M(W)$ is given to the algorithm in advance. Although this result closes the modest $\log\log M$ gap that remained after the randomized lower bound of Luby, et al. \cite{randomized_LNO96}, it is nevertheless an interesting byproduct of our analysis to settle the main conjecture in \cite{randomized_LNO96} and demonstrate that there is no asymptotic advantage to using randomization for classical memory allocation.

Finally, in \Cref{app:average}, we also consider several other notions of fragmentation that, a priori, might seem different, and we prove reductions showing that these are essentially equivalent to per-request fragmentation.

Together, our results paint, for the first time, a complete picture of what the landscape of memory-allocation algorithms look like, both for deterministic and randomized algorithms, with and without memory fragmentation. Our results reveal that the use of $k$-aggregate fragmentation, even for very small values of $k$, results in a near total collapse of the classical lower bounds -- thereby justifying why real-world allocators have converged to this solution in practice. Our results further demonstrate that different types of fragmentation ($k$-aggregate vs $k$-per-request) result in very different tradeoff curves, meaning that almost all of the algorithmic value to be obtained from fragmentation requires different requests to be broken into different amounts of fragments.  Finally, our results reveal that randomization, despite offering an advantage in other related settings such as paging \cite{SleatorTa85,fiat1991competitive}, offers no asymptotic advantage for memory allocation.

\begin{table}
    \centering
    \begin{tabular}{|c|c|c|c|c|}
        \hline
        Fragmentation & Parameter Range & Bound Type & Competitive Ratio & Reference\\
        \hline
        \hline
         None & N/A & Upper & $O(\log M)$ & \cite{Robson77} \\
        \hline
         None & N/A & Lower & $\Omega(\log M)$ & \Cref{cor:main-no-fragments} \\
         \hline
        $k$-per request & $k \geq 1$ & Upper & $O(\log_k M)$ & \Cref{clm:upper-per} \\
        \hline
         $k$-per request & $k \geq 1$ &Lower& $\Omega(\log_k M)$ & \Cref{cor:old-main-lower} \\
         \hline
        $k$-aggregate & $k \geq 2$ & Upper & $O(\log_k\log(\overline{M}))$ & \Cref{thm:unknownMupper-paraphrased} \\
        \hline
         $k$-aggregate & $k \geq 2$ & Lower & $\Omega(\log_k\log (\overline{M}))$ & \Cref{thm:ag} \\
         \hline
        $(1+\epsilon)$-aggregate & $\epsilon \in (0,1)$ & Upper & $O(\log\log\overline{M} + \log\epsilon^{-1}))$ & \Cref{thm:unknownMuppereps-paraphrased} \\
        \hline
         $(1+\epsilon)$-aggregate & $\epsilon \in (0,1)$ & Lower & $\Omega(\log\log \overline{M} + \log\epsilon^{-1}))$ & \Cref{cor:epslower-final} \\
         \hline
    \end{tabular}
    \caption{Summary of Results for Different Types of Fragmentation. Bounds are given as competitive ratios to an offline optimal algorithm, which has memory high-water mark $\Theta(M)$, where $M$ is the volume high-water mark (the maximum number of simultaneously occupied memory slots at any time) of the workload. For $k$-aggregate fragmentation, $\overline{M}$ denotes an upper bound on $M$.}
    \label{tab:summary}
\end{table}

\paragraph{Related work. }
The problem of dynamic storage allocation first came to prominence in the academic literature in 1961, when \textit{Communications of the ACM} published a special issue focusing exclusively on the topic~\cite{cacm-1961-10}. In that issue Collins initiated the experimental study of several algorithms that remain popular today, including First Fit, Best Fit, Worst Fit, and Random Fit~\cite{collins1961experience}. Another popular algorithm, the buddy allocator was invented in 1963 by Markowitz and first described by Knowlton in 1965~\cite{Knowlton65,10.1145/365758.365792}.
In 1968, Knuth's \textit{The Art of Computer Programming}, Vol. 1, dedicated a section to Dynamic Storage Allocation, fully codifying the problem and standardizing much of the terminology~\cite{Knuth97}.

The modern formulation of the problem, in terms of competitive ratios, was introduced by Robson in 1971~\cite{Robson71,Robson74,Robson77}, and  independently by Woodall in 1974 \cite{MR351411}. Both authors \cite{Robson71, MR351411} proved matching $\Theta(\log M)$ upper and lower bounds for the best competitive ratio of any deterministic algorithm. Robson \cite{Robson77} further determined sharp worst-case bounds for the competitive ratios achieved by the Best Fit and First Fit algorithm. His results came with several concrete lessons that continue to impact practice today. Notably, the Best Fit algorithm, which may intuitively seem like a good heuristic, can incur a competitive ratio as bad as $\Omega(M)$; while the First-Fit algorithm, which might intuitively seem like a worse algorithm, achieves an asymptotically optimal competitive ratio of $O(\log M)$.

In the 1980s and 90s, it became increasingly clear that, for many online problems (e.g., the closely related online paging problem  \cite{SleatorTa85,fiat1991competitive}), randomization can be used to bypass barriers that deterministic algorithms face. Luby, Naor, and Orda raised the question of whether randomized allocation algorithms might achieve a better competitive ratio than $\Theta(\log M)$ \cite{randomized_LNO96}. The authors \cite{randomized_LNO96} were able to prove a lower bound of $\Omega(\log M / \log \log M)$, and conjectured the stronger lower bound of $\Omega(\log M)$ proven as one of the results in this paper.

Dynamic storage allocation is closely related to several other combinatorial scheduling problems.
In 1983, Coffman connected dynamic storage allocation to dynamic bin packing~\cite{CoffmanGaJo83,DBLP:journals/jal/BakerC82,DBLP:journals/siamcomp/CoffmanGJ83}, and there has been a line of research in this direction~\cite{packing_CS88,CoffmanGaJo83,CoffmanJoSh93,CoffmanJoSh97,CoffmanGaJo96,GalambosWo95}.
This connection can be explicitly seen through a ``rectangle view,'' where one axis is memory address and the other is time~\cite{DBLP:journals/ita/ChrobakS88}.
Surprisingly, recent work suggests that bin-packing algorithms may even yield good allocators in practice~\cite{DBLP:conf/iwmm/LamprakosXCS23}.
There is also a strong connection to online interval coloring~\cite{DBLP:conf/cocoon/Narayanaswamy04}, which has influenced several of the results on dynamic storage allocation.

Another version of the problem that has been studied is the known-duration version, where the algorithm is told each request's duration on arrival \cite{NaorOrPe00, durations_KP00}. Perhaps surprisingly, this version of the problem \emph{continues} to be subject to a $\tilde{\Omega}(\log M)$ lower bounds on the competitive ratio \cite{durations_KP00}.

The offline version of the problem has also received a great deal of attention \cite{10.5555/574848,slusarek1987npstorage,MR2066646,MR968855,kierstead1991polynomial,DBLP:conf/mfcs/Slusarek89,approx_G96,gergov1999algorithms}.
It is known to be NP-Complete~\cite{10.5555/574848,slusarek1987npstorage}, and much of the research has focused on achieving increasingly better approximation ratios.
Early work by Kierstead and Slusarek relied on the connection to interval coloring~\cite{MR968855,kierstead1991polynomial,DBLP:conf/mfcs/Slusarek89}, but Gergov showed that better approximation ratios than those implied by that problem are possible~\cite{approx_G96,gergov1999algorithms}, including the best-known 3-approximation.

Finally, researchers have also studied many other variations of the problem, such as dynamic storage \emph{re-}allocation~\cite{HallPo04,SandersSiSk09,UnalUzKi97,BenderFaFe13,reallocation_BFFFG14,farach2024nearly,lim2015dynamic,bender2015cost,kuszmaul2023strongly,farach2024nearly,jin_realloc}, allocation with restrictions on request sizes~\cite{MR2066646, approx_G96, approx_LLQ04, murthy1999approximation}, and allocation under stochastic arrivals and departures~\cite{MR990048}.

\section{Preliminaries}

In this section, we present basic definitions and notations that will be used throughout the rest of the paper. The reader will note that several of these definitions overlap with those already discussed in the introduction.

Formally, the online memory allocation problem is defined as follows. The input is a \defn{workload}, consisting of an arbitrary sequence of \defn{memory requests} and \defn{memory frees} (or \defn{requests} and \defn{frees} for short).  Memory is represented as an unbounded 1-dimensional array.
Each memory request $r$ specifies a positive integer size $|r|$, signifying the number of memory slots that must be reserved to satisfy the request.

The allocation algorithm must decide which memory slots to allocate to satisfy a request before seeing the next request. These reserved slots are called the \defn{allocation} of the request. Each allocation must occupy a contiguous block of memory, and no two  allocations can overlap.
When a request arrives, it is said to be \defn{live}. 
A free specifies a live memory request whose allocation is consequently \defn{deallocated}, after which the memory slots in the allocation are available for future allocations.

For a memory allocation algorithm $\mathcal{A}$ and workload $W$, the \defn{memory high-water mark} $\mathbf{\hwm(\alg, \workload)}$ of algorithm $\alg$ on workload $\workload$ is the highest index in the memory array that $\alg$ allocates on $\workload$; and the \defn{volume high-water mark}, $\mathbf{\M(W)}$, of $\workload$ is the maximum number of memory slots ever simultaneously live in $\workload$. The (expected) \defn{competitive ratio} of the algorithm on the workload is defined by 
$$C(\alg, \workload) = \E\left[\frac{\hwm(\alg, \workload)}{\M(W)}\right],$$
where the expectation is taken over any randomness the algorithm may use. The goal of an allocation algorithm is to achieve as small competitive ratio as possible. In addition to these quantities, it will often also be helpful to refer to the \defn{request high-water mark} $Q(\workload)$, which is the maximum number of requests that are ever simultaneously live in workload $\workload$.

Finally, we will also investigate allocation algorithms that are permitted to use either $k$-aggregate or $k$-per-request fragmentation, for some parameter $k > 1$. These algorithms break each request $r$ into one or more fragments whose (positive integer) sizes sum to the size of $r$, and then allocate each fragment $q$ to some $|q|$ consecutive slots of free memory---these allocations are considered to be live until the request $r$ is freed.  For a given request $r$ broken into some number $f \ge 1$ of fragments, the first of these fragments is said to be a \defn{complete fragment}, and the others are said to be \defn{partial fragments}.

In the setting of \defn{$k$-aggregate fragmentation}, which is well-defined for any real $k \ge 1$, the algorithm must ensure, for any workload $W$ that there are never more than $k\cdot Q(W)$ fragments alive at a time. When $k = 1 + \epsilon$ for some $\epsilon$, this is equivalent to guaranteeing for every workload $W$ that there are never more than $\epsilon \cdot Q(W)$ live \emph{partial} fragments at a time. In the setting of $k$-per-request fragmentation, which is defined only for positive integer $k$, the algorithm must break each fragment into at most $k$ fragments. 

In addition to considering algorithms that use $k$-aggregate fragmentation (resp.~$k$-per-request fragmentation) our lower bounds will also apply to what we call \defn{expected-$k$ aggregate fragmentation} (resp.~\defn{expected-$k$ per-request fragmentation}), where the algorithm must limit the total number of live fragments at a time (resp.~number of fragments per request) to $k\cdot Q(W)$ (resp.~$k$) \emph{in expectation}.

\section{Technical Overview}

In this section, we present an overview of the technical ideas in the paper, along with how the various techniques and lower bounds fit together. The goal of the section is to arm the reader with the high-level ideas needed to be able to easily understand the full versions of the arguments later in the paper. To streamline the exposition, we will typically focus either on the setting where there is no fragmentation, or on the most basic setting for $k$-aggregate fragmentation, which is $k = 2$.

\subsection{Warmup: A Tight Lower Bound Without Fragmentation}

We begin by discussing how to prove an $\Omega(\log M)$ lower bound for classical (fragmentation-free) memory allocation. In addition to resolving an open question due to Luby, Naor, and Orda~\cite{randomized_LNO96}, the high-level technical approach ends up being essential in all of our other, more general, lower bounds.

\begin{theorem}[Special case of Theorem \ref{thm:main-lower}]
For any $M > 0$, there exists a workload $\adv$, with volume high-water mark $M(\adv) = O(M)$, that forces any randomized online memory allocation $\mathcal{A}$ to incur an expected competitive ratio of at least $\Omega(\log M)$, or equivalently, an expected memory high-water mark $\E[S(\mathcal{A}, \adv)]$ of at least $\Omega(M \log M)$.
\label{thm:classical}
\end{theorem}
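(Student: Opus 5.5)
We will build an oblivious adversary workload $\adv$ that runs in $L = \Theta(\log M)$ phases. By Yao's principle it suffices to give a fixed distribution over workloads against which every deterministic allocator has expected high-water mark $\Omega(M\log M)$. We could equally fix one workload whose frees are randomized using public coins drawn independently of the algorithm's randomness. Since the algorithm's allocations may depend on its coins but the workload cannot, the frees must be chosen without seeing the memory layout.

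The construction is phased, in the spirit of Robson and Luby--Naor--Orda. In phase $i$ ($i = 0,1,\dots,L$) all live requests have size $2^i$. We maintain the invariant that the total live volume is at most $M$ at all times, which gives $\M(\adv)=O(M)$. At the start of phase $i$, we allocate requests of size $2^i$ until the live volume reaches about $M$. We then free a random subset of the older requests: each live request is independently kept with probability $1/2$, or with some more carefully tuned probability. This leaves holes whose typical lengths are too short to host the next phase's size-$2^{i+1}$ requests.

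The potential we track is the amount of memory that is ``wasted'' for the future. Consider a fixed, aligned or unaligned, window of $2^{i+1}$ consecutive slots. After phase $i$'s random frees, this window still contains a surviving small request with constant probability, unless the algorithm packed sizes at coarse granularity. So new size-$2^{i+1}$ allocations must largely go to fresh address space beyond what is already blocked.

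The key lemma to aim for is that in each phase the expected high-water mark increases by $\Omega(M)$, or else the algorithm forgoes a constant fraction of the surviving ``blocking'' volume in a way we can charge. Summing over $L$ phases gives $\E[\hwm] = \Omega(M\log M)$.

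To make the charging work, I would define for each phase $i$ the set $B_i$ of memory occupied by survivors of phases $\le i$. I would then show that the expected measure of address space within $[1,\hwm]$ that cannot host a $2^{i+1}$-block grows by $\Omega(M)$ per phase. The argument is that the allocator's deterministic choice for phase $i$ requests is made before the random survival coins, so each placed request is blocking with constant probability, independently across requests.

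The main obstacle is that survivors from old phases are few in volume: we halve each time, so the surviving volume from phase $j$ is about $M2^{-(i-j)}$. Blocking must therefore come from spreading, not from volume. Each phase-$i$ request of size $2^i$ that survives blocks a whole neighborhood of size up to $2^{i+1}$, and subsequent phases of size $2^{i'}$ are blocked across $2^{i'}$-sized neighborhoods. This is exactly where the LNO analysis loses a $\log\log M$ factor.

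My first attempt at fixing this would be to keep each request with probability $p_i$ chosen so that the expected number of survivors of size $2^i$ per $2^{i'}$-window stays constant for all later $i'$. An alternative is to use a multi-scale argument: a ``density'' potential $\Phi=\sum_i 2^{-i}\cdot(\text{blocked }2^i\text{-windows})$, with concentration via independence of the coins. I would bound the algorithm's best response at each phase by an exchange argument, showing that packing into old gaps only helps if those gaps are free. By linearity of expectation, a constant fraction of gaps are not free.
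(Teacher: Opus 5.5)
Your proposal has a genuine gap: it leaves open the two ingredients that make an $\Omega(\log M)$ bound go through, and the one concrete parameter choice it does make points the wrong way.

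\textbf{The survival probability must be bounded away above $1/2$.} You keep each request with probability $1/2$, and as a refinement you tune $p$ so that the expected number of old survivors per $2^{i'}$-window stays constant. With survival probability $p$, requests from $a$ phases ago carry volume about $p^a M$, but number about $(2p)^a$ per window of an $M$-sized region. Keeping that count constant forces $p=1/2$, which is exactly the Luby--Naor--Orda regime. In that regime, a block that absorbed many requests from an old phase retains only $O(1)$ expected survivors from it. It is therefore fully cleared with probability that does not decay with the age gap, and no charging over blocks becomes summable.

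The paper instead frees only a $0.1$ fraction per round, so $p=0.9$. Old rounds still carry geometrically little volume, so $M(\adv)=O(M)$. But a $2^r$-block that received $\Omega(2^i/1.1^i)$ allocations from round $r-i$ keeps $\Omega(1.5^i)$ of them in expectation. By a Chernoff bound for negatively associated variables, it is completely cleared by round $r$ with probability only $2^{-\Omega(1.5^i)}$. This volume-vs-count dichotomy, with $p$ a constant strictly above $1/2$, is the missing idea.

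\textbf{Your key lemma is neither established nor of the right shape.} Knowing that each placed request survives with constant probability says nothing about how much \emph{new} address space gets blocked. The algorithm chooses placements and can stack requests so that the windows they block coincide. So there is no reason the blocked measure, or the high-water mark, should grow by $\Omega(M)$ in each phase separately.

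The paper's accounting is amortized rather than per-phase. After reducing WLOG to aligned placements, it defines on dyadic blocks the capped potential $\psi(B)=\min\bigl(c|B|,\ \psi(B_1)+\psi(B_2)+|B|\,I_B\bigr)$, with rounding loss $\delta(B)$. This gives
$$\sum_{B\in\mathcal{B}_{\log M}}\psi(B)=M(1+\log M)-\sum_r\sum_{B\in\mathcal{B}_r}\delta(B),$$
while each top-level block has $\psi(B)\le cM$.

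The charging then runs as follows.
\begin{itemize}
\item A block with $\delta(B)>0$ receives a size-$|B|$ allocation after having absorbed at least $c|B|$ volume.
\item Hence some round $r-i$ placed $\Omega(2^i/1.1^i)$ allocations in $B$, all of which must since have been freed.
\item At most $O(M\cdot 1.1^i/2^r)$ blocks can be loaded that heavily, and each charges at most $|B|=2^r$.
\item Combined with the $2^{-\Omega(1.5^i)}$ clearing probability, the expected sum of the $\delta$'s is at most, say, $\tfrac12 M(1+\log M)$ for $c$ large.
\end{itemize}
Therefore $\Omega(\log M)$ top-level blocks of size $M$ must be used in expectation.

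Your either/or intuition is right, but your outline does not yet yield more than the Luby--Naor--Orda analysis. It needs an amortized accounting of this kind, in which dense blocks cannot be cleared and sparse blocks absorb little volume, together with $p>1/2$.
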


It is worth taking a moment to contrast our lower bound construction with that of Luby et al. Both of our constructions use an adversarial workload in rounds where jobs increase geometrically in size, with randomized deletions of jobs from prior rounds. Luby et al. prove their lower bound using a sparse witness set on surviving jobs and their consecutive distances, and it is the sparsity of the witness set that contributes the extra $\log\log M$ factor. Our approach is fundamentally different. We partition memory into aligned diadic blocks and show via a potential function argument that, on average, blocks can receive only a limited total volume of allocations across all rounds of the workload. Thus, we conclude that many memory blocks must be used in order to serve the request sequence. Additionally, while Luby et al. delete items with probability $1/2$, our adversarial workload relies on deleting items with a smaller probability to create a larger population of small surviving items, which block larger allocations.

\paragraph{The workload. } The workload $\adv$ that we use for Theorem \ref{thm:classical} is itself very simple. The workload $\adv$ consists of rounds $r = 0, 1, 2, \ldots, \log M$, where each round $r$ proceeds as follows. In round $r$, the workload $\adv$ first frees a uniformly random $0.1$ fraction of the remaining live requests in each size class; and then makes $M/2^r$ new requests of size $2^{r}$. This is the entire workload.

It is worth noting why $\adv$ frees requests at the rate that it does. The workload is designed so that, at the end of round $r$, almost all of the live requests are still from much earlier rounds (there are roughly $(M/2^r) \cdot 2^i / 1.1^i$ live requests from round $r - i$); but so that almost all of the \emph{volume} from live requests is due to recent rounds (the volume of requests from round $r - i$ is roughly $M / 1.1^i$). 

This \defn{volume vs.~count dichotomy}, in which earlier rounds contribute very little volume but contribute almost all of the live requests, will be essential to our probabilistic analysis. Intuitively, the dichotomy makes it very difficult for the algorithm $\mathcal{A}$ to allocate requests in earlier rounds $r - i$ without those requests (many of which will still be present by round $r$) causing problems for the round-$r$ allocations. 

\paragraph{High-level approach.}
Define the \defn{round-$r$ blocks} in memory, denoted $\mathcal{B}_r$, to be the memory-aligned blocks of size $2^r$ in the memory array (i.e., the disjoint blocks of size $2^r$ whose starting location is a multiple of $2^r$). We may assume (WLOG) that, when a size-$2^r$ request is made, it is allocated to a block in $\mathcal{B}_r$ (i.e., allocations are power-of-two memory-aligned).

For a given round $r$ and a block $B \in \mathcal{B}_r$, define $I_B$ to be the indicator random variable for whether a size-$|B|$ request is ever allocated in that block (it counts even if the request is later deallocated). 

Define $\psi^*(B) = I_B$ if $r = 0$ and 
\[\psi^*(B) = \psi^*(B_1) + \psi^*(B_2) + |B| \cdot I_B\]
if $r > 1$, where $B_1$ and $B_2$ are the two level-$(r - 1)$ blocks contained in $B$. In other words, $\psi^*(B)$ is the total volume of allocations made into $B$ in all rounds up to $r$.

The total amount of volume ever allocated (in any block at any time) is given by
$$\sum_{B \in \mathcal{B}_{\log M}} \psi^*(B) = M (1 + \log M).$$ 
Imagine that we could show that all blocks $B \in \mathcal{B}_{\log M}$ satisfied $\psi^*(B) \le O(2^{\log M}) = O(M)$. This is to say that each block $B \in \mathcal{B}_{\log M}$ is only able to serve a \emph{total} volume of requests roughly proportional to $|B|$. Then, since the total volume of all requests is $M (1 + \log M)$, we could conclude that the number of blocks $B \in \mathcal{B}_{\log M}$ that receive at least one allocation must be quite large, at least $\Omega(\log M).$ Since each such block has size $M$, this would imply a memory high-water mark of $\Omega(M \log M)$, as desired. 

Rather than trying to directly show that all blocks $B \in \mathcal{B}_{\log M}$ satisfy $\psi^*(B) \le O(2^{\log M}) = M$ (or $\E[\psi^*(B)] \le O(2^{\log M})$), we will settle for a weaker but nonetheless sufficient claim. Namely, that, for some large constant $c$,
\begin{equation}\E\left[\sum_{B \in \mathcal{B}_{\log M}} \min(c M, \psi^*(B))\right] \ge \Omega(M \log M).
\label{eq:foo1}
\end{equation}
One can think of \eqref{eq:foo1} as saying that (in expectation) a constant fraction of all volume ever allocated is allocated to a block $B \in \mathcal{B}_{\log M}$ that, \emph{at that point in time}, had received less than $O(M)$ total volume of allocations. 

Moreover, rather than analyze the quantity in \eqref{eq:foo1} directly, we instead analyze a quantity $\psi(B)$ defined as follows. For each round $r$, and for each block $B\in\mathcal{B}_r$, define 
\[\psi(B) = \min(c|B|, \psi(B_1) + \psi(B_2) + |B| \cdot I_B),\]
where as above, $B_1$ and $B_2$ are the two level-$(r - 1)$ blocks contained in $B$ (or where $\psi(B_1) + \psi(B_2)$ is evaluated as $0$, if $r = 0$). Also define the \defn{rounding difference} $\delta(B)$ for block $B$ to be the difference $\psi(B_1) + \psi(B_2) - \psi(B)$. Note that this rounding difference $\delta(B)$ can only be non-zero if $I_B = 1$, and is guaranteed to be at most $|B|$.

To prove \eqref{eq:foo1}, it suffices to show that
\begin{equation}\E\left[\sum_{B \in \mathcal{B}_{\log M}} \psi(B)\right] \ge \Omega(M \log M).
\label{eq:foo2}
\end{equation}
But notice that, in general, 
$$\sum_{B \in \mathcal{B}_{\log M}} \psi^*(B) -  \sum_{B \in \mathcal{B}_{\log M}} \psi(B) = \sum_{r \le \log M} \sum_{B \in \mathcal{B}_r} \delta(B).$$
Since the first sum is exactly $M (1 + \log M)$, the task of proving \eqref{eq:foo2} is equivalent to showing that 
\begin{equation} \E\left[\sum_{r \le \log M} \sum_{B \in \mathcal{B}_r} \delta(B)\right] \le (1 - \Omega(1)) M \log M.
\label{eq:differencesbound}
\end{equation}
Thus, the entire lower bound actually comes down to the task of \emph{upper bounding} the sum of the rounding differences $\delta(B)$. This sum is trivially at most $M \cdot (1 + \log M)$, and in order for our bound to be useful, we just need to achieve (any) bound that is a constant factor smaller than this. If we can prove such a bound, then we will have a proof of Theorem \ref{thm:classical}.

As we will see, the task of bounding (sums of) rounding differences is amenable to a remarkably clean charging argument. This argument is also the part of the proof that exploits the ``volume vs.~quantity dichotomy'' discussed earlier.

\paragraph{Bounding the rounding differences. }To upper bound 
\begin{equation}
    \sum_{r \le \log M} \sum_{B \in \mathcal{B}_r} \delta(B), \label{eq:differences}
\end{equation}
we perform what is essentially a charging argument. Suppose some round-$r$ block $B \in \mathcal{R}$ has non-zero rounding difference $\delta(B) > 0$. Then, the total amount of volume $\psi^*(B)$ allocated to $B$ in the first $r$ rounds (including volume freed) must be at least $c |B|$. Writing $|B| = \Theta\left(\sum_i\frac{|B|}{2^i}\cdot\frac{2^i}{1.1^i}\right)$, we have that there must be some round $r - i$ that contributed at least
\begin{equation} \Omega(2^i / 1.1^i)
\label{eq:manyallocations}
\end{equation}
allocations of size $|B| / 2^i$ to $B$. In order for $\delta(B)$ to be non-zero, there must be a size-$|B|$ allocation that is made to $B$ in round $r$. But such an allocation is only possible if all of the earlier round-$(r - i)$ allocations to $B$ were freed during rounds $r - i + 1, \ldots, r$. In this case, we say that $B$ \defn{charges} its rounding difference, which is at most $|B| = 2^r$, to round $r - i$.

For a given pair of rounds $k$ and $r = k + i$, define $\mathcal{B}_{k, i}$ to be the set of blocks $B \in \mathcal{B}_{k + i}$ with the property that $B$ received at least $\Omega(2^i / 1.1^i)$ allocations in round $r + i$, and all of those allocations were freed in rounds $r + i + 1, \ldots, r$. The set $\mathcal{B}_{k + i}$ contains (among other blocks) all round-$r$ blocks $B$ that charge their rounding difference to round $k$. To upper bound \eqref{eq:differences}, it suffices instead to bound
\begin{equation}\sum_{0 \le k \le \log M + 1} \sum_{1 \le i \le \log M + 1 - k} |\mathcal{B}_{k, i}| \cdot 2^{k + i}.
\label{eq:Bki}
\end{equation}

Here is where we finally benefit from the ``volume vs.~quantity dichotomy'' discussed earlier. Fix some $k, i$ and $r = k + i$, and consider a block $B \in \mathcal{B}_r$ that receives at least $\Omega(2^i / 1.1^i)$ allocations in round $k$. The expected number of those requests that remain live by the end of round $r$ is $\Omega(2^i / (1.1)^{2i})$, which is still quite large! By a Chernoff bound (for negatively associated random variables), we can conclude that
\begin{equation}\Pr[B \in \mathcal{B}_{k, i}] \le 2^{-\Omega(2^i / (1.1)^{2i})} \le 2^{-\Omega(1.5^i)}.
\label{eq:PrB}
\end{equation}
On the other hand, the total number of blocks $B \in \mathcal{B}_r$ that receive $\Omega(2^i / 1.1^i)$ allocations in round $k$ is itself, at most $(M / 2^k) / \Omega(2^i / 1.1^i) = O(M \cdot 1.1^i / 2^r)$. Combining this with \eqref{eq:PrB} allows us to deduce that
\begin{equation}\E[|\mathcal{B}_{k, i}| \cdot 2^r] \le M \cdot 2^{-\Omega(1.5^i)}.
\label{eq:prec}
\end{equation}
Here is where the magnitude of the constant $c$, used in the definition of $\psi$ comes into play. If $c$ is chosen large enough, then we can replace the asymptotic notation in \eqref{eq:prec} with an explicit bound of, say,
\begin{equation}\E[|\mathcal{B}_{k, i}| \cdot 2^r] \le M \cdot 2^{-1.5^i} / 100.
\label{eq:afterc}
\end{equation}
Plugging this into \eqref{eq:Bki} and simplifying, we can conclude that \eqref{eq:Bki} has expected value at most, say, $(M \log M) / 2$. This implies the same bound for the sum \eqref{eq:differences} of the rounding differences, which implies the entire chain of \eqref{eq:differencesbound}, \eqref{eq:foo2}, and \eqref{eq:foo1}. Having concluded \eqref{eq:foo1}, we can then complete the proof of Theorem \ref{thm:classical}.

\subsection{Memory Allocation with Request Fragmentation} 

Next we discuss the more difficult task of proving lower bounds for algorithms that are permitted to perform (various types of) request fragmentation. 

\paragraph{Extending to per-request fragmentation.} We begin by extending our techniques in the previous section to prove the following theorem:

\begin{theorem}[Later restated as Theorem \ref{thm:main-lower}]
Consider $k \ge 2$, let $Q \le M \in \mathbb{N}$, and let $\mathcal{A}$ be a randomized memory allocation algorithm using $k$-expected per-request fragmentation. Then there exists a workload $V(M, Q)$ that has at most $Q$ live requests at a time, each with a size in $[M / Q, M]$; that has volume high-water mark $M(V(M, Q)) = \Theta(M)$; and that forces $\mathcal{A}$ to incur a competitive ratio of $\Omega(\log_k Q)$.
\label{thm:strongextension}
\end{theorem}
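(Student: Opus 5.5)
We generalize the warmup workload by replacing the doubling factor $2$ with a larger base $b = \Theta(k)$ (say $b = ck$ for a large constant $c$). Let $L = \log_b Q$. The workload $V(M,Q)$ runs in rounds $r = 0,1,\ldots,L$. Round $r$ first frees a uniformly random $1-\rho$ fraction of the surviving live requests from each earlier round, and then issues $Q/b^{L-r}$ new requests, each of size $(M/Q)\,b^{r}$, so that every request has size in $[M/Q, M]$. The survival rate $\rho$ is chosen just below $1$ in the same spirit as the factor $1.1$ before, e.g.\ $\rho = 1-1/b$ or a fixed constant such as $0.9$. It has to be tuned so that the volume surviving from round $r-i$ decays geometrically in $i$, which keeps $M(V) = \Theta(M)$, while the number of surviving requests, roughly $(Q/b^{L-r}) b^i \rho^i$, is dominated by old rounds. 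That is the volume vs.~count dichotomy, and it also keeps the live count at $O(Q)$, which can be normalized to at most $Q$. Since the live count is $O(Q)$, a $k$-expected per-request allocator has only $O(kQ)$ fragments in expectation.

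Next, the alignment step. Fragmentation breaks the clean ``allocations are aligned'' assumption, so I would first note that each request of size $s=(M/Q)b^r$, when split into at most $k$ fragments (in expectation), has at least $s/(2k)$ of its volume in fragments of size at least $s/(2k)$. Measured against the round-$r$ block size $s$, each such fragment intersects at most $2$ aligned blocks of size $s/(2k)$. So I define the level-$r$ blocks $\mathcal{B}_r$ to be the aligned blocks of size $s/(2k)$ in a grid refined by factor $b$ per round. I then say a request is ``placed'' in a level-$r$ block if that block is fully covered by one of the request's fragments. A constant fraction of each request's volume is placed this way, and since $b \ge 2k\cdot$const, the grids nest properly.

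The potential argument then runs as before. Define $I_B$ to be $1$ if a round-$r$ fragment fully covers $B \in \mathcal{B}_r$, set $\psi(B)=\min(c|B|, \sum_{B'\subset B}\psi(B') + |B| I_B)$, and let the rounding differences $\delta(B)$ be defined analogously. Here $B$ now has $b$ children rather than $2$. The total placed volume is $\Omega(M L)$ in expectation, so exactly as in the warmup it suffices to show $\E[\sum_B \delta(B)] \le (1-\Omega(1))\cdot$(total placed volume). Every final top-level block has size $\Theta(M/k)$ and $\psi \le cM/k$. That would give only $\Omega(kL)$ blocks, so I need to be careful here. Choosing top blocks of size $\Theta(M)$, i.e.\ aggregating volume at the coarsest scale, instead gives $\Omega(L)$ blocks of size $\Theta(M)$ each, hence $\E[S] = \Omega(M\log_k Q)$.

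The main obstacle is the charging step. If $\delta(B)>0$, then $B$ received volume $\ge c|B|$ before round $r$, so some round $r-i$ contributed $\Omega((b/(b\rho))^{i}\cdot\ldots)$ fragment-allocations inside $B$, roughly $\Omega(b^i/\text{poly}(\cdot)^i)$ many. All of those must be freed before a round-$r$ fragment can cover $B$. Each allocation comes from a distinct request, because a request contributes at most $k$ fragments and we can count only requests whose fragment lies inside $B$. The Chernoff bound for negatively associated deletions then gives $\Pr \le \exp(-\Omega((b\rho^2)^i))$, while the number of candidate blocks is at most the round-$(r-i)$ volume divided by the required amount. Summing over $i$ gives $M\cdot 2^{-\Omega(b^i)}$ per pair of rounds, which is smaller than $M/100$ once $c$ is large, and this finishes the argument. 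The delicate part is handling expected (rather than worst-case) fragment counts and fragments that straddle blocks. I would address both with Markov's inequality on each request's fragment count and by discarding requests with more than $4k$ fragments, which loses only a constant fraction of the volume.
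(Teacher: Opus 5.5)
Your proof has the same architecture as the paper's proof. The shared ingredients are geometrically growing request sizes with constant-rate random thinning of older rounds, the capped potential $\psi$ with rounding differences $\delta$, and charging each $\delta(B)>0$ to an earlier round that packed $B$ densely. You also share the negatively-associated Chernoff bound, and Markov's inequality to turn expected fragment counts into a deterministic $4k$ bound at the cost of ignoring a quarter of the requests. Handling unaligned, unequal fragments inline via fully covered sub-blocks is a valid substitute for the paper's reductions, which scale the workload by $4$ and then by $2$ to get exactly $k$ equal aligned fragments. Using base $b=\Theta(k)$ instead of $k$ is immaterial, since nesting only needs $b$ to be a power of two. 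However, several steps fail as written:
(i) Round $r$ must issue $Q/b^{r}$ requests, not $Q/b^{L-r}$. As written, round $r$ has volume $Mb^{2r-L}$, which reaches $MQ$.
(ii) The survival rate $\rho$ must be an absolute constant such as $0.9$, not $1-1/b$. Live volume is about $\sum_i M\rho^i$, which for $\rho=1-1/b$ is $\Theta(M\min(k,\log_k Q))$. Then $M(V)\neq\Theta(M)$ and the ratio collapses.
(iii) Blocks of size $s/(2k)$ are too coarse. A fragment of length $F$ is only guaranteed to fully cover $\lfloor F/\beta\rfloor-1$ aligned $\beta$-blocks, which is none if $F<2\beta$. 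So fragments of length just under $s/k$ may place almost nothing. Moreover, after the Markov step a request may have $4k$ fragments, all shorter than $s/(2k)$. Take $\beta=s/(32k)$ instead. Then fragments shorter than $4\beta$ carry less than $s/2$, and each longer fragment has at least half its length in fully covered blocks, so at least $s/4$ of every non-discarded request is placed.
(iv) Covered sub-blocks do not come from distinct requests. One request can cover up to $s/\beta=32k$ sub-blocks at its level, so $n$ covered sub-blocks certify only $n/(32k)$ requests. This costs a factor $O(k)$ in the Chernoff exponent. That loss is harmless because $b^i/k\ge b^{i-1}$; it corresponds to the paper's $dk^{i-1}$ in place of $dk^{i}$.

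Your worry about the top level is unfounded. $\Omega(kL)$ used top blocks of size $\Theta(M/k)$ already occupy $\Omega(ML)$ memory, which is exactly how Lemma~\ref{lem:psi-lb-hwm} concludes, so no extra coarse aggregation level is needed. You should also make the pigeonhole over earlier rounds explicit with weights $w_i=\rho^i$ (or $1.1^{-i}$, as the paper does). If $\delta(B)>0$ for a level-$t$ block, some round $t-i$ placed at least $\Omega(c\,w_i b^{i})$ sub-blocks in $B$. The probability that all the corresponding requests are freed is $\exp(-\Omega(c\,w_i b^i\rho^i/k))$. This decays fast enough to swamp the factor $1/w_i$ in the number of candidate blocks. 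With these corrections your argument goes through and is essentially the paper's proof.
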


In addition to being a statement about per-request fragmentation, Theorem \ref{thm:strongextension} also ends up serving as the main subroutine in our final lower bound (for $k$-aggregate fragmented algorithms) which we discuss in a moment. 

The proof of Theorem \ref{thm:strongextension} is the most intricate argument in the paper. It consists roughly of two pieces: (1) a series of reductions, simplifying (in a WLOG fashion) the types of behavior that $\mathcal{A}$ might exhibit; and (2) an extension of the argument from the previous section to apply to algorithms that perform fragmentation. The reductions are deferred to \Cref{app:reductions} and the full details of the proof are shown in \Cref{sec:lower-per-request}.

\paragraph{Lower bounds for $k$-aggregate fragmentation.}
Finally, we can now discuss how to prove a bound of $\Omega(\log_k \log \overline{M})$ on the competitive ratio of any memory-allocation algorithm using $k$-aggregate fragmentation on workloads with volume high-water mark bounded above by $\overline{M}$. 

\begin{theorem}[Special case of Theorem \ref{thm:fragmento}]
Let $k \ge 2$ and $\overline{M} \in \mathbb{N}$ and let $\mathcal{A}$ be a randomized online memory-allocation algorithm that performs $k$-aggregate request fragmentation. Then, there exists a workload $W$ with volume high-water mark $M(W) \in [1, \overline{M}]$ on which $\mathcal{A}$ incurs a competitive ratio of at least $\Omega(\log_k \log \overline{M})$. 
\label{thm:ag}
\end{theorem}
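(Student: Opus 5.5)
We reduce \Cref{thm:ag} to \Cref{thm:strongextension}. The key observation is that an aggregate budget of $k\cdot Q(W)$ fragments can be converted into a per-request budget on a suitably chosen subworkload. The gap between $\log_k \log \overline{M}$ and $\log_k Q$ is explained by taking $Q \approx \log \overline{M}$. The workload will be a concatenation of many "phases." Each phase is an instance $V(M_j, Q)$ of the per-request workload, run at a geometrically increasing scale $M_j$. Each instance has at most $Q$ live requests. The total high-water mark must stay at most $\overline{M}$, and the request count must stay $O(Q)$.

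\textbf{The construction.} Set $Q = \Theta(\log \overline{M})$ (more precisely, a power chosen so that $Q$ phases fit), and consider the scales $M_j = \overline{M}^{j/Q'}$ or similar for phases $j=1,\dots,Q'$. In each phase, the adversary runs $V(M_j,Q)$ and then frees everything except a small "residue." The essential issue is the following. An aggregate-fragmentation algorithm can concentrate its fragment budget $kQ(W)$ on one phase, so within that phase it behaves like a per-request algorithm with up to $\Theta(kQ(W)/Q)$ fragments per request. If the workload at any time has $Q(W)$ live requests overall but each phase has only $Q$ live requests, the algorithm could spend all $kQ(W)$ fragments on a single phase, which is too generous. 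Therefore I would keep $Q(W)=\Theta(Q)$ by running phases sequentially, each one fully freed before the next. The difficulty then becomes that the algorithm knows fragments are freed between phases, so per phase it has $kQ$ fragments for $Q$ requests.

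This yields \emph{expected-$k$} per-request fragmentation only on average over requests, not per request. I would handle this by randomly choosing which request of a phase "matters," or more simply by arguing that \Cref{thm:strongextension}'s proof is robust to average-per-request budgets. The paper says an appendix shows such average notions are equivalent to per-request fragmentation, so I would invoke that reduction. Each phase then forces ratio $\Omega(\log_k Q)=\Omega(\log_k\log\overline{M})$ relative to $M_j\le \overline{M}$, which gives the theorem, with $M(W)=\Theta(M_j)$ in $[1,\overline{M}]$.

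\textbf{The main obstacle.} The construction above makes $Q$ arbitrary, so why would $Q$ be capped at $\log\overline{M}$? Since \Cref{thm:strongextension} requires request sizes in $[M/Q,M]$ with $M\ge Q$, one could take $Q=\overline{M}$ and get $\log_k \overline{M}$. That contradicts the upper bound, so the reduction above must be flawed. The flaw is that aggregate fragmentation lets the algorithm split a few requests into many pieces, and \Cref{thm:strongextension} genuinely needs the per-request limit. The real reduction must therefore force fragments to be spread out. My plan for this step is to cap the leverage of heavy splitting: fragmenting one request into $t$ pieces only helps if the pieces are usefully small relative to the scale. Within the $\Theta(\log \overline{M})$ size classes of $V(\overline{M},Q)$, an aggregate budget of $k$ per request gives, on average, per-class-per-request budget $k$ only when the workload contains $\Theta(\log\overline{M})$ distinct sizes doubling.

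So I would restrict attention to the sub-workload of $\log \overline{M}$ geometric size classes, with $Q=\Theta(\log \overline M)$ live requests. A Markov argument then shows that, for a random choice among them, at least half the requests use $O(k)$ fragments in expectation. Applying \Cref{thm:strongextension} in the expected-per-request form to those requests gives $\Omega(\log_k\log\overline{M})$. Making this averaging rigorous, given that the algorithm adaptively chooses which requests to split, is the step I expect to be hardest.
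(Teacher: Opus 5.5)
\textbf{There is a genuine gap.} Your final plan works on a single instance $V(\overline M, Q)$ with $Q=\Theta(\log\overline M)$ and averages over requests, and no argument of that form can succeed. On a single instance the scale is effectively known. In that setting constant-aggregate fragmentation already achieves an $O(1)$ competitive ratio: this is \Cref{thm:knownMupper} with constant $\epsilon$. Run on $V(\overline M,Q)$, that algorithm leaves the many small round-1 requests whole. It shatters the few large requests into pieces of size $\Theta(\overline M/Q)$, which keeps only $O(Q)$ fragments live.

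So your Markov statement (at least half the requests, counted by number, use $O(k)$ fragments) is true but useless. The requests left unfragmented are the small ones, while the blocking argument behind \Cref{thm:strongextension} needs the large, volume-carrying requests to be split into few pieces. Also, $V(\overline M,Q)$ has sizes in $[\overline M/Q,\overline M]$, hence only about $\log Q+1=\Theta(\log\log\overline M)$ size classes, not $\Theta(\log\overline M)$. The $\log\overline M$ in the bound counts \emph{scales}, not size classes within one instance.

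Your earlier idea of invoking the appendix reduction also fails. Aggregate fragmentation normalizes by the all-time maximum $Q(W)$, not by the current number of live requests. Freeing everything except one heavily split request $q$ therefore compares $f(q)$ with $k\,Q(W)$ and yields no contradiction.

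\textbf{The missing idea.} It is what your discarded first construction was reaching for with its ``residue.'' Run phases $r=1,\dots,L$ with $L=\Theta(\log\overline M)$ at doubling scales, using $V(2^rQ,Q)$ with $Q\le L$.

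\begin{enumerate}
\item The adversary knows $\mathcal A$'s specification, so it can compute the \emph{expected} fragment count of each request without seeing random bits. This also answers your worry about the algorithm adaptively choosing which requests to split.
\item In each phase, the adversary stops at the first request $q_r$ whose expected fragment count exceeds $2k$. It frees everything else in that phase and keeps $q_r$ live forever.
\item Suppose every phase produces such a request. Then at the end there are $L$ retained requests with more than $2kL$ expected live fragments. Meanwhile $Q(W)\le L+Q\le 2L$, contradicting $k$-aggregate fragmentation.
\item Hence some phase has no heavy request. In that phase $\mathcal A$ is an expected-$2k$ per-request algorithm, so \Cref{thm:strongextension} forces expected high-water mark $\Omega(2^rQ\log_k Q)$.
\item The volume high-water mark at that point is only $O(2^rQ)$, since earlier scales are geometrically smaller.
\end{enumerate}

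This also explains why $Q$ is capped at $\Theta(\log\overline M)$. The retained requests must outnumber $Q$ for the budget contradiction to work, and the number of phases is limited by how many doubling scales fit under $\overline M$. Fully freeing between phases, as you proposed, throws away this accumulation, which is the only source of leverage against an aggregate budget.
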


To describe the proof of Theorem \ref{thm:ag} as simply as possible, let us assume that $k = 2$. Then the workload used to prove Theorem \ref{thm:ag} can be described as follows. Let $Q = \sqrt{\log \overline{M}}$.\footnote{The proof described here differs slightly from the proof of Theorem \ref{thm:fragmento} in how the parameters are set up. Specifically, Theorem \ref{thm:fragmento} imposes both upper and lower bounds $\Mmax$ and $\Mmin$ on the volume high-water mark, and makes the (WLOG) assumption that $\Mmax/\Mmin \ge \log (\Mmax/\Mmin)$. With this slightly more complicated setup, one can replace $Q$ as described here directly with $\log (\Mmax / \Mmin)$, as in the proof of Theorem \ref{thm:fragmento}.} The workload $W$ consists of rounds $1, 2, \ldots, \log \frac{\overline{M}}{Q}$, where each round $r$ is implemented as follows:
\begin{enumerate}
    \item Begin to perform the workload $V_r := V(2^r \cdot Q, Q)$ given by Theorem \ref{thm:strongextension}.
    \item Call a request $q \in V_r$ \defn{heavily fragmented} if the \emph{expected} number of fragments that $\mathcal{A}$ breaks $q$ into is greater than $2k$.
    \begin{enumerate}
        \item If there exists a heavily fragmented request $q_r \in V_r$, then end the workload $V_r$ after $q_r$ is requested, and free all live requests in $V_r$ except for $q_r$. 
        \item If there does not exist a heavily fragmented request $q_r \in V_r$, then complete the workload $V_r$ and terminate the construction. (Do not proceed to phase $r + 1$.)
\end{enumerate}
\end{enumerate}

Notice that the description of the workload $W$ depends, in part, on the algorithm $\mathcal{A}$ being used. It does \emph{not}, however, require us to know anything about the random bits used by $\mathcal{A}$. Critically, we can determine in each phase $r$ whether there exists a heavily fragmented request $q_r \in V_r$ based only on the specification of $\mathcal{A}$, because the question of whether $q_r$ is heavily fragmented depends only the \emph{expected} number of fragments that $q_r$ is broken into, rather than the number that is actually realized.

One way for $W$ to terminate is if we reach a phase $r$ in which there is no heavily fragmented request. In this case, we know by Theorem \ref{thm:strongextension} that $\mathcal{A}$ incurs an expected memory-high water mark $\E[S(\mathcal{A}, V)]$ of at least
$$M(V_r) \log Q = 2^r Q \log Q.$$
On the other hand, the volume high-water mark $M(V)$ is at most
\begin{equation} \sum_{i = 1}^r M(V_i) \le \sum_{i = 1}^r O(2^i \cdot Q) = O(2^r Q).
\label{eq:boundingvol}
\end{equation}
It follows, in this case, that $\mathcal{A}$ incurs a competitive ratio of at least $\Omega(\log_k Q) = \Omega(\log_k \log \overline{M})$. 

The other way for $W$ to terminate is after all $L := \log \frac{\overline{M}}{Q}$ phases complete. In this case, we are left with $L$ live requests $q_1, q_2, \ldots, q_{L}$, each of which is broken into more than $2k$ expected fragments. The total expected number of fragments is therefore greater than $2Lk$. 

On the other hand, the total number of requests $Q(V)$ that are ever simultaneously live in $W$ is at most
$$L + Q \le 2L,$$
where the $L$ term considers the requests $q_1, q_2, \ldots, q_L$, and where the $Q$ term considers the live requests within a given workload $V_r$ during a given phase $r$. Because $\mathcal{A}$ guarantees at most $k$-aggregate request fragmentation, it follows that the expected number of fragments can be \emph{at most} $2Lk$, a contradiction. 

In \Cref{sec:agg}, we also show the following simple corollary, which extends the above lower bounds to the $(1+\epsilon)$-aggregate fragmentation regime for $\epsilon \in (0,1)$.

\begin{restatable}{corollary}{epslower}
Let $\epsilon \in (0, 1)$ and $\overline{M} \ge \epsilon^{-1}$. 
Let $\mathcal{A}$ be a randomized online memory-allocation algorithm that performs $(1+\epsilon)$-aggregate request fragmentation. Then, there exists a workload $W$ with volume high-water mark $M(W) \in [1, \overline{M}]$ on which $\mathcal{A}$ incurs a competitive ratio of at least $\Omega(\log \log \overline{M}+\log\epsilon^{-1})$. 
\label{cor:epslower-final}
\end{restatable}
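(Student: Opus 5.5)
Since $\Omega(\log\log\overline{M}+\log\epsilon^{-1}) = \Omega(\max(\log\log\overline{M},\log\epsilon^{-1}))$, I would prove the two terms as separate lower bounds and then take whichever is larger.

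\emph{The $\log\log\overline{M}$ term.} A $(1+\epsilon)$-aggregate algorithm with $\epsilon<1$ is in particular $2$-aggregate. Applying Theorem~\ref{thm:ag} (more precisely its general form, Theorem~\ref{thm:fragmento}) with $k=2$ gives a workload with $M(W)\in[1,\overline{M}]$ and competitive ratio $\Omega(\log_2\log\overline{M})$. When $\log\log\overline{M}$ is tiny, the $\Omega(1)$ bound is trivial, so this covers every case.

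\emph{The $\log\epsilon^{-1}$ term.} I would rerun the phase construction from the proof sketch of Theorem~\ref{thm:ag}, with different parameters. Set $Q=\Theta(\epsilon^{-1})$, which is at most $\overline{M}$ by the hypothesis $\overline{M}\ge\epsilon^{-1}$ up to constants. Call a request $q\in V_r$ \emph{heavily fragmented} if its expected number of fragments exceeds $2$. The plan has three steps.
\begin{enumerate}
\item Run phases $r=1,\dots,L$ with $L$ a large constant multiple of $\epsilon Q$; for example, choose $Q=\lceil 4/\epsilon\rceil$ and $L=4$. In phase $r$, run $V_r=V(2^rQ,Q)$ from Theorem~\ref{thm:strongextension}.
\item If $V_r$ contains a heavily fragmented request $q_r$, stop $V_r$ immediately after $q_r$ arrives and free everything in $V_r$ except $q_r$.
\item If $V_r$ has no heavily fragmented request, then $\mathcal{A}$ is effectively expected-$2$ per-request on $V_r$. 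Theorem~\ref{thm:strongextension} then forces an expected high-water mark of $\Omega(2^rQ\log Q)$. The total volume across phases is $O(2^rQ)$, exactly as in \eqref{eq:boundingvol}, so the competitive ratio is $\Omega(\log Q)=\Omega(\log\epsilon^{-1})$. The total volume also stays within $\overline{M}$ provided $2^{L}Q=O(\overline{M})$, which holds because $L$ is constant and $Q=O(\overline{M})$; rescale $Q$ by a constant if needed.
\end{enumerate}

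If instead all $L$ phases complete, then $L$ survivors remain live, each with expected number of fragments greater than $2$. Linearity of expectation gives more than $L$ expected partial fragments at a moment when at most $L+Q\le 2Q$ requests have ever been simultaneously live. The aggregate constraint allows at most $\epsilon\cdot 2Q\approx 8$ partial fragments. Taking $L=9$, so that $L>2\epsilon Q$, gives a contradiction.

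The main obstacle is the bookkeeping in that last step. The aggregate bound is stated in terms of $Q(W)$ for the whole workload, so I must ensure $L>\epsilon\,(L+Q)$ while keeping $L$ constant. Keeping $L$ constant is what keeps $2^L$ from eating into the volume budget, so that $\overline{M}\ge\epsilon^{-1}$ suffices. I also need to confirm that Theorem~\ref{thm:strongextension} applies with $k=2$ under expected-fragment bounds and $Q\ge 2$, which it does as stated.
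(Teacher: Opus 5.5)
Your proposal is correct. The $\log\log\overline{M}$ half is exactly the paper's argument: invoke Theorem~\ref{thm:fragmento} with $k=2$, since $(1+\epsilon)$-aggregate fragmentation is a stronger constraint than $2$-aggregate.

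For the $\log\epsilon^{-1}$ half, the paper (Lemma~\ref{lem:logepslower}) takes a shorter route than yours.
\begin{itemize}
\item \textbf{The paper's route.} It runs the single workload $V(M,Q)$ of Theorem~\ref{thm:main-lower} with $Q<\epsilon^{-1}$. The partial-fragment budget $\epsilon Q(W)$ is then below $1$. So every request, when it is allocated, has fewer than one expected partial fragment. That is, $\mathcal{A}$ is automatically expected-$2$ per-request on that workload, and Theorem~\ref{thm:main-lower} gives $\Omega(\log Q)=\Omega(\log\epsilon^{-1})$ directly. There is no adaptive adversary, no retained requests, and no cross-phase volume accounting.
\item \textbf{Your route.} You take $Q\approx 4/\epsilon$, which leaves a budget of a few partial fragments. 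You therefore need the heavily-fragmented detection, with a constant number of phases, to trap them. This works, but it costs some constant bookkeeping:
\begin{itemize}
\item You need $L>\epsilon(L+Q)$. With $L=9$ and the inequality $L+Q\le 2Q$, this requires $\epsilon<1/2$. That is harmless, since the bound is trivial otherwise. Your earlier choice $L=4$ is inconsistent with this.
\item In the tight case $\overline{M}\approx\epsilon^{-1}$, you must rescale $Q$ down by roughly $2^{-L}$ so that $2^LQ$ fits under $\overline{M}$. That rescaling already pushes $\epsilon Q(W)\le \epsilon L+2^{-L}$ below $1$ for small $\epsilon$. At that point no heavily fragmented request can exist, phase~$1$ runs to completion, and your construction collapses to the paper's.
\end{itemize}
\end{itemize}

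What your route buys is uniformity: the $\epsilon$-term becomes the constant-phase instance of the same adversary used in Theorem~\ref{thm:fragmento}, and it tolerates any constant partial-fragment budget. The paper's observation is that choosing $Q$ a constant factor below $\epsilon^{-1}$ makes that machinery unnecessary.
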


\paragraph{Understanding why the lower bound produces $\Omega(\log_k \log \overline{M})$.}
At this point, it is worth taking a moment to think about the structure of the above construction, in order to understand why $\Omega(\log_k \log \overline{M})$ is the bound that naturally comes out.

The first question to consider is why we must require each round $r$ to use $V_r = V(2^r \cdot Q, Q)$, as opposed to using $V(Q, Q)$ for every round. This is so that \eqref{eq:boundingvol} holds---that is, so that, if the algorithm terminates in some phase $r$, its volume high-water mark is dominated by the volume high-water mark of $V_r$. This requirement may seem minor, but it is the reason that we must cap the \emph{number} of phases to $L = O(\log \overline{M})$.

The next question to consider is why we must set $Q$ to be so small, and, in particular, why $Q \le L$. The main bottleneck to making $Q$ large is that, in each phase $r$, the number of simultaneously live requests may be up to $Q$. But, at the end of the construction, we have $L$ requests $q_1, q_2, \ldots, q_L$ that are each broken into more than $2k$ expected fragments. In order for this to lead to a contradiction (i.e., we have too many fragments), we need $Q \le L$. 

When we put these two constraints together, we get that $L = O(\log \overline{M})$ and $Q \le O(L)$. The competitive ratio that we recover is given by the performance of Theorem \ref{thm:strongextension} on a workload of the form $V(2^i \cdot Q, Q)$, which results in a competitive ratio of $\Omega(\log_k Q) = \Omega(\log_k \log \overline{M})$. This is where the bound of $\log_k \log \overline{M}$ ultimately comes from. 

\paragraph{A matching upper bound. } \emph{A priori}, the above lower bound may seem quite loose. 
Nonetheless, there is also a remarkably simple algorithm that achieves a matching upper bound.\footnote{Interestingly, although the algorithm in Section \ref{sec:aggupper} is quite simple, it was not at all obvious to the current authors until \emph{after} the $\Omega(\log_k \log \overline{M})$ lower bound was discovered. The algorithm was then obtained by trying to design an algorithm for which the lower bound would be tight.} 
In \Cref{sec:aggupper}, we prove the following.

\begin{theorem}[Special case of \Cref{thm:unknownMupper}]
For $k\geq 2$ and $\overline{M} \geq 0$, there is a (deterministic) online allocation algorithm using $k$-aggregate fragmentation that achieves competitive ratio $O(\log_k \log \overline{M})$ on every workload $W$ with volume high-water mark $M(W) \leq \overline{M}$. 
\label{thm:unknownMupper-paraphrased}
\end{theorem}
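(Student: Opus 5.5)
The plan is to give an explicit deterministic allocator that splits the request sizes into $O(\log_k\log \overline{M})$ \emph{super-classes}. Each super-class gets its own memory region of size $O(\M(W))$, so the memory high-water mark is $O(\M(W)\cdot \log_k\log\overline{M})$. The fragmentation budget is then spent so that each super-class is served with only $O(1)$ overhead. Concretely, the algorithm keeps a doubling estimate $\Mguess$ of the current volume high-water mark, so it never needs $\overline{M}$. A request of size $s$ is placed in super-class $j$ when
\[
\Mguess/s \in \bigl[2^{k^{j}},\, 2^{k^{j+1}}\bigr),
\]
and requests with $s\ge \Mguess/2$ form super-class $0$. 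Since $s\ge 1$, only $j\le \log_k\log \Mguess$ can occur. This gives the $\log_k\log$ count of super-classes, and it is exactly the parameterisation suggested by the lower-bound discussion after Theorem~\ref{thm:ag}.

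Within super-class $j$ the allocator runs a ``first fit with splitting'' rule inside its region of size $c\Mguess$: fill free holes from the left, splitting the request across them. It is forbidden to use holes smaller than $s/2^{k^{j+1}-k^{j}}$, i.e.\ smaller than the smallest size in the class. The steps, in order, are as follows.
\begin{enumerate}
\item Show that the region never overflows. The volume of class $j$ is at most $\Mguess$, and every hole that is skipped has size below the class minimum. I will bound the wasted space by charging each unusable hole to an adjacent live fragment, which gives total usage at most $O(\Mguess)$.
\item Handle the doubling of $\Mguess$. When $\Mguess$ doubles, fresh regions are opened and old regions are never reused for new requests. The old regions form a geometric series and add only a constant factor.
\item Bound the total number of live fragments by $k\cdot Q(W)$. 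This is the step that uses the choice of class widths.
\end{enumerate}

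The main obstacle is step 3. A request of size $s$ in class $j$ can be cut into up to roughly $s/s_{\min}$ pieces, which is far more than $k$. The aggregate constraint must therefore come from amortisation, not from a per-request bound (a per-request bound would only recover Claim~\ref{clm:upper-per}'s $\log_k M$).

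My first attempt for step 3 is a potential argument on the number of holes in each region. Every free of a fragment creates at most one new hole. A request split into $h$ pieces consumes $h-1$ holes entirely. Hence the number of partial fragments ever live is at most the number of holes created, which is at most the number of fragments freed. The circularity in this bound has to be broken using the class widths. Within one super-class the sizes span a ratio of only $2^{k^{j+1}-k^j}$, while the ratio $\Mguess/s_{\min}$ grows by a factor of $k$ in the exponent from one class to the next. I expect that a request in class $j$ can only be forced into many pieces when many small class-$j$ requests have been freed. The number of such frees is bounded by the number of live requests that the class supported, so the extra fragments in class $j$ can be charged to requests. The aim is a charge of about $(k-1)/O(1)$ extra fragments per live request, summed over the $O(1)$ classes that a given request interacts with.

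If this charging fails, the fallback is to cap the number of partial fragments in each region at $(k-1)Q_j$. Here $Q_j$ is the running maximum of class-$j$ live requests. When the cap is hit, the request is instead allocated contiguously at the frontier of a reserve area of size $O(\Mguess)$. The remaining task would then be to show that this reserve is rarely used, using the bounded size ratio within a super-class.
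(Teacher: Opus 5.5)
\textbf{There is a genuine gap at your step 3, and for the allocation rule you propose the step is false, not just unproven.} Splitting a request greedily across whatever holes exist lets an adversary create fragmentation that outlives the requests that made the holes. The construction is as follows.
\begin{enumerate}
\item Request and free one request of size $M$, so that $\Mguess=M$ from then on.
\item Fix a class $j$, a size $s_0$ near its bottom, and an integer $m>2k$ such that $m s_0$ is still in class $j$ and $2m^2 s_0\le M$.
\item Repeat $m$ times: issue $2m$ requests of size $s_0$ (they land contiguously, since earlier repetitions leave no holes); free the odd-indexed ones; request $A$ of size $m s_0$, which your rule splits into exactly the $m$ holes; free the remaining small requests; request $B$ of size $m s_0$, which again gets $m$ fragments.
\end{enumerate}
Each repetition leaves two live requests holding $2m$ fragments. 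At the end, $2m^2$ fragments are live while $Q(W)\le 4m$ and $M(W)=M$. The fragment-to-request ratio is therefore at least $m/2>k$, and $m$ can be taken to grow with $M$.

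This is exactly the circularity you noticed. The bound ``holes created $\le$ fragments freed'' counts frees over all time, but the constraint compares against the maximum number of \emph{simultaneously} live requests. The narrow size ratio within a class does nothing to break this. Your fallback fails for the same reason. The construction saturates the cap while using only a small fraction of the volume, and the reserve must then serve an arbitrary unfragmented class-$j$ workload. A bump-pointer reserve is exhausted by repeated request/free pairs. A reserve that reuses space without splitting faces the unfragmented lower bound, $\Omega(k^{j+1}-k^j)$ for that size range.

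\textbf{The missing idea} is to tie the fragment size to $M/Q$ rather than to the hole structure. Then the fragment bound holds deterministically and needs no amortisation. In the paper's known-$M$ algorithm (Theorem~\ref{thm:knownMupper}), only requests larger than $\epsilon^{-1}M/\Qguess(t)$ are split, and they are cut into equal pieces of size at most $\epsilon^{-1}M/\Qguess(t)$. A live request of size $s$ then has at most $1+\epsilon sQ/M$ fragments. Since live volume is at most $M$, there are at most $\epsilon Q$ partial fragments.

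Space is controlled by first fit over segments of size $4M$. At most $(1+\epsilon)\Qguess(t)$ fragments are live, so gaps shorter than $M/\Qguess(t)$ waste only $O(M)$, and all small fragments fit in the first segment. Each of the $O(1+\log_k\epsilon^{-1})$ power-of-$k$ types between $M/\Qguess(t)$ and $\epsilon^{-1}M/\Qguess(t)$ is confined to one further segment by the effective-size argument.

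For unknown $M$, the paper runs this algorithm in fresh memory during doubling phase $j$, with $\epsilon_j=1/j^2$. Phase $j$ costs $O(M_0 2^j(1+\log_k j))$, and the partial fragments total $\sum_j Q_j/j^2=O(Q)$. So the $\log_k\log$ factor comes from spreading a summable fragmentation budget over $\Theta(\log M)$ doubling phases, not from partitioning sizes into super-classes. With $M$ known and $k=\Theta(1)$, the ratio is $O(1)$. The lower bound after Theorem~\ref{thm:ag} reflects the same structure: its $\log\log$ comes from $Q\lesssim L=\Theta(\log \overline{M})$ doubling rounds, not from size classes.
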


The entire algorithm (again, for the case of $k = 2$) is as follows. At all times $t$, keep track of both the volume high-water mark $\Mguess(t)$ and the request high-water mark $\Qguess(t)$ so far. Then, when processing a request of size $s$, break it into 
$$\left\lceil \frac{s \Qguess(t)}{c \Mguess(t) (1 + \log^2 \Mguess(t))}\right\rceil$$
fragments of size at most $c \Mguess(t) (1 + \log^2 \Mguess(t)) / \Qguess(t)$ each, where $c$ is some appropriately chosen positive constant. Finally, allocate each of these fragments using a first-fit policy, placing the fragment in the free region of memory large enough to host the fragment.\footnote{The reader may notice that the above algorithm description is in a few (it turns out, purely aesthetic) ways slightly different from the one in Section \ref{sec:aggupper}. These differences, and specifically the distinction between having implicit phases (as in this algorithm) versus explicit phases (as in Section \ref{sec:aggupper}) is only to streamline the exposition in Section \ref{sec:aggupper} (allowing us, in particular, to prove Theorem \ref{thm:unknownMupper} using Theorem \ref{thm:knownMupper} directly as a subroutine).}

Although the description of the above algorithm is quite simple, the analysis is perhaps surprisingly subtle, and is deferred to Section \ref{sec:aggupper}. 
In that section, we also generalize this algorithm to the $(1+\epsilon)$-aggregate fragmentation regime for $\epsilon\in(0,1)$, resulting the following upper bound matching our lower bound.

\begin{theorem}[Special case of \Cref{thm:unknownMuppereps}]
For $\epsilon \in (0, 1)$ and $\overline{M} \geq 0$, there is a (deterministic) online allocation algorithm $\mathcal{A}$ using $(1 + \epsilon)$-aggregate fragmentation that achieves competitive ratio $O(\log \log \overline{M} + \log \epsilon^{-1})$ on every workload $W$ such that $M(W) \leq \overline{M}$.
\label{thm:unknownMuppereps-paraphrased}
\end{theorem}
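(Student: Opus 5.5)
The plan is to combine size-class segregation with fragmentation of large requests, first with $\M$ and $\Q$ known and then with doubling guesses. Fragmenting at a threshold $T$ leaves only $O(\log(TQ/M))$ expensive size classes. Choosing $T=\Theta(M\log^2 M/(\epsilon Q))$ makes this $O(\log\log M+\log\epsilon^{-1})$, while the partial-fragment budget $\epsilon Q$ is still respected.

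\emph{Known-parameter core.} Suppose first that $\M=\M(W)$ and $\Q=\Q(W)$ are known, and fix a threshold $T=\lceil c\,\M(1+\log^2\M)/(\epsilon\Q)\rceil$. Requests of size $>T$ are broken into fragments of size exactly $T$, plus one remainder fragment in $(0,T]$. This fragmentation step does two things.
\begin{enumerate}
\item It bounds the partial fragments. Since at most $\M/T$ large requests are live, the number of live partial fragments is at most $2\M/T\le 2\epsilon\Q/(c\log^2\M)$.
\item It reduces every piece to a size at most $T$. Each such piece, whether a request or a fragment, falls into a dyadic class $[2^j,2^{j+1})$ with $2^j\le T$.
\end{enumerate}
Each class $j$ gets its own dedicated region, cut into aligned slots of size $2^{j+1}$. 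Nothing is ever lost to external fragmentation inside a class region, because any live piece fits any free slot.

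The number of live class-$j$ pieces is at most $\min(\Q+2\M/T,\ \M/2^j)$, so the region for class $j$ needs size $O(\min(\Q 2^{j},\M))$. I split the classes at $2^j=\M/\Q$.
\begin{enumerate}
\item For $2^j\le \M/\Q$ the regions form a geometric sum, totalling $O(\M)$.
\item For $\M/\Q<2^j\le T$ each region costs $O(\M)$. There are $\log(T\Q/\M)=O(\log\log\M+\log\epsilon^{-1})$ such classes.
\end{enumerate}
The high-water mark is therefore $O(\M(\log\log\M+\log\epsilon^{-1}))$, which is the claimed ratio.

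\emph{Removing knowledge of $\M$ and $\Q$.} As in the $k=2$ description, I track the running maxima $\Mguess(t)\le\M$ and $\Qguess(t)\le\Q$ and use them in place of $\M,\Q$. I run phases, starting a new phase whenever $\Mguess$ doubles. Phase $e$ has $\Mguess\approx 2^e$, uses threshold $T_e=c\,2^e(1+e^2)/(\epsilon\Qguess)$, and is given a fresh memory area of size $O(2^e(\log(e^2)+\log\epsilon^{-1}))$ laid out with the class regions above. Fragments created in phase $e$ stay in phase $e$'s area until freed.

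\emph{Fragment and memory accounting.} Pieces created in phase $e$ have total volume at most $2^{e+1}$. So the partial fragments from phase $e$ number at most $2^{e+1}/T_e\le 2\epsilon\Qguess/(c(1+e^2))\le 2\epsilon\Q/(c(1+e^2))$. Summing $\sum_e 1/(1+e^2)=O(1)$ and taking $c$ large gives at most $\epsilon\Q$ live partial fragments overall; this is exactly why the $\log^2$ factor is there. Memory is geometric over phases, giving $O(\M(\log\log\M+\log\epsilon^{-1}))$. Since $\M\le\overline{M}$, the result follows.

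\emph{Main obstacle.} The step I expect to be hardest is that $\Qguess$ can increase inside a phase. The class boundary $\M/\Q$ and the threshold $T_e$ then move, while the class regions were laid out under the old $\Qguess$. I would handle this by sizing class $j$'s region as $\min(\Qguess 2^{j+1},2^{e+1})$ and growing it lazily. Alternatively, I would also open a new sub-phase whenever $\Qguess$ doubles, and argue that the small-class regions over those sub-phases still sum geometrically within a phase, because each is capped by $2^{e+1}$ and scales with $\Qguess$. If that bookkeeping gets messy, the fallback is to make the per-class slot counts depend only on $2^e/2^j$. In that case, $\Qguess$ enters only through the threshold, and shrinking the threshold only creates more, smaller fragments whose count is still controlled by the per-phase bound above.
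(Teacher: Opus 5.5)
\textbf{There is a genuine gap, at exactly the step you flagged.} With dedicated per-class regions, your per-phase bound $O(2^e(\log e+\log\epsilon^{-1}))$ fails once $\Qguess$ grows inside a phase. Your count $\sum_j\min(\Qguess 2^{j+1},2^{e+1})$ over classes with $2^j\le T_e$ uses the \emph{current} $\Qguess$ and $T_e$. But every class that was below the threshold at some earlier time (when $\Qguess$ was small and $T_e$ large) already had its region laid out, and there can be $\Theta(\log\Q)$ such classes.

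Here is a concrete counterexample, with $\epsilon$ constant.
\begin{enumerate}
\item Request one item of size $2^e$, which enters phase $e$ with $\Qguess=1$, and then free it.
\item For $i=1,\dots,e$, request $2^i$ items of size $2^{e-i}$, then free them all.
\end{enumerate}
The live volume never exceeds $2^e$, so the phase never changes. During step $i$ we have $\Qguess\le 2^i$, hence $T_e\ge c2^{e-i}(1+e^2)/\epsilon>2^{e-i}$, and nothing is fragmented. So class $e-i$ has $2^i$ live pieces at once, and its dedicated region needs $2^{e+1}$ space. The phase area is therefore $\Omega(e2^e)=\Omega(\M\log\M)$.

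A rule that reclaims emptied regions does not rescue this. Halve the sizes and keep the highest-addressed piece of each step alive; a deterministic layout lets the adversary identify that piece.

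Your remedies do not close the gap either.
\begin{enumerate}
\item Sub-phases: the small-class total $\sum_{2^j\le 2^e/\Qguess}\Qguess 2^{j+1}$ is $\Theta(2^e)$ for \emph{every} value of $\Qguess$, so it does not sum geometrically. Fresh areas for $\Theta(\log\Q)$ sub-phases then cost $\Theta(2^e\log\Q)$, and sharing one area brings back the problem above.
\item The fallback of $\Theta(2^e/2^j)$ slots per class gives every class, including those below $\M/\Q$, a $\Theta(2^e)$ region. That already costs $\Theta(e2^e)$ when $\Q=1$, with one live request at a time of sizes $1,2,\dots,2^{e-1}$.
\end{enumerate}

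\textbf{What is missing} is a way to share memory among size classes within a phase, together with an analysis that tolerates the threshold shrinking over time. Your phase structure and partial-fragment accounting match the paper's: phase $j$ uses parameter $\epsilon/(2j^2)$, the analogue of your $\epsilon/(c(1+e^2))$. Inside each phase, however, the paper runs the known-$\M$ algorithm of \Cref{thm:knownMupper}, which places all fragments by first-fit in the phase's area.

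That analysis cuts the area into segments of four times the phase's volume bound. Each fragment gets a \emph{type} computed from $\Qguess$ at its insertion time. The invariant is that type-$j$ fragments lie in the first $j+1$ segments. Its proof uses that $\Qguess$ only increases, so every fragment already in segment $i+1$ has effective size at least $|f|$ for a newly inserted type-$i$ fragment $f$. Meanwhile, effective sizes of live fragments total at most twice the volume bound. Since there are only $O(1+\log(j^2/\epsilon))$ types, each phase costs $O(2^j(1+\log j+\log\epsilon^{-1}))$, however $\Qguess$ evolves.
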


\section{A Tight Lower Bound for Expected-$k$ Per-Request Fragmentation}\label{sec:lower-per-request}

In this section, we prove the following theorem:

\begin{theorem}\label{thm:main-lower}
    For any $\advmem$ and $\advreq \leq \advmem$, there exists a randomized workload $\workload$ with volume high-water mark $\M(\workload) = \advmem$ and $\Q(\workload) = \advreq$, such that for any (possibly randomized) allocation algorithm $\alg$ with $k$-expected per-request fragmentation, the competitive ratio $\comp(\alg,\workload)$ of $\alg$ on $\workload$ satisfies
    \[
        \comp(\alg,\workload) \ge \Omega \left(\log_{\max\{k,2\}}\advreq\right).
    \]
\end{theorem}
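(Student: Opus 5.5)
We prove Theorem \ref{thm:main-lower} by generalizing the warmup argument behind Theorem \ref{thm:classical}. Write $K = \max\{k,2\}$ and fix a base $b = K^{C}$ for a large constant $C$. The workload $\workload$ proceeds in rounds $r = 0,1,\ldots,L$ with $L = \Theta(\log_b \advreq) = \Theta(\log_K \advreq)$. Round $r$ uses requests of size $s_r = (\advmem/\advreq)\, b^{r}$, so the sizes range over $[\advmem/\advreq,\advmem]$. Each round first frees, independently in each older size class, a uniformly random small constant fraction $p$ of the surviving requests, and then issues $\advmem/s_r$ new requests of size $s_r$.

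The parameter $p$ is chosen to preserve the volume vs.~count dichotomy. Surviving volume from round $r-i$ should decay geometrically in $i$, so that $\M(\workload)=\Theta(\advmem)$. At the same time, the number of survivors from round $r-i$ should exceed the number of fresh round-$r$ requests by a factor of roughly $b^i(1-p)^i$. We then rescale so that $\M(\workload)=\advmem$ and $\Q(\workload)=\advreq$ exactly, padding if necessary. The count is dominated by the earliest round, which has about $\advreq$ live requests.

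\textbf{Reductions.} Before the potential argument, I would invoke the reductions deferred to \Cref{app:reductions} to put $\alg$ in a canonical form. First, Markov's inequality converts expected-$k$ per-request fragmentation into the following guarantee: a constant fraction of the volume of each round is placed in fragments of size at least $s_r/(4k)$. Second, fragments are rounded to power-of-$b$ aligned sizes and placed into aligned blocks, at a constant-factor loss. After these steps, each round-$r$ request contributes at least one "large" fragment of size $\ge s_r/(4k) \ge s_{r-1}\cdot b/(4k)$. Because $b \gg k$, this fragment still occupies a round-$r$ aligned block up to a factor of $O(k)$ that is absorbed into the base. Fragments smaller than $s_r/(4k)$ are simply ignored; they cost at most half the volume per round.

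\textbf{Potential argument.} With blocks $\mathcal{B}_r$ of size $s_r/(4k)$, I define $I_B$, $\psi^*$, $\psi$, and the rounding differences $\delta(B)$ as in the warmup. The branching factor is now $b$, and the only volume counted is that of large fragments. The total counted volume is $\Omega(\advmem L)$. It then suffices to show
\[
\E\Big[\sum_r\sum_{B\in\mathcal{B}_r}\delta(B)\Big]\le (1-\Omega(1))\cdot(\text{counted volume}),
\]
which forces $\Omega(L)$ top-level blocks of size $\Theta(\advmem)$ to be touched. This gives a high-water mark of $\Omega(\advmem\log_K\advreq)$.

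\textbf{Charging.} For the charging step, a block $B$ at level $r$ with $\delta(B)>0$ has received at least $c|B|$ of volume. Hence some earlier round $r-i$ placed $\Omega((b(1-p)/2)^i)$ large fragments into $B$. Here is the new subtlety: since fragments of a single request may land in the same block, these fragments may come from only $\Omega((b(1-p)/2)^i/k)$ distinct requests. All of them must be freed before the round-$r$ allocation into $B$. Negatively associated Chernoff bounds give probability $2^{-\Omega(b^i/(k\,2^{O(i)}))}$. Counting the candidate blocks exactly as in \eqref{eq:prec}, and taking $C$ and $c$ large, makes the charged sum summable and at most half the total.

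\textbf{Main obstacle.} I expect the main obstacle to be the interplay between fragments and alignment. A request's fragments can straddle block boundaries, and an adaptive algorithm could place small fragments to "pre-occupy" blocks without being charged. I would first handle this by counting only large fragments and assuming, via \Cref{app:reductions}, that large fragments are aligned. I would then verify that the Chernoff step survives the division by $k$, which is precisely why the base must be $K^{\Theta(1)}$, yielding $\log_K$ rather than $\log$.
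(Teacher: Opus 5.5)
Your plan is essentially the paper's proof. It has the same geometric-size workload with random constant-fraction frees, and the same capped potential $\psi(B)=\min(\phi(B),c|B|)$ whose rounding differences $\delta(B)\le|B|$ are charged to a (round, gap) pair. Those charges are controlled by the same negatively associated Chernoff bound, after dividing the fragment count by $k$ to get distinct requests. The paper differs only in using base $k$ itself (already sufficient, since $dk^{i-1}0.9^i/1.1^i$ grows in $i$ for $k\ge 2$), and in reducing via \Cref{lem:tricky,lem:equal-size,lem:aligned} to exactly $k$ equal-size aligned fragments instead of keeping only large ones.

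If you keep your variant, two details need repair. With up to $4k$ fragments, the pieces below $s_r/(4k)$ can carry almost all of $s_r$, so use the threshold $s_r/(8k)$. Each large fragment must then be carved into all the aligned blocks of size $s_r/(16k)$ it fully contains, which cover at least a quarter of its volume. Treating it as a single block (your ``factor $O(k)$ absorbed into the base'') or rounding to a power of $b$ can lose a factor $\Theta(k)$ or $b$ of counted volume, and that loss divides the final bound rather than being absorbed.
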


We immediately get the following corollaries.
First, setting $\Q=\M$ yields the following lower bound for the competitive ratio in terms of $\M$. 

\begin{corollary}\label{cor:old-main-lower}
    Given $M > 0$ and a (possibly randomized) allocation algorithm $\alg{}$ with $k$-expected per-request fragmentation, there exists a workload $W$ with volume high-water mark $M(W) = M$ such that
    \[ \comp(\alg,W) \geq \Omega\left(\log_{\max\{k,2\}}\M\right), \]
\end{corollary}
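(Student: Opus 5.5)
This follows from Theorem~\ref{thm:main-lower} by choosing $\advreq = \advmem = M$. The theorem then supplies a randomized workload $\workload$ with $\M(\workload) = M$ and $\Q(\workload) = M$. It guarantees that every (possibly randomized) allocator $\alg$ with $k$-expected per-request fragmentation satisfies $\comp(\alg,\workload) \ge \Omega(\log_{\max\{k,2\}} M)$. The constraint $\advreq \le \advmem$ holds with equality, so the choice is admissible.

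The only thing to check is that the workload can be taken deterministic, as the corollary asserts the existence of a single workload $W$. Theorem~\ref{thm:main-lower} produces a \emph{randomized} workload, i.e.\ a distribution over workloads, and bounds the expected competitive ratio over both the workload's and the algorithm's randomness. Every workload in the support has the same volume high-water mark $M$; this holds by construction, and in any case the theorem asserts $\M(\workload) = M$. Hence the bound is an average over deterministic workloads $W$ in the support of $\E_{\alg}[\hwm(\alg,W)]/M$. By averaging, some fixed $W$ in the support attains at least this average, so $\comp(\alg,W) \ge \Omega(\log_{\max\{k,2\}} M)$ with $M(W)=M$.

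Since $\alg$ is fixed before $W$ is chosen, this averaging step is legitimate and is the only point needing care. Notably, no Yao-type minimax argument is required, because the order of quantifiers (the algorithm first, then the workload) already matches the statement. If the support of the randomized workload contained workloads with different $\M$, the averaging would have to be done on $\hwm$ with $M$ fixed. Theorem~\ref{thm:main-lower} states $\M(\workload)=M$ exactly, so this issue does not arise.
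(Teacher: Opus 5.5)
Your proposal is correct and is exactly the paper's argument: the paper obtains the corollary immediately by setting $Q=M$ in Theorem~\ref{thm:main-lower}. Your extra averaging step, which passes from the randomized workload to a single fixed $W$ in its support, is also valid: the workload's coins are oblivious to and independent of the algorithm's, and the volume high-water mark is deterministic because each request class has a single size. That step simply makes explicit a point the paper leaves implicit.
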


It should be noted that \Cref{cor:old-main-lower} is tight, as can be seen with the following simple argument.

\begin{theorem}\label{clm:upper-per}
For any $k \ge 1$, there exists a deterministic allocation algorithm using $k$-per-request fragmentation that, on any workload $W$, achieves competitive ratio 
 \[ \comp(\alg,W) = O\left(\log_{\max\{k,2\}}\M(W)\right). \]
\end{theorem}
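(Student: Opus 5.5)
The plan is to round request sizes to powers of two, run a standard size-class (Robson-style) allocator that treats all requests within a band of $\Theta(\log k)$ consecutive size classes uniformly, and use the $k$ fragments to cut each request into equal-sized pieces aligned to a common granularity. The result should be roughly $\log_k \M$ bands, each costing $O(\M)$ memory.

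\textbf{Small $k$.} If $k < 4$, the bound $O(\log_{\max\{k,2\}} \M) = O(\log \M)$ is just the classical bound. We use Robson's $O(\log \M)$ algorithm \cite{Robson77} (for instance first fit) with no fragmentation at all.

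\textbf{Large $k$: the allocator.} Assume $k \ge 4$ and set $b = \lfloor \log_2 k \rfloor - 1 \ge 1$, so that $2^{b+1} \le k$.

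1. Round each request of size $s$ up to $2^{\lceil \log s\rceil}$. This at most doubles volume, so $\M$ changes by a factor of at most $2$.
2. Group the size classes $2^j$ into bands $J_i = \{ib, \dots, (i+1)b - 1\}$.
3. A request of rounded size $2^j$ with $j \in J_i$ is split into $2^{j - ib} \le 2^b \le k$ fragments, each of size exactly $2^{ib}$. Each fragment is a unit of granularity $2^{ib}$.

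After this, within band $i$ every fragment has the same size $2^{ib}$.

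\textbf{Large $k$: placement and the static-then-dynamic issue.} If $\M$ were known, we would reserve a region of size $2\M$ for each band. Each band's region is packed with equal-size slots, and equal-size slots never cause fragmentation: band $i$ always has at most $2\M/2^{ib}$ live fragments, so they fit. The high-water mark would then be $O(\M)$ times the number of bands that actually occur, which is $O(\log_k \M)$. This uses the fact that request sizes are at most $\M$, so only bands with $ib \le \log \M$ appear.

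Since $\M$ is unknown, the plan is a doubling scheme. Maintain the current volume high-water mark estimate $\Mguess$. Whenever $\Mguess$ doubles, start a fresh epoch whose band regions each have size $2\Mguess$, laid out after all previous epochs' regions. Old requests stay in place, and each new fragment goes into the latest epoch's region for its band.

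\textbf{Analysis.} The memory used is a geometric sum over epochs, $\sum_e O(\log_k \Mguess_e)\cdot \Mguess_e = O(\M \log_k \M)$. Each epoch's band region never overflows, because in epoch $e$ the live volume in band $i$ is at most $2\Mguess_e$, counting all live requests, not only those placed in epoch $e$.

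A request is only ever split into at most $k$ fragments, so the scheme uses $k$-per-request fragmentation.

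\textbf{Main obstacle.} The only delicate point is bookkeeping. The region sizes must dominate the live volume per band, and the layout must avoid reserving regions for bands with no requests. The second issue is handled by allocating a band's region lazily, at the first request in that band in the epoch. The count of occurring bands is then at most $\lceil \log \Mguess_e / b\rceil + 1 = O(\log_k \Mguess_e)$, which gives the bound.
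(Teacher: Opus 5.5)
Your argument is correct, but it takes a different route from the paper's. For $k \ge 2$, the paper splits each request into at most $k$ fragments whose sizes are within a constant factor of a single power of $k$. It rounds each fragment up to that power, so only $O(1+\log_k M(W))$ distinct sizes ever occur. It then invokes the result of \cite{randomized_LNO96} that first fit is $O(1+J)$-competitive on workloads with $J$ distinct sizes. Because first fit never needs to know $M(W)$, the paper needs no doubling, and its proof is two lines.

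You instead use the fragments so that every piece in a band occupies an identical slot of size $2^{ib}$. Each band then becomes a segregated pool with no external fragmentation. You handle the unknown $M(W)$ with epochs whose costs $O(\widetilde{M}_e(1+\log_k \widetilde{M}_e))$ sum geometrically. Your route is fully self-contained: it replaces the nontrivial first-fit theorem, which is the real engine of the paper's proof, with a pigeonhole count. The price is the epoch bookkeeping.

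To make that bookkeeping airtight, do three things. First, measure $\widetilde{M}$ on the rounded workload, or give each band region size $4\widetilde{M}_e$. Second, place the request that triggers a new epoch into the new epoch's regions. Third, note that a size-$s$ request is really cut into $\lceil s/2^{ib}\rceil \le 2^{j-ib}$ pieces, one per slot. With these in place, let $V_i < 2\widetilde{M}_e$ be the rounded live band-$i$ volume. The number of live band-$i$ pieces in an epoch-$e$ region is then at most the integer $V_i/2^{ib}$, which is at most $\lfloor 2\widetilde{M}_e/2^{ib}\rfloor$, exactly as you claim.

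Lazy allocation of band regions is harmless but unnecessary. Reserving all bands $i \le \log_2(2\widetilde{M}_e)/b$ up front costs the same asymptotically.
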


\begin{proof}
    For $k = 1$, this follows from the classical $O(\log M(W))$ upper bound for the first-fit algorithm \cite{Robson71,Robson77}. For $k \ge 2$, we can use $k$-per-request fragmentation to break each request into fragments whose sizes are within an $O(1)$ factor of a power of $k$. Up to constant factors, we may therefore assume that each fragment has a size exactly equal to a power of $k$. This reduces the number of distinct allocation sizes to at most $\log_k M(W)$. The claimed result then follows from the fact that, on any workload with $J$ distinct object sizes, the first-fit allocation algorithm achieves competitive ratio $O(1 + J)$ \cite{randomized_LNO96}.
\end{proof}

The second corollary of Theorem \ref{thm:main-lower} is for the case where $k = 1$. Here, we prove a conjecture of \cite{randomized_LNO96}, showing that randomized allocation algorithms cannot (in an asymptotic sense) beat deterministic ones:

\begin{corollary}\label{cor:main-no-fragments}
   Let $M > 0$.  If $\alg{}$ is a (possibly randomized) algorithm which does not fragment allocations, then there exist workloads $W$ with volume high-water mark $M(W) = M$ such that 
    \[ \comp(\alg,\M) \geq \Omega\left(\log{\M}\right). \]
\end{corollary}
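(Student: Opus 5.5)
This follows by specializing Theorem~\ref{thm:main-lower}. The plan is to note that a non-fragmenting algorithm is a special case of an algorithm with $k$-expected per-request fragmentation, and then to choose $\advreq = \advmem = M$.

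First, the reduction. An algorithm $\alg$ that never fragments allocations breaks every request into exactly one fragment, which is a complete fragment with no partial fragments. The expected number of fragments per request is therefore $1$, so $\alg$ uses $k$-expected per-request fragmentation for $k = 1$. It also does so for any $k \ge 1$, in particular $k = 2$. We apply Theorem~\ref{thm:main-lower} with $k = 2$, so that $\max\{k,2\} = 2$ and the base of the logarithm is a constant. This avoids any degenerate base-$1$ logarithm.

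Second, the parameters. Theorem~\ref{thm:main-lower} requires $\advreq \le \advmem$, so we take $\advmem = \advreq = M$. It then yields a workload $\workload$ with $\M(\workload) = M$ and $\Q(\workload) = M$ such that
\[
\comp(\alg,\workload) \ge \Omega\left(\log_{2} M\right) = \Omega(\log M).
\]
This is exactly the claimed bound, with $\comp(\alg,\M)$ in the statement read as $\comp(\alg,\workload)$ for this $\workload$. Since $M$ is an arbitrary positive value, the corollary follows.

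Theorem~\ref{thm:main-lower} asserts the existence of a \emph{randomized} workload, and the corollary is read in the same sense: the bound holds for the competitive ratio, in expectation, on that workload. This is the same convention used for Corollary~\ref{cor:old-main-lower}, which is the identical specialization for general $k$, so no extra argument is needed. If a deterministic workload were demanded, one would fix the randomness of the workload by averaging, since some realization achieves at least the expected ratio against the fixed algorithm $\alg$.

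There is no substantive obstacle. The only points needing care are checking that "no fragmentation" satisfies the expected-$k$ per-request constraint, and using $\max\{k,2\}$ to handle the base of the logarithm.
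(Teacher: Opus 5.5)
Your proposal is correct and is the paper's argument: a non-fragmenting algorithm trivially meets the expected per-request fragmentation constraint, and Theorem~\ref{thm:main-lower} with $\advreq = \advmem = M$ gives the $\Omega(\log M)$ bound. The paper invokes the case $k=1$ where you use $k=2$; this makes no difference, since the bound has base $\max\{k,2\}$ and the reduction in Appendix~\ref{app:reductions} rounds $k$ up to $2$ anyway.
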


\paragraph{Reducing Theorem \ref{thm:main-lower} to Theorem \ref{thm:main-lower-reduced}.} The first step of the proof, which we defer to \Cref{app:reductions} is the following. We show that to prove \Cref{thm:main-lower}, it is sufficient to prove an alternative theorem that applies to algorithms that use what we call \emph{$\tau$-\trickyalignedevenfrag{$({=}k)$}}.

In what follows, and throughout the rest of the section, we will assume without loss of generality that $k$ satisfies $k \ge 2$ is a power of $2$ and that $M, Q$ are powers of $k$.\footnote{Note that rounding $k$ up to a power of $2$ greater than $1$, and rounding $M$ and $Q$ down to powers of $k$ only makes the lower bound harder to prove, because it does not change the asymptotic lower bound that we wish to achieve.} Moreover, we will only discuss workloads in which the requests are also power-of-two sizes. 

With this in mind, we define a \defn{\alignedevenfrag{$({=}k)$}} algorithm to be one that fragments each request $q$ into \emph{exactly} $k$ fragments, each of which is allocated a sequence of slots of the form $(j \cdot |q| / k, (j + 1) |q| / k]$ for some non-negative integer $j$. 
On top of this, a \defn{$\tau$-\tricky}algorithm is allowed to \emph{ignore} each request (not necessarily independently!) with probability (up to) $\tau$. Requests ignored this way do not count against its memory high-water mark.

\Cref{app:reductions} shows that, to prove Theorem \ref{thm:main-lower}, it suffices to prove the following.

\begin{restatable}{theorem}{mainlowerreduced}\label{thm:main-lower-reduced}
    Let $k \ge 2$ be a power of two, and let $M$ and $Q$ be powers of $k$. If $\alg{}$ is a $1/4$-\trickyalignedevenfrag{$({=}k)$} allocation, then there exists a workload $\workload$ (whose requests all have power-of-two sizes) with $\M(\workload) = O(\advmem)$ and $\Q(\workload) = \advreq$ such that 
    \[ \comp(\alg,\workload) \geq \Omega\left(\log_{k}\advreq\right). \] 
\end{restatable}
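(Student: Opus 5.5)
The plan is to generalize the warmup proof of Theorem \ref{thm:classical} from base $2$ to base $k$, with the workload built in $\log_k Q$ rounds. Concretely, the workload $\workload$ has rounds $r = 0,1,\dots,L$ with $L = \log_k Q$. Round $r$ first frees a uniformly random small constant fraction $p$ of the surviving requests from each earlier round, and then issues $Q/k^r$ requests of size $(M/Q)\cdot k^r$. Then $\Q(\workload) = O(Q)$ and $\M(\workload) = O(M)$, since volumes from earlier rounds decay geometrically. The deletion rate $p$ must reproduce the volume-versus-count dichotomy in base $k$. After $i$ further rounds, a round-$k'$ cohort should retain volume $\approx M(1-p)^i$, while its count, relative to the current request size, is $\approx k^i(1-p)^i$, which is huge.

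Because $\alg$ is \alignedevenfrag{$({=}k)$}, a request of size $s$ becomes exactly $k$ aligned fragments of size $s/k$. I would therefore work with aligned blocks $\mathcal{B}_r$ of size $(M/Q)k^{r-1}$, which are exactly the fragment size at round $r$. Each block has $k$ children. I define $I_B$ as the number of round-$r$ fragments ever placed in $B$, which is $0$ or $1$, and use the capped potential
\[\psi(B) = \min\Bigl(c|B|, \sum_{B' \text{ child of } B}\psi(B') + |B| I_B\Bigr),\]
with rounding differences $\delta(B)$ as before.

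The total allocated volume is $\Theta(M L)$. Ignored requests remove at most a $1/4$ fraction in expectation, so $\Omega(ML)$ of it is actually placed. As in \eqref{eq:foo1}--\eqref{eq:differencesbound}, it then suffices to show that $\E[\sum_r\sum_{B\in\mathcal{B}_r}\delta(B)]\le(1-\Omega(1))\cdot(\text{placed volume})$. Given this, $\Omega(ML)$ capped potential survives at the top level. Each top block of size $\Theta(M)$ contributes at most $cM$, so $\Omega(L)$ top blocks must be touched, and the high-water mark is $\Omega(ML)$.

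For the charging step, suppose $\delta(B)>0$. Then $\psi^*(B)\ge c|B|$, so some earlier round $r-i$ placed $\Omega(k^i/\text{(slow decay)}^i)$ fragments in $B$. For a round-$r$ fragment to land in $B$, all of those earlier fragments must have been freed, and hence all of their parent requests as well. Here is the key subtlety. Several fragments in $B$ may belong to the same request, but each request contributes at most $k$ fragments. So the number of distinct requests touching $B$ is at least the fragment count divided by $k$, which is still $\Omega(k^{i-1}/\text{decay}^i)$. Under selective ignoring, the set of requests actually placed may correlate with the deletions. I would avoid this by having the adversary's deletion randomness be independent of $\alg$. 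For a fixed set of requests placed in $B$, the probability that all are deleted within $i$ rounds is at most $(1-(1-p)^i)^{m}$, where $m$ is the number of such requests. Negative association, or direct independence of the random-subset choices, gives $\exp(-\Omega(k^{i-1}(1-p)^{i}))$. The number of candidate blocks is at most $O(M\cdot\text{decay}^i/|B|)$. With $c$ large, summing over $k$ and $i$ bounds the expected total $\delta$ by a small constant times $ML$.

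I expect the main obstacle to be exactly this fragment-to-request conversion together with the adaptivity of $\alg$. The algorithm chooses where to place fragments after seeing which deletions occurred. So "the set of requests that landed in $B$ in round $r-i$" is a random set, and the union bound must be taken over placements in a way that remains valid. The approach I would try first is to condition on the history through round $r-i$, which fixes the contents of $B$, and then use only the freshness of the later deletions. For $i=1$ the bound is weak, so those small-$i$ terms have to be absorbed by choosing $c$ large relative to $k$. If $c$ has to grow with $k$, the bound degrades to $\Omega(\log_k Q/\text{poly}(k))$. To prevent this, I would take the cap to be $c|B|$ with $c$ an absolute constant, and show that the charged round satisfies $i\ge 2$ whenever $c$ exceeds a fixed threshold.
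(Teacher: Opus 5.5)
Your proposal is correct and follows essentially the same route as the paper's proof. The shared ingredients are: the base-$k$ workload with a constant-fraction random free rate; aligned blocks of the round's fragment size; the capped potential $\psi$ with rounding differences $\delta$; charging each $\delta(B)>0$ to an earlier round that placed $\Omega(k^i/1.1^i)$ fragments in $B$; dividing by $k$ to count distinct requests; and a negative-association tail bound (Lemma~\ref{lem:na-chernoff}) against at most $O(M\cdot 1.1^i/|B|)$ candidate blocks. Your conditioning on the history through the charged round is exactly what justifies that tail bound against an adaptive allocator. The worry about $c$ growing with $k$ does not arise: as in the paper, $c$ is an absolute constant, because $k\ge 2$ only strengthens the exponent $\Omega(c\,k^{i-1}0.9^i/1.1^i)$. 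In fact, charges with $i$ below roughly $\log_{1.1} c$ cannot occur at all, since a round-$t$ block holds at most $k^i$ round-$(t-i)$ fragments.
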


The rest of the section is dedicated to proving \Cref{thm:main-lower-reduced}. Throughout, we will assume without loss of generality that the parameters $k, M, Q$ are all powers of $2$.

In the rest of this section, we will prove \Cref{thm:main-lower-reduced}.
We start by defining an adversarial workload and analyzing its properties.
Then we analyze the memory usage of an allocator (satisfying the conditions of \Cref{thm:main-lower-reduced}) against the workload.

\subsection{The Adversarial Workload}

The proof of Theorem \ref{thm:main-lower-reduced} makes use of the following workload $\adv = \adv(\advmem, \advreq)$ with volume high-water mark $\M(\adv) = O(\advmem)$ and request high-water mark $\Q(\adv) = \advreq$. 

The workload $\adv$ proceeds in $R = \log_{k} \advreq + 1$ rounds $r = 1, 2, \ldots, R$, where in each round $r$, all of the requests that are made have the same size 
$$s_r := M k^{r-1}/\advreq.$$ 
Since the allocation algorithm breaks each request into exactly $k$ fragments, it follows that each round-$r$ fragment has size 
$$f_r := s_r/k.$$ 
Note that both $s_r$ and $f_r$ are, by assumption, powers of $k$.
To assist with the discussion of the rounds, we will also use $Q_t(r)$ to denote the set of all round-$t$ requests that remain live after round $r$. 

With this notation in place, the structure of each round $r$ is as follows: 
\begin{itemize}
    \item First, the workload frees a uniformly random $0.1$ fraction of each existing request class $Q_{t}(r-1)$, for $t=1,\ldots,r-1$ (these become $Q_t(r)$).
\item Then, the workload makes $m/s_r$ requests of size $s_r$ as the next request class $Q_r$.
\end{itemize}
Note that $\adv$'s randomization affects only which requests it frees in each round. 

Before continuing into the analysis, we take a moment to remark on some basic properties that the above workload has. First, notice that the volume of requests made in each round is deterministically $M$. As a consequence, we have:

\begin{lemma}\label{lem:total-alloc-size}
The total mass of requests $A$ ever made by $\adv$ is $MR = \Omega(M \log_{k} \advreq)$.
\end{lemma}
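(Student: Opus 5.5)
The plan is a direct count. In each round $r$ the workload issues $M/s_r$ requests of size $s_r$. (The statement writes $m/s_r$; I read this as $M/s_r$, which is what the sentence just before the lemma asserts.) So the total volume requested in round $r$ is deterministically $(M/s_r)\cdot s_r = M$. This does not depend on the workload's random choices of which requests to free, nor on the algorithm.

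First, I check that this count is an integer and well defined. Since $s_r = Mk^{r-1}/\advreq$ and $r \le R = \log_k \advreq + 1$, we have $s_r \le M$. Since $M$, $\advreq$ and $k$ are powers of $k$ (and $\advreq \le M$), $s_r$ is a power of $k$ dividing $M$. Hence $M/s_r = \advreq/k^{r-1}$ is a positive integer. In particular, $M/s_R = 1$.

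Next, I sum over the $R$ rounds to get the total requested mass $A = \sum_{r=1}^{R} M = MR = M(\log_k \advreq + 1)$.

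Finally, $\log_k \advreq + 1 \ge \log_k \advreq$, and it is also at least $1$ (so the bound holds even when $\advreq = 1$). This gives $MR = \Omega(M\log_k \advreq)$.

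There is no real obstacle. The only point needing care is the divisibility and integrality of $M/s_r$, which follows from the power-of-$k$ assumptions fixed before the workload's definition.
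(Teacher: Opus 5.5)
Your proof is correct and follows the same route as the paper, which simply observes that each round deterministically requests volume $M$, so summing over the $R = \log_k Q + 1$ rounds gives $MR$. Your integrality check ($M/s_r = Q/k^{r-1}$ is a positive integer under the power-of-$k$ assumptions) and the $Q = 1$ edge case are details the paper leaves implicit.
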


Also notice that, by design, we have the desired bounds on $\M(\adv)$ and  $\Q(\adv)$. 
\begin{lemma}\label{lem:alloc-size}
    The volume high-water mark of $\adv$ is $\M(\adv) = O(\advmem)$ and the request high-water mark is $\Q(\adv) = \advreq$.
\end{lemma}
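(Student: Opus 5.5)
The plan is to compute both high-water marks directly from the round structure of $\adv$. The first step records the per-round counts. Round $r$ issues $M/s_r = \advreq/k^{r-1}$ requests of size $s_r = Mk^{r-1}/\advreq$. Every earlier class is then thinned by a factor $0.9$ in each later round, so after round $r$ the class $Q_t(r)$ has exactly $0.9^{r-t}\,\advreq/k^{t-1}$ live requests. I treat these counts as integers, using that $\advreq$ is a power of $k$ and taking rounding to be floors; floors only decrease the counts, which is harmless for both upper bounds. Since frees happen only at the start of a round and requests are then added, the live volume and the live count are both maximized at the end of some round. It therefore suffices to bound both quantities at the ends of rounds.

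For the volume bound, class $t$ contributes volume $0.9^{r-t}\cdot(\advreq/k^{t-1})\cdot s_t = 0.9^{r-t}M$ at the end of round $r$. The total live volume is therefore
\[
\sum_{t\le r} 0.9^{r-t} M \;\le\; M\sum_{j\ge 0}0.9^j \;=\; 10M,
\]
so $\M(\adv)\le 10\advmem = O(\advmem)$. This also shows that volume is dominated by recent rounds, which is the dichotomy emphasized in the technical overview.

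For the request count, let $N_r$ be the number of live requests after round $r$. The workload gives the recursion $N_r = 0.9N_{r-1} + \advreq/k^{r-1}$ with $N_1=\advreq$, so $\Q(\adv)\ge\advreq$ immediately. For the matching upper bound I argue by induction that $N_r\le \advreq$. Assuming $N_{r-1}\le\advreq$, I get $N_r\le 0.9\advreq + \advreq/k^{r-1} \le (0.9+1/k)\advreq$ for $r\ge 2$, and this is at most $\advreq$ whenever $k\ge 10$. I therefore invoke the standing WLOG normalization of this section, which lets us round $k$ up to a power of two. Raising $k$ to a power of two that is at least $16$ changes $\log_k\advreq$ by only a constant factor, so it does not affect the asymptotic bound being proved. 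Under this normalization, $N_r\le\advreq$ for all $r$, which gives $\Q(\adv)=\advreq$ exactly.

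The main obstacle is this last point. Without $k\ge 10$ the induction fails: for $k=2$, $N_2 = 1.4\advreq$. In that case the unrolled sum only gives
\[
N_r\le \advreq\,0.9^{r-1}\sum_{t}\Bigl(\tfrac{1}{0.9k}\Bigr)^{t-1}\le 2.25\,\advreq .
\]
So I would state the proof with the normalization $k\ge 16$ made explicit, and note that for smaller $k$ one instead obtains $\Q(\adv)=\Theta(\advreq)$, which is all that the downstream lower-bound arguments use.
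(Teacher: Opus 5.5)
Your proposal is correct. On the volume bound it is exactly the paper's argument: the geometric sum $\sum_{t\le r}0.9^{r-t}M=O(M)$ at the end of each round.

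On the request count you diverge from the paper, and you are right to. The paper's proof only asserts that the request high-water mark is reached after round $1$. But at the end of round $2$ there are $(0.9+1/k)Q$ live requests, so that assertion, and the lemma as literally stated, holds only when $k\ge 10$. It fails for $k\in\{2,4,8\}$; for example, $k=2$ gives $1.4Q$.

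Your recursion $N_r=0.9N_{r-1}+Q/k^{r-1}$, with induction under $k\ge 10$, is what actually justifies the paper's sentence. Normalizing $k$ up to a power of two that is at least $16$ (and re-rounding $M,Q$ to powers of the new $k$) is a legitimate WLOG of the same kind as the paper's own. An aligned $(=k)$ algorithm becomes an aligned $(=16)$ algorithm by splitting each fragment into aligned pieces, and $\log_{16}Q\ge \tfrac14\log_k Q$ for $2\le k<16$.

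Your closing remark needs two refinements. First, the fallback $Q(V)=\Theta(Q)$ is not quite free downstream. The proof of Theorem~\ref{thm:fragmento} uses $Q(V_r)=R$ exactly to get $Q(W)\le 2R$ against a threshold of $2k$ expected fragments. With only $Q(V_r)\le 2.25R$, that threshold would have to be raised to a larger constant multiple of $k$. Second, in that application the reductions of Appendix~\ref{app:reductions} already turn expected-$2k$ fragmentation into at most $4\cdot 2k=8k\ge 16$ fragments, since $k\ge 2$ there. So your normalization is automatically in force, and the exact equality is available where it is needed.
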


\begin{proof}
    At the end of a given round $r$, the volume of live requests is at most 
    $$\sum_{i = 1}^r M \cdot (0.9^{r - i}) = O(M),$$
    implying a volume high-water mark of $O(M)$. The request high-water mark occurs after the first round, when $\advreq$ requests have been made. 
\end{proof}

Finally, in the proof of \Cref{thm:main-lower-reduced}, we will make use of the following basic bound on the probability that a given set of requests have \emph{all} been freed.

\begin{lemma}\label{lem:na-chernoff}
    Let $q_1, q_2, \ldots, q_\ell$ be requests made in round $r$, and let $X$ be the event that every $q_i$ has been freed by round $t > r$.
    Then \[\Pr[X] \leq 2^{-\Omega(\ell \cdot 0.9^{t-r})}.\]
\end{lemma}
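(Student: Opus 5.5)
Fix the $\ell$ requests $q_1,\dots,q_\ell$ of round $r$. Each of the rounds $r+1,\dots,t$ frees a uniformly random $0.1$ fraction of the surviving round-$r$ requests. The plan is to express the event $X$ through survival indicators and apply a Chernoff-type bound valid under negative dependence. For each $i$, let $Y_i$ be the indicator that $q_i$ is still live after round $t$, and write $Y=\sum_{i\le \ell} Y_i$. Then $X=\{Y=0\}$. By symmetry, each request in $Q_r$ survives one round with probability exactly $0.9$ (up to rounding of the $0.1$ fraction), and the rounds use fresh independent randomness. Hence $\Pr[Y_i=1]=0.9^{t-r}$ (or at least $c\cdot 0.9^{t-r}$ once rounding is accounted for), and $\E[Y]=\ell\cdot 0.9^{t-r}$.

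The key step is to show that $\{Y_i\}$ are negatively associated, or at least satisfy the weaker property $\Pr[\bigcap_{i\in S}\{Y_i=0\}]\le\prod_{i\in S}\Pr[Y_i=0]$. This product inequality is all that the lower-tail Chernoff bound needs, via the standard Panconesi--Srinivasan argument applied to the complements $1-Y_i$.
\begin{itemize}
\item \textbf{Single round.} Choosing a uniformly random subset of fixed size to delete from $Q_t(r-1)$ is sampling without replacement. Its indicator vector is a permutation distribution, which is negatively associated by Joag-Dev and Proschan.
\item \textbf{Composition over rounds.} Survival after round $t$ is the coordinatewise AND of the per-round survival vectors. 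Each round's vector is drawn conditionally on the current survivor set, but its law is again uniform sampling from that set. Rather than invoke general closure properties, I would prove directly, by induction on the number of rounds, that for every set $S$ of requests
\[
\Pr\Bigl[\textstyle\bigcap_{i\in S}\{Y_i=0\}\Bigr]\le \prod_{i\in S}\Pr[Y_i=0].
\]
The cleaner alternative is to observe that, by exchangeability, the final survivor set of $Q_r$ is a uniformly random subset of a random size $N$. Conditioned on $N$, the probability that all of $q_1,\dots,q_\ell$ are freed is $\binom{|Q_r|-\ell}{N}\big/\binom{|Q_r|}{N}\le (1-N/|Q_r|)^\ell$. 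Since $N$ is essentially deterministic, equal to $\lfloor\cdots\lfloor 0.9|Q_r|\rfloor\cdots\rfloor \approx 0.9^{t-r}|Q_r|$, this gives directly
\[
\Pr[X]\le (1-0.9^{t-r}\cdot c)^{\ell}\le e^{-c\,\ell\, 0.9^{t-r}} = 2^{-\Omega(\ell\cdot 0.9^{t-r})}.
\]
I would take this route because it avoids negative association entirely.
\end{itemize}

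The main obstacle is the rounding of the $0.1$ fraction. I need $N\ge c\cdot 0.9^{t-r}|Q_r|$ for a constant $c$. While $|Q_t(r')|\ge 10$, rounding loses at most a factor $(1-O(1/|Q_t(r')|))$ per round, and those losses multiply to a constant. Once the count drops small, the bound $\ell\cdot0.9^{t-r}\le |Q_r|\,0.9^{t-r}=O(1)$ makes the claim trivial, since $\Pr[X]\le 1 = 2^{-O(1)}$ can be absorbed into the $\Omega$. If the paper instead frees exactly $\lfloor 0.1\,\cdot\rfloor$, then $N\ge 0.9^{t-r}|Q_r|$ holds outright and the bound is immediate.
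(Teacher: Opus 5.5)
Your proposal is correct. It reaches the paper's bound $(1-0.9^{t-r})^\ell$ by a different mechanism: the paper invokes negative association, while you compute the probability exactly.

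\textbf{The paper's route.} It notes that the round-$r$ requests freed during rounds $r+1,\dots,t$ form a random subset of $Q_r$. It asserts that the freed-indicators of $q_1,\dots,q_\ell$ are therefore negatively associated, citing Joag-Dev--Proschan, and reads off the product bound. Rounding is ignored.

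\textbf{Your route.} You make the symmetry step explicit: the survivor set after round $t$ is a uniform $N$-subset of $Q_r$. You then compute directly
\[
\Pr[X]=\binom{n-\ell}{N}\Big/\binom{n}{N}=\prod_{j=0}^{\ell-1}\Bigl(1-\tfrac{N}{n-j}\Bigr)\le\Bigl(1-\tfrac{N}{n}\Bigr)^{\ell}.
\]

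\textbf{Comparison.} Your version is self-contained and needs no negative-association theorem. It also handles the multi-round composition, which the paper passes over with the phrase ``random subset'' even though that is exactly the exchangeability fact you prove. Finally, it tracks rounding explicitly. The paper's version is shorter but leans on an external citation. Your detours through Panconesi--Srinivasan and the inductive product inequality are unnecessary, since $X$ is itself the intersection event.

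\textbf{One small caution.} Under ceiling rounding of the freed count, the survivor count can reach $0$, which makes $\Pr[X]=1$. So your ``trivial'' small-$\ell\cdot 0.9^{t-r}$ regime is covered only if $2^{-\Omega(x)}$ is read asymptotically, for $x$ above a constant. This is harmless: in Lemma~\ref{lem:delta-small} the exponent $d k^{i-1}0.9^i/1.1^i$ is large.
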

\begin{proof}
   Let $X_t(q)$ be the random variable indicating that request $q$ has been freed by round $t$. Since the round-$r$ requests freed in rounds $r + 1, \ldots, t$ are a random subset of the round-$r$ requests, we know that $X_1(q), X_2(q), \ldots, X_t(q)$ are negatively associated (see, e.g., \cite{joag1983negative,wajc2017negative}), which implies that
   $$\Pr[X_t(q) \forall t \in [\ell]] \le \prod_{t = 1}^\ell \Pr[X_t(q)] = (1 - 0.9^{t - r})^\ell = 2^{-\Omega(\ell \cdot 0.9^{t-r})},$$
   where the final step uses the identity $1 - x = 2^{-\Omega(x)}$. 
\end{proof}

\subsubsection{Proof of Theorem~\ref{thm:main-lower-reduced}}\label{sec:lower-proof-reduced}

Having constructed the workload $\adv$, we turn to proving \Cref{thm:main-lower-reduced}.

Define the \defn{round-$r$ blocks}, denoted $\mathcal{B}_r$, to be the memory-aligned blocks of size $f_r = Mk^{r-2}/\advreq$ in the memory array 
We refer to these as round-$r$ blocks because in round $r$ of the workload, an \alignedevenfrag{$(=k)$} algorithm must allocate fragments precisely into these blocks.

A round-$r$ block consists of $k$ round-$(r-1)$ blocks, $k^2$ round-$(r-2)$ blocks and so on.
Notice that if the algorithm allocates a fragment into a round-$r$ block in round $r$, it entirely consumes the block, so there can have been no live fragments present.
Thus, if there were allocations made to any of its constituent round-$t$ blocks (for $t < r$) in prior rounds, these must have been deallocated already.

A key insight behind the following analysis is that if the component round-$t$ blocks of a round-$r$ block were allocated efficiently, i.e., were packed with fragments, then the probability that all of these round-$t$ fragments have since been deallocated is very small.
Thus the algorithm loses either way: either it uses its space inefficiently up front, or it cannot reuse blocks as deallocations make them sparser.
At a high level, the proof shows the algorithm must allocate a constant fraction of the total allocated volume (excluding deallocations) in new space, which yields the lower bound.

More concretely, for a given round-$r$ block $B$, let $I_B$ be the indicator random variable for whether a $f_r$-size fragment is ever allocated in that block (it counts even if the fragment is later deallocated). Note that if a \tricky algorithm ignores a request, then this request will not contribute to any $I_B$ values. 

Let $c$ be a parameter that we will later set to be a large positive constant. For each round $r$, and for each round-$r$ block $B\in\mathcal{B}_r$ define $\phi(B)$ and $\psi(B)$ as follows.
If $r = 1$, then $\phi(B) = \psi(B) = I_B$.
If $r > 1$, then 
\[\phi(B) = \psi(B_1) + \psi(B_2) + \cdots + \psi(B_{k}) + f_r I_B,\]
where $B_1, \dots, B_{k}$ are the $k$ level-$(r - 1)$ blocks that comprise $B$; and define
\[\psi(B) = \min(\phi(B), c f_r).\]
For each block $B$, define $\delta(B) = \phi(B) - \psi(B)$. The following lemma will be useful later.

\begin{lemma}\label{lem:delta-cap}
    For all rounds $r$ and blocks $B \in \mathcal{B}_r$, $\delta(B) \leq f_r$.
\end{lemma}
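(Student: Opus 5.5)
The plan is a case split on whether the cap in the definition of $\psi(B)$ is active, using the definitions $\phi(B) = \sum_j \psi(B_j) + f_r I_B$ and $\psi(B) = \min(\phi(B), c f_r)$, so that $\delta(B) = \max(0, \phi(B) - c f_r)$. If $\phi(B) \le c f_r$, then $\delta(B) = 0 \le f_r$ and there is nothing to prove. For $r = 1$ we have $\phi(B) = \psi(B)$ by definition, so $\delta(B) = 0$; we may therefore assume $r > 1$ and $\phi(B) > c f_r$.

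In the remaining case, bound the children's contribution. Each child $B_j \in \mathcal{B}_{r-1}$ satisfies $\psi(B_j) \le c f_{r-1} = c f_r / k$. This holds for $r-1 \ge 2$ by the cap in the definition of $\psi$. For $r - 1 = 1$ it holds because $\psi(B_j) = I_{B_j} \le 1 \le c f_1/k$, which requires $f_1 \ge 1$ and $c \ge k$ (or the natural reading $\psi = f_1 I_B$, in which case the same cap argument applies as long as $c \ge 1$). I would note this base-case normalization explicitly, since it is the only place where a small subtlety could arise.

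Summing over the $k$ children gives $\sum_{j=1}^k \psi(B_j) \le k \cdot c f_r / k = c f_r$. Hence $\phi(B) \le c f_r + f_r I_B \le c f_r + f_r$, and therefore $\delta(B) = \phi(B) - c f_r \le f_r I_B \le f_r$. This also yields the stronger observation that $\delta(B) > 0$ only when $I_B = 1$, which matches the remark in the technical overview and will be useful later in the charging argument. There is no real obstacle beyond checking the base case of the recursion.
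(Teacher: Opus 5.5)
Your proof is correct and is the same argument as the paper's: cap each child's $\psi$ by $cf_{r-1}=cf_r/k$, sum over the $k$ children to get $\phi(B)\le cf_r+f_rI_B$, and conclude $\delta(B)\le f_rI_B\le f_r$. Your explicit handling of $r=1$ and $r=2$ covers details the paper glosses over, but there is one index slip: for $r=2$ you need $\psi(B_j)\le cf_1$, not $cf_1/k$, so the literal definition only requires $cf_1\ge 1$ rather than $c\ge k$. This matters because $c$ must be an absolute constant while $k$ may grow.
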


\begin{proof}
    Each $\psi(B_i) \leq cs_{r-1}/k$ for the blocks $B_1,\dots B_{k}$ comprising $B$. Thus 
    \begin{align*} \phi(B) \leq k\cdot cs_{r-1}/k + f_r I_B = cf_r + f_r I_B \leq cf_r + f_r\end{align*} 
    and so $\delta = \phi(B) - \min(\phi(B),cf_r) \leq f_r$.
\end{proof}

Before continuing, let us briefly discuss some intuition about how we will proceed.
Intuitively, for a round-$r$ block $B$, we will want a quantity like $\psi(B)$ to be a measure of how efficiently the algorithm has used the space in $B$ in the rounds up to and including $r$.
Define $\psi^*(B) = I_B$ if $r = 1$ and 
\[\psi^*(B) = \psi^*(B_1) + \psi^*(B_2) + \cdots + \psi^*(B_{k}) + f_r I_B\]
if $r > 1$. Equivalently, $\psi^*(B)$ is the total volume of allocations made into $B$ in all rounds up to $r$.
If we could show that $\psi^*(B)$ were  bounded by $O(f_r)$ for all $B$ and $r$ and some constant $c$, this would say that the total volume of allocations into $B$ over time is a constant multiple of its size. But since the total volume of \emph{all requests} in the first $r$ rounds is $\Omega\left(M r\right)$, this would imply that the algorithm must use many round-$r$ blocks, thereby incurring a large memory high-water mark.

Rather than trying to directly prove that $\psi^*(B)$ (or $\E[\psi^*(B)]$) is at most $O(f_r)$, for each round-$r$ block $B$, we will instead focus on the closely related functions $\psi(B)$ and $\phi(B)$ defined above. These functions are defined in such a way that, when $\psi^*(B) \ge c f_r$, we can charge the ``overflow'' to the component sub-blocks of $B$. In particular, when $\phi(B)$ exceeds $cf_r$, we want this to be charged to $B$ and then not recharged to the superblocks of $B$. $\delta(B)$ captures this charge.

The following two lemmas capture this intuition more formally. The first captures the idea that, to lower-bound $\hwm(\alg, \adv)$, it suffices to lower-bound $\sum_{B \in \mathcal{B}_{R}} \psi(B)$.

\begin{lemma} \label{lem:psi-lb-hwm}
    Let $\alg$ be a $1/4$-\trickyalignedevenfrag{$(=k)$} algorithm.
    Then on workload $\adv$,
    \[\sum_{B \in \mathcal{B}_{R}} \psi(B) \le 2c \hwm(\alg,\adv).\]
\end{lemma}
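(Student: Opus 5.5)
Every round-$R$ block $B$ with $\psi(B) > 0$ must lie within the memory actually used by $\alg$, and each such block contributes at most $c f_R$ to the sum. The plan is to make these two facts precise and then count.

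First, the support of $\psi$ is controlled by allocations. An easy induction on $r$ shows that $\psi(B) > 0$ (equivalently $\phi(B) > 0$) only if some fragment was at some point allocated into a sub-block of $B$, that is, only if some slot of $B$ was used by $\alg$. The base case is immediate, since $\psi(B) = I_B$ when $r = 1$. For the inductive step, $\phi(B)$ is a sum of the $\psi(B_i)$ and $f_r I_B$, so it is positive only if one of those terms is. Hence every round-$R$ block with $\psi(B) > 0$ intersects $[1, \hwm(\alg,\adv)]$. Ignored requests are harmless here: they contribute neither to any $I_B$ nor to the high-water mark.

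Next, count the blocks. The round-$R$ blocks are aligned and have size $f_R$, so the number of them that intersect $[1, S]$ with $S = \hwm(\alg,\adv)$ is $\lceil S/f_R \rceil$. Combined with the bound $\psi(B) \le c f_R$, which holds by definition of $\psi$ for $R > 1$, this gives
\[
\sum_{B \in \mathcal{B}_R} \psi(B) \le c f_R \lceil S/f_R \rceil .
\]

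Finally, compare $\lceil S/f_R\rceil f_R$ with $S$. If $S \ge f_R$, then $\lceil S/f_R\rceil f_R \le S + f_R \le 2S$, which yields the claimed bound $2cS$. If $S < f_R$, a single partially used block appears, and some care is needed. If $S = 0$, the sum is trivially $0$. If $0 < S < f_R$, then no round-$R$ fragment was ever allocated, because such a fragment needs a full block of $f_R$ slots inside $[1,S]$. So $I_B = 0$ for that block, and $\psi(B)$ is built only from lower-level sub-blocks within $[1,S]$. Here I would redo the induction at the appropriate level: the sum of $\psi$ over the sub-blocks at the largest level $r$ with $f_r \le S$ is at most $2cS$ by the same counting. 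Going up the levels, no $I_B$ term is added (since no fragment fits), so $\psi$ does not increase ($\psi \le \phi$ = sum of children). I expect this degenerate case to be the only real obstacle. The alternative is to observe that in $\adv$ some round-$R$ request is almost surely not ignored, which forces $S \ge f_R$; but the level-by-level argument is cleaner and deterministic. The case $R = 1$, where $\psi(B) = I_B \le 1 \le c f_1$, is handled directly.
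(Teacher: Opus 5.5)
Your proof is correct and is essentially the paper's argument: $\psi$ vanishes on round-$R$ blocks the algorithm never touches, each touched block contributes at most $c f_R$, and the touched blocks cover fewer than $\hwm(\alg,\adv) + f_R$ slots. Where you differ is the factor $2$. The paper obtains it by asserting that the largest block size is at most $\hwm(\alg,\adv)$, which need not hold in runs where a selective algorithm ignores enough requests that $\hwm(\alg,\adv) < f_R$. Your telescoping down to the largest level $r$ with $f_r \le \hwm(\alg,\adv)$ handles that case correctly. You were also right to discard the ``almost surely not ignored'' alternative, since ignoring may be correlated across requests.
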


\begin{proof}
    Let $\overline{\mathcal{B}}_{R}$ be the blocks $B \in \mathcal{B}_{R}$ that are used by the algorithm in at least one round.
    By design, $\psi(B) = 0$ for all $B \in \mathcal{B}_{R} \setminus \overline{\mathcal{B}}_{R}$, and $\psi(B) \le c |B|$ for each $B \in \overline{\mathcal{B}}_{R}$. Thus 
    \[ \sum_{B \in \mathcal{B}_{R}} \psi(B) \le \sum_{B \in \overline{\mathcal{B}}_{R}} c|B| \le 2 c \hwm(\alg,\adv). \]
    The factor of 2 in the last inequality comes from the fact that the last block can potentially be almost entirely over the memory high-water mark; the largest block size is $\advmem \le \hwm(\alg, \adv)$, so this consideration contributes at most a factor of $2$. 
\end{proof}

Therefore, if we can lower bound $\sum_{B \in \mathcal{B}_{R}}\psi(B)$, this provides a lower bound on the memory high-water mark $\hwm(\alg,\adv)$ as well. Recall that $1/4$-\tricky algorithms may probabilistically ignore some allocations.
Let $A=A(\alg,\adv)$ be the random variable denoting the total volume of the allocations that are \emph{not} ignored by $\alg$ on $\adv$. (Note that $\E[A] \ge 3/4 \cdot M(V)$, by definition.) The following lemma captures the idea that we can use the $\delta(B)$'s to evaluate the difference between $\sum_{B \in \mathcal{B}_{R}} \psi(B)$ and $A$.

\begin{lemma}\label{lem:psi-delta}
    let $\alg$ be a $1/4$-\trickyalignedevenfrag{$(=k)$} algorithm.
    Then on workload $\adv$,
    \[\sum_{B \in \mathcal{B}_{R}} \psi(B) = A - \sum_{r = 1}^{R} \sum_{B \in \mathcal{B}_r} \delta(B).\]
\end{lemma}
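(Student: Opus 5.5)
The identity is a telescoping statement, so I will prove it by unwinding the recursive definitions from the top level down. The key observation is that, by definition of $\delta$, every block $B \in \mathcal{B}_r$ satisfies $\psi(B) = \phi(B) - \delta(B)$. For $r > 1$,
\[
\phi(B) = \sum_{i=1}^{k} \psi(B_i) + f_r I_B .
\]
For the base case I read $I_B$ at $r=1$ as carrying weight $f_1 I_B$, the volume of a round-1 fragment; this is clearly the intended normalization, since $\psi^*$ is described as total allocated volume.

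The plan is to prove, by induction on $r$, the intermediate claim
\[
\sum_{B \in \mathcal{B}_r} \psi(B) \;=\; A_r - \sum_{t=1}^{r} \sum_{B \in \mathcal{B}_t} \delta(B),
\]
where $A_r$ is the total volume of non-ignored allocations made in rounds $1,\dots,r$.

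\emph{Base case ($r=1$).} Here $\phi(B) = f_1 I_B$, so $\sum_{B \in \mathcal{B}_1} \phi(B)$ equals $f_1$ times the number of round-1 blocks that ever receive a fragment. I need this to equal $A_1$, the total non-ignored round-1 volume.

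\emph{Inductive step.} I sum $\psi(B) = \phi(B) - \delta(B)$ over $B \in \mathcal{B}_r$. The level-$(r-1)$ blocks partition memory, and each one lies in exactly one round-$r$ block. Hence
\[
\sum_{B \in \mathcal{B}_r} \sum_{i=1}^{k} \psi(B_i) = \sum_{B' \in \mathcal{B}_{r-1}} \psi(B').
\]
What remains is to show that $\sum_{B \in \mathcal{B}_r} f_r I_B$ equals the round-$r$ contribution $A_r - A_{r-1}$. Substituting the inductive hypothesis then gives the claim at level $r$, and taking $r = R$ yields the lemma since $A_R = A$.

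The only substantive step is this identification of $\sum_{B \in \mathcal{B}_r} f_r I_B$ with the round-$r$ allocated volume. Under aligned-even fragmentation, every round-$r$ fragment occupies exactly one aligned block of $\mathcal{B}_r$, so each non-ignored round-$r$ request contributes $k$ fragments of volume $f_r$, one per block. Two things must be checked.

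First, no round-$r$ block receives more than one round-$r$ fragment. All round-$r$ requests stay live until round $r$ ends, so two fragments placed in the same block in round $r$ would overlap. Blocks of lower levels cannot receive round-$r$ fragments at all, because those fragments are larger than such blocks.

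Second, $I_B$ must count only allocations made in round $r$ itself. A later-round fragment cannot be placed in a round-$r$ block, since it is strictly larger, so this also holds. A counting subtlety arises if some fragment of size $f_r$ appears in a later round; this cannot happen because the sizes $f_r$ strictly increase with $r$.

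Given these two checks, the identity $\sum_{B \in \mathcal{B}_r} f_r I_B = A_r - A_{r-1}$ is immediate, and ignored requests contribute to neither side. I expect this bookkeeping to be the main (mild) obstacle; everything else is pure telescoping.
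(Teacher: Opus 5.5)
Your proof is correct and follows the paper's own argument. Both proceed by induction on $r$: write $\psi(B)=\phi(B)-\delta(B)$, regroup the level-$(r-1)$ terms using the fact that those blocks partition the round-$r$ blocks, and identify $\sum_{B\in\mathcal{B}_r} f_r I_B$ with the round-$r$ unignored volume. (Your cumulative $A_r$ versus the paper's per-round $A_r$ is only a difference of notation.) Two of your additions are sound tightenings of steps the paper states without comment. One is your explicit check that each round-$r$ block receives at most one round-$r$ fragment. The other is reading the base case as $f_1 I_B$ rather than $I_B$; the paper's base-case equality $\sum_B I_B = A_1$ silently needs this unless $f_1=1$.
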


\begin{proof}
    Let $A_r$ be the total volume of unignored requests in round $r$.
    We proceed by induction on $r$.
    For round $1$, we have
    \[ \sum_{B \in \mathcal{B}_{0}} \psi(B) = \sum_{B \in \mathcal{B}_{1}} I_B = A_1. \]
    Assume that
    \[\sum_{B \in \mathcal{B}_{r-1}} \psi(B) = \sum_{i=1}^{r-1}A_{i} - \sum_{i = 1}^{r-1} \sum_{B \in \mathcal{B}_i} \delta(B).\]
    For a round-$r$ block $B$, $\phi(B)$ is the sum of $\psi$ on its round-$(r-1)$ subblocks together with the volume any new allocation made to $B$ during the round. In round $r$, $A_r$ total allocations are made, so 
    \begin{align*}
        \sum_{B \in \mathcal{B}_{r}} \psi(B) &= \sum_{B \in \mathcal{B}_{r}} \phi(B) - \sum_{B \in \mathcal{B}_{r}} \delta(B) \\
        &= A_r + \sum_{B \in \mathcal{B}_{r - 1}} \psi(B)  - \sum_{B \in \mathcal{B}_{r}} \delta(B) \\
        &= A_r + \sum_{i=1}^{r-1}A_{i} - \sum_{i = 1}^{r-1} \sum_{B \in \mathcal{B}_i} \delta(B) - \sum_{B \in \mathcal{B}_{r}} \delta(B) \\
        &= \sum_{i=1}^{r}A_{i} - \sum_{i = 1}^{r} \sum_{B \in \mathcal{B}_i} \delta(B),
    \end{align*}
    which completes the induction. The result now follows since $A = \sum_{i=1}^R A_i$.
\end{proof}

Taking expected values, we can deduce:
\begin{corollary}\label{cor:psi-delta-tricky}
    Let $\alg$ be a $1/4$-\trickyalignedevenfrag{$(=k)$} algorithm.
    Then on workload $\adv$,
    \[ \E\left[\sum_{B \in \mathcal{B}_{R}} \psi(B)\right] \geq \frac{3}{4} RM - \E\left[\sum_{r = 1}^{R} \sum_{B \in \mathcal{B}_r} \delta(B)\right].\]
\end{corollary}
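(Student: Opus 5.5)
This follows by taking expectations in \Cref{lem:psi-delta} and lower-bounding $\E[A]$.

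\Cref{lem:psi-delta} holds pointwise, for every realization of the workload's randomness and of $\alg$'s random bits:
\[\sum_{B \in \mathcal{B}_{R}} \psi(B) = A - \sum_{r=1}^{R}\sum_{B\in\mathcal{B}_r}\delta(B).\]
By linearity of expectation,
\[\E\Big[\sum_{B \in \mathcal{B}_{R}} \psi(B)\Big] = \E[A] - \E\Big[\sum_{r=1}^{R}\sum_{B\in\mathcal{B}_r}\delta(B)\Big].\]
This is legitimate because all quantities are nonnegative and finite. There are finitely many rounds, and only finitely many blocks can receive allocations, so every sum is finite.

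It remains to show $\E[A]\ge \tfrac34 RM$. By \Cref{lem:total-alloc-size}, the total volume requested by $\adv$ is deterministically $RM$, since each round requests volume exactly $M$. A $1/4$-\tricky algorithm ignores each request $q$ with probability at most $1/4$. Write $A=\sum_q |q|\cdot \mathbf{1}[q \text{ not ignored}]$. Linearity of expectation then gives
\[\E[A]\ge \sum_q |q|\,(1-\tfrac14)=\tfrac34 RM.\]
This step needs no independence between the ignore events, which matters because the definition explicitly allows them to be correlated.

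There is one subtlety. Which requests exist can depend on $\adv$'s random frees, but the multiset of request sizes does not, so the sum over $q$ is deterministic. The per-request bound on the ignore probability should also be read conditionally on the history at the time $q$ arrives. Under that reading, averaging over histories still leaves each term at least $\tfrac34|q|$.

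Substituting $\E[A]\ge\tfrac34 RM$ into the identity above gives the corollary. The only point needing care is this justification of $\E[A]\ge\tfrac34RM$ in the presence of correlated ignoring; the rest is immediate from \Cref{lem:psi-delta}.
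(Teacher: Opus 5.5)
Your proposal is correct and follows the paper's proof: take expectations in \Cref{lem:psi-delta}, then use $1/4$-selectiveness together with \Cref{lem:total-alloc-size} to get $\E[A]\ge\frac34 RM$. Your added remarks about correlated ignoring and the deterministic multiset of request sizes are sound elaborations of what the paper compresses into one line.
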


\begin{proof}
    Since $\alg$ is $1/4$-\tricky, for any workload $W$ with request volume $x$, the allocated volume $A(\alg,W)$ must satisfy $E[A(\alg,W)] \geq \frac{3}{4}x$.
    The lemma follows by taking expectations in \Cref{lem:psi-delta}, and \Cref{lem:total-alloc-size}.
\end{proof}

We remark that the $1/4$-\trickiness of $\alg$ comes into play in the above lemma, when taking $\E[A]$. It turns out that this will be the only point in the proof where the $1/4$-\trickiness ends up mattering in any substantive way. 

At this point, we have established a chain of connections, reducing the task of proving \Cref{thm:main-lower-reduced} to the task of bounding the sum of the $\delta$s to be at most, say, $RM / 4$. This is what we do in the following lemma, which is the main technical lemma of the section:

\begin{lemma} \label{lem:delta-small}
    If we set $c$ to be a sufficiently large positive constant, then
    \[\E\left[ \sum_{r = 1}^{R} \sum_{B \in \mathcal{B}_r} \delta(B) \right] \le \frac{RM}{4}.\]
\end{lemma}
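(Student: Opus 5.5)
We bound the sum by a charging argument, following the template of the technical overview but with branching factor $k$ in place of $2$.

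\textbf{Structural claim.} Fix a round $r\ge 2$ and a block $B\in\mathcal{B}_r$ with $\delta(B)>0$. Then $\phi(B)>cf_r$ and $I_B=1$. The case $I_B=0$ is impossible: then $\phi(B)=\sum_j\psi(B_j)\le k\cdot cf_{r-1}=cf_r$, so $\delta(B)=0$.

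Since $\psi\le\psi^*$, we have $\psi^*(B)\ge\phi(B)-f_r\ge(c-1)f_r$. The quantity $\psi^*(B)$ is exactly the volume of fragments ever allocated inside $B$ during rounds $1,\dots,r$. Round $r$ contributes only the single fragment of size $f_r$, so rounds $t<r$ contribute at least $(c-2)f_r$.

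\textbf{Choosing the charged round.} Write $i=r-t$. Compare against the weights $w_i=\gamma\,1.1^{-i}$, normalized so that $\sum_{i\ge1}w_i\le 1$. By pigeonhole, some round $t=r-i$ placed at least $(c-2)w_if_r$ volume into $B$. That is, round $t$ placed at least
\[\ell_i:=(c-2)\gamma\,k^i\,1.1^{-i}\]
fragments of size $f_t=f_r/k^i$ into $B$, belonging to at least $\ell_i/k$ distinct round-$t$ requests.

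Because $I_B=1$, a size-$f_r$ fragment occupies all of $B$ in round $r$. Hence every one of those round-$t$ requests must have been freed by round $r$ (at the moment of the round-$r$ allocation). Let $\mathcal{B}_{t,i}$ be the set of blocks $B\in\mathcal{B}_{t+i}$ that received at least $\ell_i$ round-$t$ fragments, all of whose requests are freed by round $t+i$. By \Cref{lem:delta-cap},
\[\sum_r\sum_{B\in\mathcal{B}_r}\delta(B)\le\sum_{t}\sum_{i\ge1}f_{t+i}\,|\mathcal{B}_{t,i}|.\]

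\textbf{Bounding $\E|\mathcal{B}_{t,i}|$.} This is where the volume-versus-count dichotomy enters, and it is the main obstacle. The difficulty is that the algorithm's placements depend on its own randomness and on the workload's earlier frees. The proof must avoid a union bound over $B$ with probabilities that are correlated with the choice of $B$.

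My plan is to condition on everything up to the end of round $t$. At that point the algorithm's placement of round-$t$ fragments is fixed, and the later free decisions for round-$t$ requests are independent of it: the adversary's frees are uniformly random subsets, chosen obliviously. Placements made in later rounds can depend on which requests were freed, but membership in $\mathcal{B}_{t,i}$ depends only on the round-$t$ placement and on the frees. Hence $\mathcal{B}_{t,i}$ is contained in a set of candidate blocks determined at the end of round $t$, and its realization depends only on the frees.

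The number of candidates is at most $(M/f_t)/\ell_i$, since round $t$ places $M/f_t$ fragments in total. Applying \Cref{lem:na-chernoff} to the at least $\ell_i/k$ distinct requests in a candidate block gives
\[\Pr[B\in\mathcal{B}_{t,i}]\le 2^{-\Omega(\ell_i 0.9^i/k)}.\]
The requests in different candidate blocks are disjoint, but linearity of expectation suffices here, so no independence across blocks is needed.

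Therefore
\[\E\bigl[f_{t+i}|\mathcal{B}_{t,i}|\bigr]\le\frac{M k^i}{\ell_i}\,2^{-\Omega(\ell_i0.9^i/k)}=\frac{M\,1.1^i}{(c-2)\gamma}\,2^{-\Omega((c-2)\gamma k^{i-1}0.99^{i})}.\]
Since $k\ge2$, the exponent is at least $\Omega((c-2)\gamma\,1.98^{i-1})$, which already beats the $1.1^i$ prefactor for $i\ge1$. Choosing $c$ large makes $\sum_{i\ge1}$ of the right-hand side at most $M/4$.

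\textbf{Summing.} Summing over the at most $R$ values of $t$ gives the bound $RM/4$, which proves the lemma.

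The delicate points are the conditioning step above and tracking fragments versus distinct requests, which costs the factor $k$ in the exponent. The $k^{i-1}$ growth of $\ell_i$ absorbs that factor.
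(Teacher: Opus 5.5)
Your proposal is correct and follows essentially the same charging argument as the paper: pigeonhole with geometric weights $1.1^{-i}$ to pick a charged round, a volume count of candidate blocks, \Cref{lem:na-chernoff} applied to the at least $\ell_i/k$ distinct requests, and \Cref{lem:delta-cap} to cap each charge at $f_{t+i}$. Your explicit conditioning on the history through round $t$ usefully justifies a step the paper leaves implicit. There is one arithmetic slip: $\ell_i 0.9^i/k=(c-2)\gamma k^{i-1}(0.9/1.1)^i$, not $(c-2)\gamma k^{i-1}0.99^i$, so for $k\ge 2$ the exponent grows like $(18/11)^{i}$ rather than $1.98^{i}$; this still dominates the $1.1^i$ prefactor, and the conclusion is unaffected.
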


\begin{proof}
    Consider a round-$t$ block $B$ with $\delta(B) > 0$.
    Then by definition, $\psi(B) > cf_t$, and in particular, the sum of all allocations made within $B$ (including those since deallocated) is at least $cf_t$.
    Since at most $f_t$ volume can be allocated in $B$ in round $t$, this means that at least $cf_t - f_t = (c-1)f_t$ volume must have been allocated in $B$ in prior rounds.
    Let $d = (c-1)/\left(\sum_{r=1}^{t-1} \frac{1}{1.1^{t-r}}\right)$.
    Then
    \[ (c-1)f_t = d\sum_{r=1}^{t-1}\frac{f_t}{1.1^{t-r}} = \sum_{r=1}^{t-1}\frac{dk^{t-r}f_r}{1.1^{t-r}},\]
    so there must be some round $r < t$ such that the number of fragments allocated in $B$ in round $r$ is at least $\frac{dk^{t-r}}{1.1^{t-r}}$.
    Let $r(B)$ be the largest such $r$ (i.e. the most recent round in which this occurred), and ``charge'' $\delta(B)$ to the pair $(r(B), t - r(B))$. 
    
    For each pair $(r, i)$ define $C_{r, i}$ to be the total amount charged to that pair by all round-$(r + i)$ blocks. By construction,
    \[ \sum_{r = 0}^{R} \sum_{B \in \mathcal{B}_r} \delta(B) = \sum_{(r, i)} C_{r, i}, \]
    so we will focus the rest of the proof on bounding the expectation of the latter sum.

    Let $\mathcal{B}_{r, i}$ be the set of blocks $B \in \mathcal{B}_{r + i}$ with the property that $B$ received at least $d k^i /  1.1^i$ fragments from round-$r$ requests. 
    Note that since $\alg$ allocates at most $M/f_r$ fragments in round $r$,
    \begin{equation} \label{eq:blocks}
    |\mathcal{B}_{r, i}| \le  \frac{M/ f_r}{d k^i / 1.1^i}  = \frac{M1.1^i}{dk^if_r} = \frac{M1.1^i}{df_{r+i}}.
    \end{equation}
    Any block that charges value to $C_{r , i}$ must be in $\mathcal{B}_{r, i}$. Furthermore, in order for a block $B \in \mathcal{B}_{r, i}$ to charge value to $C_{r, i}$, all of the $d k^i / 1.1^i$ size-$f_r$ fragments in $B$ must get deallocated by the end of round $r + i$. These fragments come from at least $d k^{i-1} /  1.1^i$ \emph{unique requests}, since each request gets exactly $k$ fragments.
    Let $Y_{B, r, i}$ be the number of \emph{live} round-$r$ requests in $B$ at the end of round $r+i$.
    
    By \Cref{lem:na-chernoff} the probability that all these requests are freed by round $r + i$ is
    
    \begin{equation} \label{eq:prob}
    \Pr[Y_{B, r, i} = 0] \leq 2^{-\Omega(d k^{i-1} 0.9^i / 1.1^i)}.
    \end{equation}
    
    Combining \eqref{eq:blocks} and \eqref{eq:prob}, the expected number of blocks in $\mathcal{B}_{r, i}$ that charge anything to $C_{r, i}$ is at most 
    \begin{align*}
        & |\mathcal{B}_{r, i}| \cdot 2^{-\Omega(d k^{i-1} 0.9^i / 1.1^i)} \\
        & \leq  \frac{M1.1^i}{df_{r+i}} 2^{-\Omega(d k^{i-1} 0.9^i /  1.1^i)}. \\
        \end{align*}
    If a block $B$ does charge $\delta(B)$ to $C_{r, i}$, then since $\delta(B) \le f_{r+i}$ by Lemma~\ref{lem:delta-cap}, the charge can be at most $f_{r+i}$.
    
    Thus,
    \[\E[C_{r, i}] \le \frac{m 1.1^i}{d} 2^{-\Omega(d k^{i-1} 0.9^i / 1.1^i)}.\]
    Because $k \geq 2$, $k^{i-1} 0.9^i / 1.1^i \geq i$ for large enough $i$.
    Therefore, for $c$ large enough (and therefore $d$ large enough), this implies
    \[\E[C_{r, i}] \le \frac{m}{8 \cdot 2^i}.\]
    Finally, summing over all pairs $(r, i)$, we get
    \begin{align*} 
            \E\left[\sum_{(r, i)} C_{r, i}\right] & \le \sum_{r \in [0, R]} \sum_{i \ge 0} \frac{\m}{8 \cdot 2^i} \\
        & \le \sum_{r \in [0, R]} M/ 4 \\
        & \le m R / 4.
    \end{align*}
\end{proof}

\begin{proof}[Proof of \Cref{thm:main-lower-reduced}]
By \Cref{lem:psi-lb-hwm}, 
\[\hwm(\alg,\adv) \geq \Omega\left(\sum_{B \in \mathcal{B}_{R}} \psi(B)\right),\]
which by \Cref{cor:psi-delta-tricky} is at least
\[\frac{3}{4}MR - \sum_{r = 1}^{R} \sum_{B \in \mathcal{B}_r} \delta(B).\]
By \Cref{lem:delta-small,lem:alloc-size}, this has expected value at least
\[\frac{3}{4}MR - \frac{1}{4}MR \ge \Omega(MR) = \Omega(\advmem \log_k(\advreq)).\]
This completes the proof of the theorem.
\end{proof}
\section{Lower Bounds for Aggregate Fragmentation}\label{sec:agg}

In this section, we develop (what will turn out to be) tight lower bounds for algorithms using $k$-aggregate fragmentation. As we will see, these lower bounds depend heavily on the machinery already developed in Section \ref{sec:lower-per-request}. 

We begin with the following bound, which is tight for $k \ge 1 + \Omega(1)$. 
\begin{restatable}{theorem}{fragmento}\label{thm:fragmento}
    For any $k$-expected-aggregate fragmentation algorithm $\alg$ with $k > 1$, and $\Mmin,\Mmax$ such that $\Mmin \leq \Mmax$, 
    there exists $\workload$ with $\M(W) \in [\Mmin, \Mmax]$ for which 
    \[\comp(\alg,\workload) \geq \Omega(\log_{\max(k, 2)}\log(M_{max}/M_{min})).\]
\end{restatable}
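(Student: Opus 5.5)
We follow the phased construction sketched in the technical overview, using Theorem~\ref{thm:main-lower} (for per-request fragmentation with parameter $2k'$, where $k' = \max(k,2)$) as the subroutine. Set $\rho = \Mmax/\Mmin$. If $\rho$ is at most a large constant there is nothing to prove, so assume $\rho$ is large. Set $Q = \Theta(\log \rho)$, rounded so that $\Mmin \cdot Q \le \Mmax$ (WLOG after replacing $\rho$ by $\rho/\log\rho$, which changes $\log\log\rho$ only by a constant). Set $L = \lfloor \log(\Mmax/(\Mmin Q)) \rfloor - O(1) = \Theta(\log\rho)$ and choose constants so that $Q \le L$.

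In phase $r = 1,\dots,L$ we run the workload $V_r := V(\Mmin 2^r Q, Q)$ of Theorem~\ref{thm:main-lower}, scaled so that its volume is $\Theta(\Mmin 2^r Q)$, which is at least $\Mmin$. Since the algorithm $\alg$ is fixed, we can compute for each request of $V_r$ (over the randomness of both $V_r$ and $\alg$, conditioned on the history so far) the expected number of fragments into which $\alg$ breaks it. Call a request \emph{heavy} if this expectation exceeds $2k'$. If, with the history fixed, some request of $V_r$ would be heavy, we stop $V_r$ right after that request $q_r$ is made, free every other live request of $V_r$, and move on to phase $r+1$. Otherwise $\alg$ acts on $V_r$ as an expected-$2k'$ per-request fragmenting algorithm, and we stop after $V_r$ completes.

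Here is the main obstacle. Theorem~\ref{thm:main-lower} fixes a single randomized workload and applies to algorithms whose per-request expectation is bounded unconditionally. Our history is random, and the heaviness test is adaptive. To handle this, I would define heaviness with respect to the adversary's distribution: the adversary simulates the oblivious workload $V_r$, and a request counts as heavy if its expected fragment count, conditioned on the prefix of the workload seen so far (averaged over $\alg$'s coins only), exceeds $2k'$. If no prefix of $V_r$ ever produces a heavy request, then conditioned on the earlier phases $\alg$ is an expected-$2k'$ per-request algorithm on $V_r$. The reduction of Appendix~\ref{app:reductions} then applies, with the extra live requests $q_1,\dots,q_{r-1}$ simply occupying memory, which can only hurt $\alg$. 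This gives $\E[\hwm] \ge \Omega(\Mmin 2^r Q \log_{2k'} Q)$. If instead the stopping rule cuts $V_r$ short, that is harmless, since we then move on to the next phase rather than using $V_r$'s bound.

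To close the argument there are two cases. If we terminate in a phase $r$ with no heavy request, the volume high-water mark is at most $\sum_{i\le r} O(\Mmin 2^i Q) = O(\Mmin 2^r Q)$. This lies in $[\Mmin,\Mmax]$ by the choice of $L$, so the competitive ratio is $\Omega(\log_{2k'} Q) = \Omega(\log_{k'}\log\rho)$. Otherwise all $L$ phases end with heavy requests $q_1,\dots,q_L$ alive simultaneously, so the expected number of live fragments exceeds $2k'L$. Meanwhile the request high-water mark is at most $L + Q \le 2L$, and the volume is still at most $\Mmax$. Expected-$k$ aggregate fragmentation therefore allows at most $k\cdot 2L \le 2k'L$ expected fragments, which is a contradiction, so this case cannot occur. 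Since the quantity being compared is an expectation, linearity of expectation over the $L$ heavy requests is all we need. The bound $\Omega(\log_{\max(k,2)}\log(\Mmax/\Mmin))$ follows.
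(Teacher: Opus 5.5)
Your construction and endgame are the paper's. The difference is that you argue directly, with a case split on where the construction stops, whereas the paper argues by contradiction. The direct split leaves a hole exactly at the obstacle you flag.

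\textbf{The gap.} Take your final definition: heaviness is conditioned on the realized prefix and averaged over $\alg$'s coins, and the adversary simulates $V_r$ at random. Then there is a third case besides ``no prefix of $V_r$ is ever heavy'' and ``the run is cut short.'' Given the history, some prefixes of $V_r$ do end in heavy requests, but the realized run avoids them all, completes $V_r$, and the construction stops there.
\begin{itemize}
\item Your hypothesis for invoking Theorem~\ref{thm:main-lower} (no heavy prefix) fails in this case.
\item Even a lower bound on $\E[\hwm]$ over the full $V_r$ would not transfer to this sub-event. The realized run is $V_r$ conditioned on avoiding the heavy prefixes, so its frees are no longer the uniform ones Lemma~\ref{lem:na-chernoff} needs. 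Also, the cost may be concentrated on exactly the runs you cut short.
\end{itemize}
So the $\Omega(\log_{2k'}Q)$ ratio you claim for ``terminate with no heavy request'' is unjustified.

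Your first definition (heaviness of a request slot, averaged over $V_r$'s coins given the history) repairs that case but breaks the other one. The all-heavy outcome becomes a random event that depends on the realized prefixes. Linearity of expectation then no longer gives the contradiction. Conditioning on later phases being heavy can select prefixes of $V_r$ on which $q_r$ has few fragments, so every realization in that event might satisfy the aggregate bound.

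\textbf{Two repairs.} Either works.
\begin{itemize}
\item \emph{Steer deterministically.} With the history fixed (it will now always be deterministic), if some prefix of $V_r$ ends in a heavy request, play exactly that prefix and move on. Otherwise play all of $V_r$ at random and stop. The stopping phase is then a function of $\alg$ alone, and the proof of Theorem~\ref{thm:main-lower} (via Appendix~\ref{app:reductions}) applies there. The all-heavy outcome is a single fixed workload $W$ with more than $2k'L \ge kQ(W)$ expected live fragments, which violates the aggregate bound.
\item \emph{Argue by contradiction, as the paper does.} Assume $\E[\hwm(\alg,W)] = o(M(W)\log_{k'}\log\rho)$ for every $W$ with $M(W)\in[\Mmin,\Mmax]$. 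Fix any history $h$ and average over realizations $v$ of $V_r$; each $h\circ v$ is a legal workload in range. This shows that $\alg$ restricted to $V_r$ has small expected high-water mark, so some request slot of $V_r$ has expected fragment count above $2k'$. Hence every phase ends with a heavy request with probability $1$, and there is no terminating case to analyze. The expected total fragment count of $q_1,\dots,q_L$ then exceeds $2k'L$. This contradicts the per-realization bound $kQ(W)\le 2kL$ averaged over realizations.
\end{itemize}
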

\begin{proof}
We may assume WLOG that $k \ge 2$. We may also assume WLOG that $\log(\Mmax/\Mmin) \leq \Mmin$ since increasing $\Mmin$ so that $\log(\Mmax/\Mmin) \leq \Mmin$ does not asymptotically change the quantity $\log (\Mmax / \Mmin)$, and therefore does not asymptotically change the guarantees of the theorem. 

Let $\Lambda = \Mmax / \Mmin$. Let $\alg$ be an algorithm with $\E[\hwm(\alg,\workload)] \leq o(\Mwork \log_k\log \Lambda)$ for all workloads $\workload$ with $\M(\workload) \in [\Mmin, \Mmax]$.
We will show that there exists a workload $\workload$, with $\M(\workload) \in [\Mmin, \Mmax]$, with some request high-water mark $\Q$ at the end of which $\alg$ has more than $kQ$ (expected) fragments that are simultaneously live, therefore contradicting the fact that $\alg$ uses $k$-expected aggregate fragmentation. Throughout the proof, we will use $\fragnum(r, \mathcal{A})$ for a request $r$ to denote the number of fragments that the algorithm $\mathcal{A}$ breaks $r$ into.

    We iteratively use \Cref{thm:main-lower} in rounds to build the desired workload $W$.
    Set $R = \log{\Lambda}$ and $M_r = \Mmin\cdot 2^{r-1}$ for $r=1,\ldots,R$.
    By \Cref{thm:main-lower}, there exists a workload $V_r$ with $\M(V_r) = M_r$ and $\Q(V_r) = R$ such that any algorithm with $\E[\hwm(\alg,V_r)] \leq o(\M(V_r) \log_{k+1} \Q(V_r)) = o(M_r \log_{2k+1} R)$ must have some request $q_r$ with $\E[\fragnum(q_r,\alg)] > 2k$. In round $R$ of the workload $W$, we run $V_r$ until we perform a request $q_r$ satisfying $\E[\fragnum(q_r,\alg)] > 2k$ (we will argue shortly that such a request must occur). When $q_r$ is requested, we stop running $V_r$ and we free all other requests from $V_r$, ending the round. 
    
    To analyze the workload $W$, we must first argue that each $q_r$ exists. To see this, observe that the volume high-water mark from rounds $1$ to $r$ is at most $\sum_{i=1}^r M_i \leq 2 M_r$.
    Therefore, by assumption, $\alg$ has memory high-water mark $o(\Mwork \log_k\log \Lambda)  \leq o(2M_r \log_{2k+1}  R)$. It follows by the definition of $V_r$ that there does, in fact, exist some request $q_r$ in round $r$ satisfying $\E[\fragnum(q_r,\alg)] > 2k$.

    Having proven that the requests $q_1, q_2, \ldots, q_R$ exist, we can now obtain a contradiction as follows. At the end of round $R$, the live requests are $q_1, q_2, \ldots, q_R$ (we call these the \emph{retained requests}). Recalling that each $V_r$ has request high-water mark $Q(V_r) = R$ (by construction), the overall request high-water mark $Q(W)$ is at most $2R$ (accounting both for the high-water mark of a given $V_r$, plus the up to $R$ retained requests that may also be live). Each retained request $q_i$ has $\E[\fragnum(q_i, \mathcal{A})] > 2k$, so the fragment high-water mark at the end of round $R$ is more than $2Rk$ in expectation. This contradicts the fact that $\alg$ uses expected-$k$ aggregate fragmentation.

\end{proof}

In the case where $k = 1 + \epsilon$ for some $\epsilon \in (0, 1)$, we can prove a separate lower bound of $\Omega(\log \epsilon^{-1})$ on the competitive ratio. 
\begin{lemma}
Consider parameters $\epsilon \in (0, 1)$ and $M \in \mathbb{N}$ satisfying $M \ge \epsilon^{-1}$. Let $\mathcal{W}$ denote the set of workloads $W$ with volume high-water mark $M(W) = \Theta(M)$. Any (randomized) online algorithm using $(1 + \epsilon)$-expected aggregate fragmentation must incur a competitive ratio of $\Omega(1 + \log \epsilon^{-1})$ on some workload $W \in \mathcal{W}$. 
\label{lem:logepslower}
\end{lemma}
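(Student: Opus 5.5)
The plan is to reuse the ``retain a heavily fragmented request and restart'' construction from the proof of Theorem~\ref{thm:fragmento}, with two changes. First, the role of $\log \Lambda$ is played by a parameter $Q = \Theta(\epsilon^{-1})$. Second, the volume does not grow geometrically across rounds, since here we need $\M(W) = \Theta(M)$ rather than a range of volumes. The $\Omega(1)$ term is trivial, so assume $\epsilon$ is smaller than a suitable constant. Set $Q$ to be the largest power of $2$ with $Q \le 1/(8\epsilon)$; note $Q \le M$ because $M \ge \epsilon^{-1}$. Suppose for contradiction that $\alg$ uses $(1+\epsilon)$-expected aggregate fragmentation and has competitive ratio $o(\log Q)$ on all workloads in $\mathcal{W}$.

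The workload runs in rounds $r = 1, \ldots, L$ with $L = Q$. In round $r$ we execute the workload $V_r = V(M', Q)$ of Theorem~\ref{thm:main-lower} with $k = 2$, where $M' = \Theta(M)$ is chosen so that the total volume stays $\Theta(M)$ (see the last paragraph). Call a request $q \in V_r$ \emph{heavily fragmented} if $\E[\fragnum(q,\alg)] > 2$, i.e.\ $q$ has more than one partial fragment in expectation. If $V_r$ contained no such request, then $\alg$ would use expected-$2$ per-request fragmentation on it. Theorem~\ref{thm:main-lower} would then give a competitive ratio $\Omega(\log Q) = \Omega(\log \epsilon^{-1})$, contradicting the assumption. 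So a heavily fragmented $q_r$ exists. As before, we stop $V_r$ as soon as $q_r$ is requested, free everything in $V_r$ except $q_r$, and move to the next round. Whether $q_r$ is heavily fragmented depends only on the specification of $\alg$, not on its coins, so the workload is well defined.

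The counting step runs in terms of partial fragments. At the end, the $L$ retained requests have more than $L = Q$ live partial fragments in expectation. The request high-water mark is at most $Q + L = 2Q$, so the allowed number of live partial fragments is at most $\epsilon \cdot 2Q \le 1/4$. Since $Q \ge 1$, this is a contradiction. In fact far fewer rounds would suffice, and I would pick $L$ as small as the argument allows, e.g.\ $L = 1$: a single retained request already has more than one expected partial fragment, while the budget is $\epsilon(Q+1) < 1$. With $L = 1$ the construction collapses to a single run of $V(M,Q)$ plus one retained request.

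The step I expect to be the main obstacle is controlling the volume, so that the final competitive-ratio comparison is against a workload with $\M(W) = \Theta(M)$. Each round reuses $V(M', Q)$, and retained requests may have size up to $M'$. With $L = 1$ this is harmless: the total volume is at most $2M'$, and the memory high-water-mark lower bound from the (contradicted) round still holds up to a factor of $2$. So I would take $L = 1$, or a constant number of rounds, and $M' = M$. The remaining check is that in the counting step the aggregate budget $\epsilon \cdot \Q(W)$ really is below one expected partial fragment. This is exactly where $Q \le 1/(8\epsilon)$ is used, and where the hypothesis $M \ge \epsilon^{-1}$ enters to guarantee that $V(M,Q)$ exists with $Q \le M$.
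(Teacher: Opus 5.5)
Your proposal is correct and, once you collapse to $L=1$, it is essentially the paper's argument. With $\Q(\workload)=\Theta(\epsilon^{-1})$, the partial-fragment budget $\epsilon\,\Q(\workload)<1$ forces fewer than two expected fragments per request, so Theorem~\ref{thm:main-lower} with constant $k$ applied to $V(M,Q)$ yields $\Omega(\log \epsilon^{-1})$. The multi-round retain-and-restart scaffolding, and the volume concern it raises, are unnecessary: the budget is already violated at the moment a heavily fragmented request is allocated inside $V(M,Q)$ itself.
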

\begin{proof}
Note that the algorithm can, in expectation, create at most $O(1)$ partial fragmentations on any workload $W$ with $Q(W) < \epsilon^{-1}$. On such a workload, the algorithm is an $O(1)$-expected per-request fragmentation algorithm. The claimed result therefore follows from Theorem \ref{thm:main-lower}.
\end{proof}

We can now prove \Cref{cor:epslower-final}. Together, \Cref{thm:fragmento} and \Cref{cor:epslower-final} imply that all of our upper bounds in Section \ref{sec:aggupper} are tight.

\epslower*

\begin{proof}
The corollary follows by combining \Cref{lem:logepslower} and \Cref{thm:fragmento}, noting that \Cref{thm:fragmento} immediately applies to the regime where $1 \leq k \leq 2$ (since this gives a stronger condition on the algorithm than $k\geq 2$). 
\end{proof}
\section{Upper Bounds for Aggregate Fragmentation}\label{sec:aggupper}

In this section, we construct optimal allocation algorithms using $k$-aggregate fragmented allocation. All of the bounds in this section are asymptotically tight by \Cref{thm:fragmento} and \Cref{cor:epslower-final}. 

We begin with the setting where $M(W)$ is known, and where $k = 1 + \epsilon$ for some $\epsilon \in (0, 1)$. The algorithm that we construct in this setting (which comes with a surprisingly subtle analysis) ends up serving as the main building block for all of the other algorithms that we construct later on. 

\begin{theorem}[$(1 + \epsilon)$-Aggregate-Fragmented Allocation with Known $M$]
Consider parameters $\epsilon \in (0, 1)$, $M \in \mathbb{N}$, and $q \ge 2$. Let $\mathcal{W}$ denote the set of workloads $W$ with volume high-water mark $M(W) = \Theta(M)$, and where every allocation has a power-of-$q$ size. There is a (deterministic) online allocation algorithm $\mathcal{A}^\epsilon_M$, using $(1 + \epsilon)$-aggregate fragmentation, that achieves competitive ratio $O(1 + \log_q \epsilon^{-1})$ on every workload $W \in \mathcal{W}$. 
\label{thm:knownMupper}
\end{theorem}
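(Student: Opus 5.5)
Since $M$ is known, we split requests into two classes by size. Call a request \emph{small} if its size is at most $\epsilon M / L$ for a threshold $L$ to be chosen (up to constants, $L = \Theta(1)$ works because the $\epsilon$ already appears in the threshold). Otherwise call it \emph{large}. At most $M(W)/(\epsilon M/L) = O(L/\epsilon)$ large requests are ever live simultaneously. We treat the two classes in two separate regions of memory.

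\emph{Large requests.} Their sizes are powers of $q$ in the range $[\epsilon M/L, O(M)]$, so there are only $O(1 + \log_q(L\epsilon^{-1}))$ distinct large sizes. We allocate them unfragmented with first-fit in their own region. First-fit with $J$ distinct sizes uses $O((1+J)M)$ space \cite{randomized_LNO96}, as already invoked in the proof of \Cref{clm:upper-per}. This gives $O(M(1+\log_q \epsilon^{-1}))$ memory and no partial fragments.

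\emph{Small requests.} We use a region of $O(M)$ slots, managed so that partial fragments are rare. I would maintain the region as a sequence of free intervals and place each small request greedily, for example by filling the lowest free slots and splitting across free gaps when necessary. Because the total live volume is at most $O(M)$, a region of size $cM$ always has room if fragmentation is allowed, so memory is $O(M)$. The difficulty is bounding the number of partial fragments by $\epsilon Q(W)$. My charging idea is as follows: each partial fragment is created at the boundary of a free gap, so charge it to the live requests adjacent to that gap, or equivalently to the gap itself. To keep the number of gaps small relative to the number of live small requests, I would keep every fragment of size at least about $\epsilon M/L'$ whenever possible. That is, when a small request would need to be split into pieces smaller than a chunk size $\epsilon M/L'$, we instead place it in a fresh chunk. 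This suggests organizing the small region into chunks of size $\Theta(\epsilon M)$, grouping small requests into chunks by first-fit, and fragmenting only a request that straddles chunk boundaries. Each chunk then contributes at most one partial fragment. A fresh chunk is opened only when all existing chunks are nearly full in volume or heavily fragmented internally. Handling internal fragmentation inside chunks needs a recursive argument on chunk sizes, which is exactly where the $\log_q$ structure enters.

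The main obstacle is the accounting for partial fragments: I must show that the number of live partial fragments is at most $\epsilon Q(W)$, and not merely $\epsilon$ times the number of currently live requests. Since $Q(W)$ is the all-time maximum, we may charge against the past peak. I would try an amortized potential in which a partial fragment is created only when the number of live small requests has grown by $\Omega(1/\epsilon)$ since the last partial fragment, or when the request is at least $1/\epsilon$ times smaller than the chunk. If small requests are all tiny ($\ll \epsilon M/Q$), a chunk contains many requests, and one partial per chunk against at least $1/\epsilon$ requests per chunk gives the bound. The bad case is chunks holding few, medium-sized requests. There I would fall back to unfragmented first-fit across the $O(\log_q \epsilon^{-1})$ size classes between $\epsilon^2 M$ and $\epsilon M$, which costs the allowed $O(\log_q\epsilon^{-1})$ factor. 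Balancing these thresholds so that each regime costs $O(1+\log_q\epsilon^{-1})$ in space and at most $\epsilon Q$ in partial fragments is the step I expect to require the subtle analysis.
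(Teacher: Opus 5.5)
There is a genuine gap. Your treatment of requests of size at least $\epsilon M/L$ is fine: unfragmented first-fit over $O(1+\log_q \epsilon^{-1})$ power-of-$q$ size classes uses $O(M(1+\log_q\epsilon^{-1}))$ memory. But the small-request part, which is the whole content of the theorem, is left open, and the chunking you sketch cannot work because it is calibrated to the wrong scale. Your chunks have size $\Theta(\epsilon M)$, so there are $\Theta(\epsilon^{-1})$ of them regardless of $Q(W)$. ``One partial fragment per chunk'' therefore allows $\Theta(\epsilon^{-1})$ live partials, which exceeds $\epsilon Q(W)$ whenever $Q(W)=o(\epsilon^{-2})$.

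Charging each partial to the $\ge 1/\epsilon$ requests that once filled its chunk does not repair this. Chunks can fill and drain at different times, so those requests need not be simultaneously live, and you get no bound against the peak $Q(W)$. In the extreme case, while the request high-water mark so far, $\widetilde{Q}(t)$, is below $\epsilon^{-1}$, no partial fragment may be live at all: the workload could stop, or add one more request, right then. So every request below $\epsilon^2 M$, potentially spanning $\Theta(\log_q(\epsilon^2 M))$ size classes, must be placed unfragmented in $O(M)$ space. Your fallback, first-fit over sizes in $[\epsilon^2 M,\epsilon M]$, says nothing about these requests, and the regime boundaries you propose depend on $Q(W)$, which the algorithm does not know in advance.

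The missing idea is to scale everything by $M/\widetilde{Q}(t)$, which can be tracked online.
\begin{itemize}
\item \emph{Splitting rule.} The paper never splits a request of size at most $\epsilon^{-1}M/\widetilde{Q}(t)$, and it cuts larger ones into pieces of size at most $\epsilon^{-1}M/\widetilde{Q}(t)$.
\item \emph{Partial-fragment count.} A request of size $s$ then has fewer than $\epsilon s\widetilde{Q}(t)/M$ partial pieces. Summing over live requests, whose total volume is $O(M)$, gives $O(\epsilon Q(W))$ live partials directly, with no amortization.
\item \emph{Small fragments.} The space bound comes from counting rather than volume. At most $(1+\epsilon)\widetilde{Q}(t)$ fragments are live. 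So if a fragment of size at most $M/\widetilde{Q}(t)$ failed to fit by first-fit in a first segment of size $4M$, the gaps (each smaller than $M/\widetilde{Q}(t)$) plus the live volume would total less than $4M$, a contradiction.
\item \emph{Remaining fragments.} Every other fragment has size in $(M/\widetilde{Q}(t),\,\epsilon^{-1}M/\widetilde{Q}(t)]$, i.e., one of $O(1+\log_q\epsilon^{-1})$ power-of-$q$ classes. An inductive first-fit invariant then gives $O(M(1+\log_q\epsilon^{-1}))$ memory: type-$j$ fragments lie in the first $j+1$ segments of size $4M$. Types are defined relative to $\widetilde{Q}$ at insertion time, and ``effective sizes'' absorb the growth of $\widetilde{Q}$.
\end{itemize}
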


\begin{proof}
The algorithm $\mathcal{A}^\epsilon_M$ keeps track, at all times $t$, of the maximum number $\Qguess(t)$ of requests that are ever live simultaneously at or prior to time $t$. If, at time $t$, a request of size $s$ is made, then the algorithm considers two cases:
\begin{enumerate}
    \item If $s \le \epsilon^{-1} M / \Qguess(t)$, then the request is kept as a single fragment. 
    \item If $s > \epsilon^{-1} M / \Qguess(t)$, then the request is broken into $\lceil \epsilon s \cdot \Qguess(t) / M\rceil$ fragments of size at most $\epsilon^{-1} M / \Qguess(t)$. 
\end{enumerate}
Each fragment is then allocated according to the first-fit algorithm, which places the fragment in the earliest free region in memory whose size is large enough for the fragment. 

To simplify exposition for how we discuss fragments, define the \defn{effective size} $s(f)$ of each fragment $f$ created at time $t$ to be $|f|$ if the fragment is created by Case 1, and to be $\epsilon^{-1} M / \Qguess(t)$ if the fragment is created by Case 2. Note that the sum of the effective sizes of all live fragments is at most $2M$. 

Before analyzing the memory high-water mark, let us take a moment to verify that the algorithm $\mathcal{A}(M)$ uses at most $(1 + \epsilon)$ aggregate-fragmentation. Let $\Q$ be the largest number of requests that are ever simultaneously live. By design, the algorithm breaks each request of a given size $s$ into at most $$1 + \epsilon s \cdot \Q / M$$
fragments. Summing over the set $L$ of live requests at any given moment, the total number of fragments is at most
$$|L| + \sum_{r \in L} \epsilon |r| \cdot \Q / M \le |L| + \epsilon Q.$$
Since there are $|L|$ non-partial fragments, the number of partial fragments is at most $\epsilon Q$, as desired. 

In the rest of the proof, we bound the competitive ratio achieved by the algorithm. For $j \ge 0$ and time $t$, define $\ell_j(t)$ to be the largest power of $q$ smaller than $q^j \cdot M / \Qguess(t)$. Say that a fragment $f$ is \defn{type-$j$} if, at the time $t$ when $f$ was allocated, $j$ was the smallest $j \ge 0$ such $|f| \le \ell_j(t).$
Notably, if $|f|$ has size at most $\ell_0(t)$, then its type will be $0$. Additionally, break memory into \defn{segments}, where each segment has size $4M$.

To analyze this algorithm, we establish the following \defn{type-segmentation invariant}: Every type-$j$ fragment is placed somewhere in the first $j + 1$ segments of memory. Note that this invariant is enough to complete the proof---since every fragment has type at most $\log_q \epsilon^{-1} + 1$, it follows that the memory-high water mark $S(\mathcal{A}(M), W)$ is at most $O(M (1 + \log_q \epsilon^{-1}))$.

We emphasize that, in the above invariant, the type $j$ associated with a fragment $f$ is partially a function of \emph{when} $f$ was inserted. (If the same fragment $f$ were inserted earlier/later, it might have a completely different type.) Given this, it may seem odd that one can state an invariant that treats all objects of a given type in the same way -- nonetheless, we will see that the invariant holds. 

We prove the type-segmentation invariant by strong induction on $t$. Suppose that the invariant holds for all prior allocations, and consider a fragment $f$ allocated at time $t$.

The first case to consider is if $f$ has type $0$. Suppose for contradiction that $f$ does not get allocated in the first segment. Let $P$ denote the set of fragments that \emph{are} allocated in the first segment. Since $f$ did not get allocated in the first segment, the segment does not contain any free interval of size $M / \Qguess(t)$. But this means that the total size of the first segment is less than
$$M/\Qguess(t) + \sum_{p \in P} (|p| + M/ \Qguess(t)).$$% \le M + (|P| + 1) \cdot M / \Qguess(t).$$

To evaluate this summation, we will calculate an upper bound on the number $|P|$ of fragments in the first segment. By the definition of the algorithm and since $\Qguess(t)$ can only increase over time, the number of fragments of size less than $\epsilon^{-1} M / \Qguess(t)$ is at most $\Qguess(t)$. Thus, the number of fragments in the first segment is $|P|\leq \Qguess(t) + M/(\epsilon^{-1} M / \Qguess(t)) = \Qguess(t)\cdot (1+\epsilon)$. This means that the size of the first segment is less than $M + (|P| + 1) \cdot M / \Qguess(t) < 4M$.  But we also know that the first segment has size exactly $4M$, so we get a contradiction. 

The second case to consider is if $f$ has type $i$ for some $i > 0$. In this case, we claim that all of the fragments $f'$ currently in segment $i + 1$ have effective sizes $s(f')$ satisfying $|f| \le s(f')$. By the invariant, such a fragment $f'$ must have some type $j \ge i$. Let $t' \le t$ be the time at which $f'$ was inserted. Here there are two cases: if $|f'|$ is a power of $q$, then it must be that $|f'| = \ell_j(t')$, meaning that $s(f') \ge |f'| \ge \ell_j(t') \ge \ell_j(t)$; in this case, since $|f| \le \ell_j(t)$, we get that $|f| \le s(f')$, as desired. The other case is that $s(f') = \epsilon^{-1} \cdot M / Q(t') \ge \epsilon^{-1} \cdot M / \Qguess(t)$; but, since $f$ is allocated at time $t$, its size $|f|$ is at most $\epsilon^{-1} \cdot M / \Qguess(t)$, so once again we have $|f| \le s(f')$. Thus every $f'$ currently in segment $i + 1$ has effective size $s(f') \ge |f|$. 

Now, suppose for contradiction that $f$ does not get placed in the first $i + 1$ segments. Then, in segment $i + 1$, every free interval has size at most $|f|$. If we define $P$ to be the set of fragments in the segment, then it follows that the segment has size at most
$$|f| + \sum_{p \in P} (|p| + |f|).$$
Since $|f| \le s(p)$ for each $p \in P$, this is less than
$$|f| + \sum_{p \in P} (|p| + s(p)).$$
Since $|f| \le M$, $\sum_{p \in P} |p| \le M$, and $\sum_{p \in P} s(p) \le 2M$, it follows that that the size of the $(i + 1)$-th segment is less than $4M$, which once again is a contradiction.
\end{proof}

Theorem \ref{thm:knownMupper} assumes that every request has a power-of-$q$ size. Notice, however, that for $q = 2$, this assumption is WLOG (up to a factor-of-two change to $M$). Thus we get the following corollary, which does not make any assumptions on the sizes of requests.
\begin{corollary}
Consider parameters $\epsilon \in (0, 1)$ and $M \in \mathbb{N}$. Let $\mathcal{W}$ denote the set of workloads $W$ with volume high-water mark $M(W) = \Theta(M)$. There is a (deterministic) online allocation algorithm $\mathcal{A}^\epsilon_M$, using $(1 + \epsilon)$-aggregate fragmentation, that achieves competitive ratio $O(1 + \log \epsilon^{-1})$ on every workload $W \in \mathcal{W}$. 
\label{cor:Mknownupper}
\end{corollary}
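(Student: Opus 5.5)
The plan is a direct reduction to \Cref{thm:knownMupper} with $q = 2$: round every request up to a power of two and run $\mathcal{A}^\epsilon_{M'}$ for a suitable $M'$. The algorithm $\mathcal{A}$ processes a request of size $s$ as follows. It computes $s' = 2^{\lceil \log s\rceil}$, so that $s \le s' < 2s$. It then hands a \emph{virtual} request of size $s'$ to the algorithm $\mathcal{A}^\epsilon_{2M}$ of \Cref{thm:knownMupper}, instantiated with $q = 2$. Frees are forwarded unchanged. The rounded workload $W'$ has the same sequence of live requests as $W$, with each size at most doubled. Hence $M(W) \le M(W') \le 2M(W)$, so $M(W') = \Theta(M)$ whenever $M(W) = \Theta(M)$. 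Also $Q(W') = Q(W)$, and every size in $W'$ is a power of $2$. So $W'$ lies in the workload class to which \Cref{thm:knownMupper} applies, with $q = 2$ and parameter $2M$ (or $M$, since only constants change).

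Next, I would turn the virtual allocation into a real one. Whenever $\mathcal{A}^\epsilon_{2M}$ breaks the virtual request of size $s'$ into fragments and places them, the real request of size $s$ uses the same fragments in the same order. Each fragment is trimmed from the end of the sequence until the total is exactly $s$, and any fragment that drops to size $0$ is discarded. Each real fragment then occupies a subinterval of the corresponding virtual fragment's slots. Two consequences follow. First, real allocations are disjoint, because virtual ones are, and freeing the real request releases exactly the space the virtual algorithm frees. Second, the number of live real fragments is never more than the number of live virtual fragments. Since $Q(W') = Q(W)$, the $(1+\epsilon)$-aggregate guarantee carries over: there are at most $(1+\epsilon)Q(W)$ live fragments at any time. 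All of this is still online and deterministic, because the rounding depends only on the current request.

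Finally, $S(\mathcal{A}, W) \le S(\mathcal{A}^\epsilon_{2M}, W')$, and \Cref{thm:knownMupper} gives $S(\mathcal{A}^\epsilon_{2M}, W') = O(M(W')(1 + \log_2 \epsilon^{-1}))$. Since $M(W') = \Theta(M(W))$, this yields competitive ratio $O(1 + \log \epsilon^{-1})$. I do not expect a real obstacle here. The only points needing care are the constant-factor bookkeeping that keeps $M(W')$ in the $\Theta(M)$ range the theorem requires, and the observation that trimming a fragment never increases the fragment count.
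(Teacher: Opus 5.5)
Your proposal is correct and matches the paper's argument. The paper only remarks that for $q = 2$ the power-of-$q$ assumption in \Cref{thm:knownMupper} is without loss of generality up to a factor-of-two change in $M$; your rounding-up, forwarding, and trimming steps are exactly the details behind that remark.

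One thing to tighten is your parenthetical ``or $M$''. The $(1+\epsilon)$-aggregate limit is an exact constraint, and the fragment count in \Cref{thm:knownMupper} is bounded using $\sum_{r \in L} |r| \le M$. So you should keep the parameter at $2M$, which covers $M(W') \le 2M(W)$ when $M(W) \le M$, rather than $M$.
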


Next, we turn our attention to the setting where $M$ is not known (or is only partially known) a priori. Here, we consider two parameter regimes for $k$, the \emph{large regime} of $k \ge 2$ (Theorem \ref{thm:unknownMupper}) and the \emph{small regime} of $k \in (1, 2)$ (Theorem \ref{thm:unknownMuppereps}). As noted earlier, in each case, the competitive ratios that we obtain are optimal (by Theorem \ref{thm:fragmento} and \Cref{cor:epslower-final}). 

\begin{theorem}[$k$-Aggregate Fragmented Allocation with Unknown $M$]
Consider parameters $k \ge 2$ and $M_0 \ge 1$. There is a (deterministic) online allocation algorithm using $k$-aggregate fragmentation that achieves competitive ratio $O(1 + \log_k \log (M(W) / M_0(W)))$ on every $W$ satisfying $M(W) \ge M_0$. 
\label{thm:unknownMupper}
\end{theorem}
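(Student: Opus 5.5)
We would prove \Cref{thm:unknownMupper} by running \Cref{thm:knownMupper} (via \Cref{cor:Mknownupper}) in explicit phases. Each phase corresponds to a guess $\Mguess$ of the volume high-water mark, and the guesses grow \emph{doubly-exponentially in base $k$}. Concretely, phase $i \ge 0$ uses the guess
\[
\Mguess_i = M_0 \cdot 2^{k^{i}}
\]
(with phase $0$ taking $\Mguess_0 = 2M_0$, say). When phase $i$ begins, we reserve a fresh region of memory of size $\Theta(\Mguess_i \cdot (1 + \log \epsilon_i^{-1}))$, placed directly after the regions of all earlier phases. Inside it we run a fresh copy of $\mathcal{A}^{\epsilon_i}_{\Mguess_i}$, where $\epsilon_i$ is chosen so that $\log \epsilon_i^{-1} = \Theta(\log (\Mguess_i/\Mguess_{i-1})) = \Theta(k^{i})$. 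Phase $i$ ends, and phase $i+1$ begins, as soon as the current live volume exceeds $\Mguess_i$. Requests from old phases stay where they are until freed, and all new requests are served in the new region. Old regions are never reused.

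\textbf{Memory bound.} Suppose the workload ends in phase $i^*$. Then $M(W) \ge \Mguess_{i^*-1}$, which gives $k^{i^*-1} \le \log(M(W)/M_0)$ and hence $i^* \le 1 + \log_k \log(M(W)/M_0)$. The live volume that phase $i$ ever sees is at most $\Mguess_i + \Mguess_{i-1} = O(\Mguess_i)$, counting requests inherited from earlier phases as well. So \Cref{cor:Mknownupper} applies within region $i$.

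The problem is that choosing $\log \epsilon_i^{-1} \approx k^i$ makes region $i$ have size about $\Mguess_i k^i$, and since $\Mguess_{i^*}$ may be far larger than $M(W)$, this is too expensive. We fix this by doing \emph{level doubling inside each phase}. Within phase $i$, we run sub-phases $j$ with guesses $2^j M_0 \cdot 2^{k^{i-1}}$, and each sub-phase runs \Cref{cor:Mknownupper} with the same constant $\epsilon$ per phase, namely $\epsilon \approx$ the fragment budget divided by $k^{i}$. The geometric sum over $j$ is then dominated by the last sub-phase, which has guess $\Theta(M(W))$. That sub-phase costs $O(M(W)(1+\log \epsilon^{-1}))$ and, carried through, contributes $O(M(W))$ times the number of outer phases. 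Taking $\epsilon$ constant ($k \ge 2$ gives slack), every sub-phase in every phase $i \le i^*$ has a guess of at most $2M(W)$. The total memory is then $O(M(W)) \cdot O(i^*) = O(M(W)(1 + \log_k \log (M(W)/M_0)))$. This is why phases indexed by $k^i$ matter: the memory cost is linear in the number of phases.

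\textbf{Fragment bound (main obstacle).} The hard part is verifying $k$-aggregate fragmentation. Within a copy of \Cref{thm:knownMupper}, the partial fragments are bounded by $\epsilon \widetilde{Q}$, where $\widetilde{Q}$ is the request count seen by that copy. Summing over copies naively would give a factor equal to the number of copies, which is too much. Instead, the fragment size threshold $\epsilon^{-1}\Mguess/\widetilde Q$ should be set using the \emph{global} running request high-water mark $\widetilde{Q}(t)$. Each live request $r$ then has at most $1 + \epsilon|r|\widetilde Q/\Mguess$ fragments, where $\Mguess$ is the guess under which $r$ was allocated.

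Live volume from phase $i$ is at most $O(\Mguess_i)$ but may sit alongside later phases. The key sum is $\sum_i \epsilon\, \mathrm{vol}_i/\Mguess_i$, which telescopes geometrically within a phase. Across phases, the surviving volume from phase $i$ is at most $M(W)$. We expect the total number of fragments to be $Q + O(\epsilon Q \cdot i^*)$ in the worst case. To handle this, we allow $\epsilon = \Theta(k)$ fragments per unit rather than $\epsilon < 1$, and cap the phase count using the doubly-exponential spacing: since $\Mguess_{i+1}/\Mguess_i = 2^{k^{i}(k-1)}$, older phases' contributions $\mathrm{vol}_i/\Mguess_{\mathrm{cur}}$ are negligible when measured against the current guess. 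This is the delicate accounting we would need to get right, and it is where the $\log_k$ base arises, via the $(1+\log_q \epsilon^{-1})$ flexibility of \Cref{thm:knownMupper} with $q=k$.
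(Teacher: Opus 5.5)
Your proposal has a genuine gap. The fragment accounting does not go through, and the memory accounting is inconsistent with the $\epsilon$ you end up choosing.

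\textbf{Why the fragment bound fails.} A copy of \Cref{thm:knownMupper} running with guess $G$ cuts a request $r$ into about $1+\epsilon|r|\widetilde{Q}/G$ pieces. These pieces are fixed at allocation time, measured against the guess $G$ in force then, and persist as long as $r$ is live. The volume surviving from a sub-phase can be $\Theta(G)$, so every sub-phase can leave $\Theta(\epsilon Q)$ live partial fragments, with no geometric decay. For example, in each doubling sub-phase with guess $G$, the adversary requests and frees $Q$ unit requests, then requests and keeps one request of size $G/4$. After $J\le Q$ sub-phases, only $O(Q)$ requests were ever live at once, yet about $\epsilon QJ/4$ partial fragments are live.
\begin{itemize}
\item With constant $\epsilon$, as you eventually propose, this violates $k$-aggregate fragmentation once $J=\Theta(\log(M(W)/M_0))$ is large.
\item Your own construction also signals the problem. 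With constant $\epsilon$, all sub-phases across all outer phases form a single doubling sequence, so their costs sum to $O(M(W))$. That is competitive ratio $O(1)$, which \Cref{thm:fragmento} rules out. Your claimed ``$O(M(W))\cdot O(i^*)$'' does not follow from the construction.
\item Raising the per-copy budget to $\Theta(k)$ buys only a factor $k$ of slack against an unbounded accumulation. Moreover, $\epsilon\ge 1$ is outside \Cref{thm:knownMupper} and would cut requests more finely, not less.
\item Old phases' contributions are not negligible relative to the current guess, since they were fixed relative to their own guess.
\item Taking $\epsilon\approx k/k^i$ in each of the $\approx k^i$ sub-phases of outer phase $i$ still yields $\Theta(kQ)$ partial fragments per outer phase, hence $\Theta(kQ\,i^*)$ in total.
\end{itemize}

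\textbf{The missing idea: summable budgets.} Because old fragments never go away, the per-phase budgets must be summable over all phases. The paper uses plain doubling phases $j$ with guess $M_0 2^j$ and runs $\mathcal{A}^{1/j^2}_{M_0 2^j}$ in fresh memory.
\begin{itemize}
\item Live partial fragments total at most $\sum_j Q_j/j^2\le 1.7Q$.
\item Shrinking $\epsilon_j$ polynomially in $j$ costs phase $j$ only $O(M_0 2^j(1+\log_k j^2))$, because \Cref{thm:knownMupper} is applied with $q=k$.
\item Summing over $j\le\log(2M(W)/M_0)$ gives $O(M(W)(1+\log_k\log(M(W)/M_0)))$.
\end{itemize}
No doubly-exponential outer phases are needed. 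Your structure could be repaired by taking $\epsilon\approx 1/(i^2k^i)$ throughout outer phase $i$, but that is the same summable schedule in disguise.

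Invoking $q=k$ also requires power-of-$k$ request sizes. The paper obtains these by spending $0.1k$-aggregate fragmentation to split each request into pieces within a constant factor of a power of $k$. You appeal to $q=k$ without this step. With \Cref{cor:Mknownupper} ($q=2$) you would get only $O(\log\log(M(W)/M_0))$, losing the base $k$.
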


\begin{proof}
Note that we can use $0.1 k$-aggregate fragmentation to break requests into fragments whose sizes are each within a constant factor of a power of $k$. Thus, rather than using $k$-aggregate fragmentation with arbitrarily-sized requests, we may assume without loss of generality that we are using $0.9k$-aggregate fragmentation and that every request size is a power of $k$. We will include this assumption throughout the rest of the proof.\footnote{In fact, obtaining this assumption is the only part of the proof in which we require $k$-aggregate fragmentation as opposed to, say, $1.8$-aggregate fragmentation.} We will also assume without loss of generality that $M_0$ is a power of $2$.

With these assumptions in place, we define the algorithm $\mathcal{A}$ as follows. At all times $t$, keep track of a quantity $\Mguess(t)$, defined to be $M_0$ if the volume high-water mark up to time $t$ is at most $M_0$, and defined otherwise to be the volume high-water mark up to time $t$. Whenever $\Mguess(t)$ reaches (or crosses) a power of two $M_0 \cdot 2^{j - 1}$, for some $j \ge 1$, we enter what we call \defn{phase $j$}. During phase $j$, we process requests using the algorithm $B_j := \mathcal{A}^{1/j^2}_{M_0 \cdot 2^{j}}$ (as defined in Theorem \ref{thm:knownMupper}). If, prior to phase $j$, the memory high-water mark for $\mathcal{A}$ was given by some quantity $S_j$, then during phase $j$ the algorithm $B_j$ uses the region of memory starting immediately after $S_j$. 

By Theorem \ref{thm:knownMupper}, with $q = k$, the growth $S_{j + 1} - S_j$ in memory high-water mark during phase $j$ is at most
$$O(M_0 \cdot 2^j (1 + \log_k j^2)) = O(M_0  \cdot 2^j (1 + \log_k j)).$$
Summing over $j \in [1, \log (2M(W) / M_0)]$ gives a total memory high-water mark of 
$$\sum_{j \in [1, \log (2M(W) / M_0)]} O(M_0 \cdot 2^j (1 + \log_k j)) = O(M(W) (1 + \log_k \log (M(W) / M_0))),$$
as desired.

Now, let us turn to bounding the total number of partial fragments. Let $Q_j$ (resp.~$Q$) be the maximum number of simultaneously live requests that are allocated by $B_j$ (resp.~by the entire algorithm). Then, by Theorem \ref{thm:knownMupper}, the total number of partial fragments that are ever simultaneously live in $B_j$ is at most 
$$Q_j / j^2.$$
The total number of partial fragments that are simultaneously live across the entire algorithm is therefore at most
$$\sum_j Q_j / j^2 \le Q \sum_j 1/j^2 \le 1.7 Q \le 0.9 k Q,$$
as desired.
\end{proof}

\begin{theorem}[$(1 + \epsilon)$-Aggregate Fragmented Allocation with Unknown $M$]
Consider parameters $\epsilon \in (0, 1)$ and $M_0 \ge 1$. There is a (deterministic) online allocation algorithm $\mathcal{A}$, using $(1 + \epsilon)$-aggregate fragmentation that achieves competitive ratio $O(1 + \log \log (M(W) / M_0(W)) + \log \epsilon^{-1})$ on every workload $W$ satisfying $M(W) \ge M_0$. 
\label{thm:unknownMuppereps}
\end{theorem}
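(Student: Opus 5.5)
We follow the phased construction from the proof of Theorem \ref{thm:unknownMupper}, changing only the fragmentation budget given to each phase. We maintain $\Mguess(t)$ exactly as before. When $\Mguess(t)$ first reaches or crosses $M_0 \cdot 2^{j-1}$, we enter phase $j$. During phase $j$, requests are processed by
\[
B_j := \mathcal{A}^{\epsilon_j}_{M_0 \cdot 2^j}, \qquad \epsilon_j := \frac{6\epsilon}{\pi^2 j^2}.
\]
This is the algorithm of Corollary \ref{cor:Mknownupper}, i.e.\ Theorem \ref{thm:knownMupper} with $q=2$. Unlike the large-$k$ regime, we cannot spend aggregate fragmentation on rounding sizes to powers of $k$. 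So we round sizes to powers of two, which needs no fragmentation and changes the volume by at most a factor of two; requests are then allocated as powers of two inside slots of their original size. As before, $B_j$ uses the memory region starting just after the high-water mark $S_j$ reached before phase $j$.

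\textbf{Fragmentation budget.} Let $Q_j$ be the maximum number of simultaneously live requests handled by $B_j$. By Theorem \ref{thm:knownMupper}, $B_j$ has at most $\epsilon_j Q_j \le \epsilon_j Q$ live partial fragments at any time. Summing over phases gives at most
\[
\epsilon Q \cdot \frac{6}{\pi^2}\sum_j \frac{1}{j^2} = \epsilon Q
\]
partial fragments in total. Hence the whole algorithm uses $(1+\epsilon)$-aggregate fragmentation.

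One subtlety must be checked: the argument in Theorem \ref{thm:knownMupper} uses $\Qguess(t)$ local to $B_j$. Requests from earlier phases that are still live occupy earlier memory regions, not $B_j$'s region. Running each $B_j$ only on the requests it receives is therefore valid, and its count $Q_j$ is at most the global $Q$.

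\textbf{Memory high-water mark.} Requests in phase $j$ have live volume $O(M_0 2^j)$, so $B_j$ sees a workload with volume high-water mark $\Theta(M_0 2^j)$, possibly after padding, as in Theorem \ref{thm:unknownMupper}. Theorem \ref{thm:knownMupper} then bounds the growth $S_{j+1}-S_j$ by
\[
O\!\left(M_0 2^j \bigl(1 + \log \epsilon_j^{-1}\bigr)\right) = O\!\left(M_0 2^j \bigl(1 + \log j + \log \epsilon^{-1}\bigr)\right).
\]
We sum over $j \le J := \log(2M(W)/M_0)$. The geometric weights $2^j$ concentrate the sum on the last terms, giving
\[
O\!\left(M(W)\bigl(1 + \log J + \log \epsilon^{-1}\bigr)\right) = O\!\left(M(W)\bigl(1 + \log\log(M(W)/M_0) + \log \epsilon^{-1}\bigr)\right).
\]
Dividing by $M(W)$ gives the claimed competitive ratio.

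\textbf{Main obstacle.} The delicate step is justifying that Theorem \ref{thm:knownMupper} applies to each phase in isolation. Its hypothesis is that $M(W)=\Theta(M)$; a phase's sub-workload may have much smaller volume, while requests from other phases remain live elsewhere. We would first observe that the proof of Theorem \ref{thm:knownMupper} really only uses $M(W) \le O(M)$, which is enough for the segment-counting argument to give the $O(M(1+\log\epsilon^{-1}))$ bound. We would also check that the sum of effective sizes still stays $O(M)$ when $\Qguess$ is local to the phase. If that fails, the fallback is to feed $B_j$ a padding request to force $M=\Theta(M_0 2^j)$, which costs one extra request per phase. That padding is absorbed into $Q$ in the fragment count, because each phase contributes at most $Q$ live requests anyway.
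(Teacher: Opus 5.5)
Your proposal is correct and follows the paper's proof essentially verbatim. It uses the same doubling phases on $\Mguess(t)$, runs $\mathcal{A}^{\epsilon_j}_{M_0 2^j}$ in a fresh memory region with $\epsilon_j \propto \epsilon/j^2$ (the paper takes $\epsilon/(2j^2)$ where you take $6\epsilon/(\pi^2 j^2)$), and sums the per-phase bounds from Theorem~\ref{thm:knownMupper}. Your extra check that Theorem~\ref{thm:knownMupper}'s argument needs only an upper bound of $O(M)$ on live volume is a point the paper leaves implicit, and the check succeeds because phase-$j$ live volume stays below $M_0 2^j$, so the padding fallback is unnecessary.
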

\begin{proof}
We assume without loss of generality that every request has a power-of-two size. We then define the algorithm $\mathcal{A}$ as follows. At all times $t$, keep track of a quantity $\Mguess(t)$, defined to be $M_0$ if the volume high-water mark up to time $t$ is at most $M_0$, and defined otherwise to be the volume high-water mark up to time $t$.  Whenever $\Mguess(t)$ reaches (or crosses) a power of two $M_0 \cdot 2^{j - 1}$, for some $j \ge 1$, we enter what we call \defn{phase $j$}. During phase $j$, we process requests using the algorithm $B_j := \mathcal{A}^{\epsilon/(2j^2)}_{M_0 \cdot 2^{j}}$ (as defined in Theorem \ref{thm:knownMupper}). If, prior to phase $j$, the memory high-water mark for $\mathcal{A}$ was given by some quantity $S_j$, then during phase $j$ the algorithm $B_j$ uses the region of memory starting immediately after $S_j$. 

By Theorem \ref{thm:knownMupper}, with $q = 2$, the growth $S_{j + 1} - S_j$ in memory high-water mark during phase $j$ is at most
$$O(M_0 \cdot 2^j \cdot (1 + \log (2 j^2 / \epsilon))) = O(M_0 \cdot 2^j (1 + \log j + \log \epsilon^{-1}).$$
Summing over $j \in [1, \log (2M(W) / M_0)]$ gives a total memory high-water mark of 
$$\sum_{j \in [1, \log (2M(W) / M_0)]} O(M_0 \cdot 2^j (1 + \log j)) = O(M(W) (1 + \log \log (M(W) / M_0) + \log \epsilon^{-1})),$$
as desired.

Finally, we bound the total number of partial fragments. Let $Q_j$ (resp.~$Q$) be the maximum number of simultaneously live requests that are allocated by $B_j$ (resp.~by the entire algorithm). Then, by Theorem \ref{thm:knownMupper}, the total number of partial fragment that are ever simultaneously live in $B_j$ is at most 
$$Q_j\cdot \epsilon / (2j^2).$$
Thus the total number of partial fragments that are simultaneously live across the entire algorithm is at most
$$\sum_j Q_j \cdot \epsilon / (2 j^2) \le \frac{Q \epsilon}{2} \sum_j 1/j^2 < \epsilon Q,$$
as desired.
\end{proof}

\section*{Acknowledgements}
This work was supported by NSF grants
CNS-2504470,
CCF-2106999,
CCF-2118620,  
CCF-2118832, 
CCF-2106827, and 
CCF-2247577. 
Any opinions, findings, and conclusions or recommendations expressed in this material are those of the authors and do not necessarily reflect the views of the National Science Foundation.

Michael Bender was supported in part by the John L. Hennessy Chaired Professorship and Sandia National Laboratories.
Mart\'{\i}n Farach-Colton was supported in part by the Leonard J. Shustek professorship.
Hanna Koml\'os was supported in part by the Graduate Fellowships for STEM Diversity. This work was done in part while Hanna Koml\'os was visiting the Simons Institute for the Theory of Computing, and was supported in part by Google Research, Fall 2025.
William Kuszmaul was supported in part by a grant from Jane Street. Nicole Wein is supported by NSF CAREER award 2541910. This work was initiated while Nicole Wein was at DIMACS, Rutgers University, supported by a grant to DIMACS from the Simons Foundation (820931), and continued while visiting the Simons Institute.

\bibliographystyle{plain}
\bibliography{references}

\begin{appendix}
\section{Generalizing to Average Fragmentation}
\label{app:average}
In this section, we consider two natural notions of \emph{average fragmentation} as relaxations of per-request fragmentation. We provide reductions showing that they have equivalent bounds to per-request fragmentation. The idea of the reductions is to define workloads that ``artificially'' inflate these measurements of average fragmentation.

Define \defn{memory-average fragmentation} as the high-water mark of the number of live fragments divided by the number of live requests.  Our goal is to prove the following lemma:

\begin{claim} Any algorithm $\alg$ with expected memory-average fragmentation $\leq k$ also must have expected per-request fragmentation $\leq k$. 
\end{claim}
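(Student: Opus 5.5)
We argue by contraposition, using the same ``artificial inflation'' idea described at the start of this appendix. Suppose $\alg$ does \emph{not} have expected per-request fragmentation at most $k$. Then there is some workload $W$ and some request $q$ in $W$ with $\E[\fragnum(q,\alg)] > k$, where the expectation is over $\alg$'s random bits (and over $W$'s randomness, if any). From this we will build a workload $W'$ on which the expected memory-average fragmentation of $\alg$ exceeds $k$. Throughout, we read memory-average fragmentation as the maximum, over all times, of (number of live fragments)/(number of live requests).

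\textbf{Construction of $W'$.} Let $W'$ agree with $W$ up to and including the moment $q$ is requested. Immediately afterward, $W'$ frees every other live request and then stops.

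\textbf{Step 1: $\alg$ treats $q$ identically in $W$ and $W'$.} Since $\alg$ is online, its state when $q$ arrives depends only on the prefix seen so far, and the prefixes of $W$ and $W'$ coincide. For each fixed outcome of the random bits, $\alg$ therefore splits $q$ into the same number of fragments on $W'$ as on $W$. Fragments are fixed at allocation time and stay live until their request is freed. Freeing the other requests therefore does not change the number of live fragments belonging to $q$.

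\textbf{Step 2: bound the fragmentation of $W'$.} After the frees, exactly one request, $q$, is live, and exactly $\fragnum(q,\alg)$ fragments are live. The ratio at that moment is thus $\fragnum(q,\alg)$. The memory-average fragmentation of $\alg$ on $W'$ is a maximum over times, so it is at least this value pointwise in the random bits. Taking expectations gives
\[
\E[\text{memory-average fragmentation of } \alg \text{ on } W'] \;\ge\; \E[\fragnum(q,\alg)] \;>\; k,
\]
which contradicts the hypothesis on $\alg$.

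The only delicate point is Step 1, namely checking that the comparison survives randomization. This is why we compare expectations pointwise over $\alg$'s random bits on a common prefix, rather than reasoning about realized runs. If $W$ is itself randomized, we first fix an outcome of $W$'s randomness for which $\E[\fragnum(q,\alg)]>k$ conditionally; such an outcome exists by averaging.

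A second caveat concerns the definition. If memory-average fragmentation were instead read as (fragment high-water mark)/(request high-water mark), the single-request trick above would not suffice, because the earlier request peak still counts in the denominator. In that case we would repeat the prefix: we would run many disjoint copies of the prefix (each followed by freeing everything except its copy of $q$), so that the retained copies of $q$ dominate the request count. This is the same accumulation of heavily fragmented requests used in the proof of \Cref{thm:fragmento}.
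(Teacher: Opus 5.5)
Your main argument is correct and is exactly the paper's proof: truncate $W$ right after $q$ is requested, free every other live request, and observe that the live-fragments-to-live-requests ratio at that moment is $\fragnum(q,\alg)$, whose expectation exceeds $k$. Your closing caveat is mistaken, though. Under the high-water-mark-ratio reading, memory-average fragmentation is just expected-$k$-aggregate fragmentation, and for that notion the claim is false: the algorithm of \Cref{thm:knownMupper} is $(1+\epsilon)$-aggregate, yet it splits a size-$M$ request into $\lceil \epsilon \Qguess(t)\rceil$ fragments. So repeating the prefix cannot rescue it, because later copies of $q$ arrive in a different algorithm state and need not be heavily fragmented; \Cref{thm:fragmento} obtains a heavily fragmented request in every round only through its competitive-ratio assumption combined with \Cref{thm:main-lower}.
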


\begin{proof} Suppose for contradiction that there is a workload $\workload$ such that under algorithm $\alg$ there exists a request $q$ with expected $>k$ fragments. We will show that there is another workload $\workload'$ under which $\alg$ incurs expected memory-average fragmentation $>k$. Define $\workload'$ to be the same as $\workload$ until right after $q$. (Note that the oblivious adversary constructing $\workload'$ knows the identity of $q$ by knowing the distribution of $\alg$.) 
Then, $\workload'$ frees all other requests except $q$. Now, the expected memory-average fragmentation is $>k$, a contradiction.
\end{proof}

Define \defn{cumulative-average fragmentation} as the total number of fragments ever created divided by the total number of requests ever allocated. Our goal is to prove the following lemma:
\begin{claim}
    If there is a workload $\workload_{per}$ on which any algorithm with expected per-request fragmentation $\leq k$ has expected memory high-water mark $\hwm$, then there is a workload $\workload_{av}$ for which any algorithm with expected cumulative-average fragmentation $\leq k$ also has expected memory high-water mark $\hwm$.
\end{claim}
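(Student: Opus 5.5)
The plan is a simulation argument. Starting from $\workload_{per}$, I will build $\workload_{av}$ so that any algorithm $\alg$ with expected cumulative-average fragmentation $\le k$, run on $\workload_{av}$, induces an algorithm $\alg'$ on $\workload_{per}$ with two properties. First, $\alg'$ has expected per-request fragmentation at most $k(1+o(1))$; the $1+o(1)$ slack can be absorbed by running the construction with $k$ slightly reduced, since only asymptotics matter. Second, $\alg'$'s memory high-water mark is at most $\alg$'s. The hypothesis on $\workload_{per}$ then gives $\E[\hwm(\alg,\workload_{av})] \ge \E[\hwm(\alg',\workload_{per})] \ge \hwm$.

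The construction uses \emph{random retained copies} with geometrically growing multiplicities. Let $q_1, q_2, \ldots$ be the requests of $\workload_{per}$ in order, and interleave the frees of $\workload_{per}$ exactly as they occur. Each request $q_j$ is replaced by a block of $T_j$ steps. Each step requests a fresh copy of size $|q_j|$ and then immediately frees it. The only exception is one copy, chosen by the (oblivious) adversary uniformly at random from the $T_j$ copies: that copy is retained and plays the role of $q_j$ until $q_j$'s free in $\workload_{per}$. The multiplicities are chosen with $T_j \ge j^2 \sum_{i<j} T_i$, so that each block dominates everything before it.

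Bookkeeping on this workload:
\begin{itemize}
\item At any time at most one non-retained copy is live. Hence the volume high-water mark is at most $2\M(\workload_{per})$ and the request high-water mark increases by at most $1$. These changes are constant-factor, which I will need the given hypothesis to tolerate; if necessary I will rescale $\hwm$ by the same factor.
\item The retained copies together with the frees reproduce $\workload_{per}$ exactly.
\item $\alg'$ places $q_j$ exactly where $\alg$ placed the retained copy. So $\alg'$ is a valid allocator whose high-water mark is at most $\alg$'s, and it is online, since it can sample the retained index itself.
\end{itemize}

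The fragmentation bound is the crux and the step I expect to be the main obstacle. Cumulative-average fragmentation must hold on every workload, in particular on the prefix of $\workload_{av}$ that ends right after block $j$. On that prefix the expected total number of fragments is at most $k \sum_{i\le j} T_i \le k\,T_j(1+1/j^2)$. Therefore the expected average number of fragments over the $T_j$ copies in block $j$ is at most $k(1+1/j^2)$. Because the retained index is uniform and hidden from $\alg$ at the moment each copy is allocated, the expected fragment count of the retained copy equals this average. This is the same trick used in the memory-average claim, where the adversary knows $\alg$'s distribution but not its coins.

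One subtlety is that $\alg$'s behavior on copy $i$ may depend on whether earlier copies in the block were retained. I will handle this by letting the adversary decide retention lazily: copy $i$ is retained with probability $1/(T_j-i+1)$ given that none was retained earlier. This keeps the retained index uniform and the decision for copy $i$ independent of $\alg$'s choice for copy $i$, so that linearity of expectation applies. The remaining checks are routine: verifying this averaging formally, and confirming that the $(1+1/j^2)$ slack and the constant-factor changes in $\M$ and $\Q$ do not affect the asymptotic conclusion.
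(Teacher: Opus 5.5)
\paragraph{The gap.} The step you flag as the crux is where the argument breaks. Write $F_i$ for the number of fragments $\alg$ gives copy $i$ of block $j$, and $J$ for the retained index. $J$ is not hidden from $\alg$. The retained copy is the one that is not freed before the next copy arrives, so when copy $i$ is allocated, $\alg$ already knows whether $J<i$. Set $a_i := \E[F_i \mid J\ge i]$. Then $\E[F_J]=\frac{1}{T_j}\sum_i a_i$. The block average you control is instead $\frac{1}{T_j}\sum_i \bigl(a_i\cdot\frac{T_j-i+1}{T_j}+\E[F_i\mid J<i]\cdot\frac{i-1}{T_j}\bigr)$. These two quantities can differ by a factor of order $T_j$. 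Take an algorithm that uses one fragment per copy, except for copy $T_j$ when nothing in the block has yet been retained; in that case it uses about $(k-1)T_j^2$ fragments (for $|q_j|$ large enough). Its expected block average, taken over the randomness of $\workload_{av}$, stays at most $k$, yet $\E[F_J]\approx (k-1)T_j$. Lazy retention cannot repair this, because it produces exactly the same joint law of $J$ and $\alg$'s view. The coin for copy $i$ is independent of $F_i$ given the history, but on the event $J\ge i$ it succeeds with probability $1/(T_j-i+1)$, not $1/T_j$. So linearity of expectation does not give the plain average.

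\paragraph{How to repair it.} The missing ingredient is to apply the cumulative-average constraint to each realization of $\workload_{av}$ separately, rather than averaged over realizations. This is legitimate because every realization is itself a workload. Fix the retained indices of blocks $1,\dots,j-1$, and condition $a_i$ on them as well. Apply the constraint to the prefix in which block $j$ retains only its last copy. On that realization, each copy $i$ of block $j$ is allocated with the same view as on the event $J\ge i$, so its expected count is $a_i$. Every earlier request has at least one fragment, so $\sum_i a_i\le kT_j+(k-1)\sum_{i<j}T_i$, and hence $\E[F_J]\le k+(k-1)/j^2$. The algorithm above is ruled out precisely by this realization.

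\paragraph{Comparison with the paper.} With this repair your route works, and it differs from the paper's. The paper knows $\alg$'s distribution and frees and re-requests the same request until that request's expected fragment count drops to at most $k$. Your workload is instead algorithm-independent. It also avoids the paper's assertion that finitely many repetitions suffice. That assertion needs care if $\alg$ has banked slack earlier, since $\alg$ can then fragment every repetition at $k+\delta_i$ with $\sum_i\delta_i$ below that slack. The price is that you obtain per-request fragmentation $k+(k-1)/j^2$ (or $k(1+\eta)$ with faster-growing $T_j$) rather than exactly $k$, plus constant-factor changes to $\M$ and $\Q$. You cannot ``reduce $k$'', because the budget belongs to $\alg$. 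So you need the hypothesis at a slightly larger $k$, which is harmless only in the asymptotic sense in which the claim is used.
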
 

\begin{proof} Let $\alg_{av}$ be an algorithm with cumulative-average fragmentation $\leq k$. In particular, $\alg_{av}$ has the property that for any workload and any request $q$, if the workload were to stop at any point and repeatedly request and frees $q$, then after a finite number of requests/frees, $q$ would have expected number of fragments $\leq k$. (Otherwise, such a workload with an indefinite number of repeated requests/frees of $q$ would have expected cumulative-average fragmentation $>k$.) 

Given the algorithm $\alg_{av}$, let workload $\workload_{av}$ be identical to $\workload_{per}$, except whenever $\alg_{av}$ incurs $>k$ expected fragmentation of a request $q$, $\workload_{av}$ repeatedly requests/frees $q$ until the expected fragmentation of $q$ is $\leq k$, and then continues with workload $\workload_{per}$ as usual. (Note that the oblivious adversary constructing $\workload_{av}$ knows the identity of $q$ by knowing the distribution of $\alg_{av}$.)

Let $\alg_{per}$ be an algorithm identical to $\alg_{av}$, except if $\alg_{per}$ is ever about to service a request $q$ with $>k$ fragments, instead it pretends that the workload repeatedly requests and frees $q$, simulating $\alg_{av}$ until $q$ has $\leq k$ fragments, and requests $q$ with its final division into $\leq k$ fragments. We know that $\alg_{av}$ will eventually break $q$ into $\leq k$ fragments by the previous paragraph. Observe that $\alg_{per}$ has per-request fragmentation $\leq k$.

By construction,  $\alg_{per}$ on $\workload_{per}$, and  $\alg_{av}$ on $\workload_{av}$ have the same expected memory high-water mark. By assumption, running $\alg_{per}$ on $\workload_{per}$ has expected memory high-water mark $S$, so running $\alg_{av}$ on $\workload_{av}$ also has expected memory high-water mark $S$. The claim follows since $\alg_{av}$ was chosen to be an arbitrary algorithm with cumulative-average fragmentation $\leq k$.
\end{proof}
\section{Reducing \Cref{thm:main-lower} to \Cref{thm:main-lower-reduced}}\label{app:reductions}

In this section, we prove that \Cref{thm:main-lower} follows from \Cref{thm:main-lower-reduced}.

In what follows we will always assume (WLOG) that $k \ge 2$, that $k$ is a power of two, and that $M$ and $Q$ are powers of $k$. These assumptions are WLOG in that, even if we round $k$ up to a power of $2$ (greater than $1$) and round $\advmem$ and $\advreq$ down to powers of $k$, the asymptotic lower bound $\Omega \left(\log_{\max\{k,2\}}\advreq\right)$ that we wish to prove does not change.

We now prove a series of reductions between different classes of algorithms.  In what follows, recall that an algorithm is $\tau$-selective if it is permitted to ignore each request with probability up to $\tau$. 

\begin{lemma}\label{lem:tricky}
    Let $\alg$ be an algorithm that breaks each request into $\le k$ fragments, in expectation.
    Then there exists a $1/4$-\tricky algorithm $\balg$ that breaks each request into at most $4 k$ fragments (deterministically!) and such that $\hwm(\balg,\workload) \leq \hwm(\alg, \workload)$ for all workloads $\workload$.
\end{lemma}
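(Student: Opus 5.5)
Define $\balg$ by simulation: $\balg$ runs $\alg$ on the same workload with the same random bits. For each request $q$, it looks at the number of fragments $F_q$ that $\alg$ produces for $q$. If $F_q \le 4k$, then $\balg$ allocates $q$ exactly as $\alg$ does. If $F_q > 4k$, then $\balg$ ignores $q$: it makes no allocation, and when $q$ is later freed it does nothing. Internally, $\balg$ still feeds $q$ and its free to the simulated copy of $\alg$, so that $\alg$'s state, and therefore its future decisions, are identical to those in a real run. This is what makes the construction work even though ignoring decisions are correlated across requests. The lemma explicitly permits such correlations ("not necessarily independently").

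Three properties then need to be checked. First, $\balg$ breaks every non-ignored request into at most $4k$ fragments, deterministically, by construction. Second, $\balg$ is $1/4$-\tricky. For each request $q$, Markov's inequality gives $\Pr[F_q > 4k] \le \E[F_q]/(4k) \le 1/4$, using the hypothesis that $\alg$ uses at most $k$ fragments per request in expectation. Third, the high-water mark does not increase. At every moment, the set of slots occupied by $\balg$ is a subset of the set occupied by the simulated $\alg$, since $\balg$ simply omits the fragments of ignored requests. Hence $\balg$ uses no slot beyond what $\alg$ uses, and $\hwm(\balg,\workload) \le \hwm(\alg,\workload)$ holds pointwise for every realization of the randomness, and so also in expectation.

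The one point I expect needs care is the meaning of "$\le k$ fragments in expectation" when the workload is adaptive or randomized. I would interpret it as: for each fixed workload and each request $q$, $\E[F_q] \le k$ over $\alg$'s coins. If $\workload$ is itself random, as with $\adv$, I would condition on its random choices. Under the oblivious-adversary model the workload's coins are independent of $\alg$'s, so Markov's inequality applies conditionally for every fixed outcome of the workload, and hence also unconditionally. The $1/4$-\tricky condition only requires each request's marginal ignore probability to be at most $1/4$, so the bound above suffices. With these checks, the lemma follows.
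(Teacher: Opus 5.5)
Your proof is correct and matches the paper's argument: simulate $\alg$ with the same coins, copy its allocation whenever it uses at most $4k$ fragments, ignore the request otherwise, and bound the ignore probability by $1/4$ via Markov's inequality. Your extra details (keeping ignored requests in the simulated copy of $\alg$, the pointwise subset-of-occupied-slots argument, and conditioning on the oblivious workload's randomness) just spell out what the paper's two-line proof leaves implicit.
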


\begin{proof}
    We simulate $\alg$ and use its output to create $\balg$.
    Whenever $\alg$ allocates a request with at most $4k$ fragments, $\balg$ performs the same allocation.
    Otherwise, $\balg$ ignores the request, which by Markov's Inequality can happen at most $1/4$ of the time.
\end{proof}

\begin{lemma}\label{lem:equal-size}
    Let $\alg$ be a $1/4$-\tricky algorithm that breaks each request into at most $k$ pieces. 
    Then there exists a $1/4$-\tricky algorithm $\balg$ that breaks each request into $k$ \emph{equal-sized} fragments, and such that for all workloads $\workload$, $\hwm(\balg,\workload) \leq \hwm(\alg, 4\workload)$, where $4\workload$ refers to the workload which allocates requests of exactly $4$ times the size of those in $\workload$.
\end{lemma}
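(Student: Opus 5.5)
The plan is a simulation. Given $\alg$, the algorithm $\balg$ runs $\alg$ on the scaled workload $4\workload$, and derives its own placements from $\alg$'s. Each request $q$ of $\workload$ corresponds to a request $4q$ of size $4|q|$ in $4\workload$. Whenever $\alg$ ignores $4q$, $\balg$ ignores $q$. Since the ignore decisions are copied exactly, $\balg$ is still $1/4$-\tricky.

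When $\alg$ does serve $4q$, it splits it into $j \le k$ fragments of sizes $g_1,\dots,g_j$ with $\sum g_i = 4|q|$. The fragment size $\balg$ will use is $f = |q|/k$; recall that $|q|$ and $k$ are powers of two, so we may assume $f$ is an integer (and handle $|q| < k$ by rounding, or simply restrict to workloads where this does not arise). Inside each of $\alg$'s fragments $g_i$, $\balg$ places $\lfloor g_i / f \rfloor$ consecutive slots of size $f$. We need to check that this yields at least $k$ slots in total:
\[
\sum_i \lfloor g_i/f \rfloor \;\ge\; \sum_i (g_i/f - 1) \;=\; 4k - j \;\ge\; 3k \;\ge\; k.
\]
$\balg$ then puts its $k$ equal fragments of $q$ into any $k$ of these slots. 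Each slot lies inside a region that $\alg$ currently holds for $4q$, and that region stays held exactly until $q$ (equivalently $4q$) is freed. Because $\alg$'s fragments never overlap one another, $\balg$'s fragments never overlap either. They also lie entirely within $\alg$'s occupied memory, which gives $\hwm(\balg,\workload) \le \hwm(\alg,4\workload)$. The argument is deterministic conditioned on $\alg$'s randomness, so the inequality holds for every realization and hence in expectation.

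The one step needing care is the counting inequality above: it is the reason for the factor $4$. The scaled workload lets every fragment lose up to $f$ slots to floor rounding and still leave enough room. A subsidiary point is that the mapping is online. At the moment $q$ arrives, $\balg$ only needs $\alg$'s response to $4q$, and $\alg$'s state depends only on the prefix of $4\workload$, which is determined by the prefix of $\workload$. So $\balg$ can simulate $\alg$ step by step. Frees map directly: freeing $q$ frees $4q$.

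Alignment is not required by the lemma as stated, so we stop here. A later reduction can realign fragments to multiples of $f$ by placing slots at aligned offsets within each $g_i$. This costs at most one additional slot per fragment, which the $3k$ slack already absorbs.
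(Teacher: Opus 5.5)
Your proof is correct and shares the paper's skeleton: simulate $\alg$ on $4\workload$, copy its ignore decisions, carve equal-sized pieces out of $\alg$'s fragments for $4q$, and get the high-water-mark bound because $\balg$'s allocations are a subset of $\alg$'s. The carving step differs, though.

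The paper uses pieces of size $|q|/\ell$, where $\ell$ is the number of fragments $\alg$ used. It discards ``unusable'' fragments of size below $2|q|/\ell$, so at least $2|q|$ mass survives, and argues that each usable fragment loses at most half its mass to rounding. That two-stage halving is where the factor $4$ is spent. You instead fix the piece size at $|q|/k$ and charge at most one lost slot per fragment, giving $\sum_i\lfloor g_i/f\rfloor\ge 4k-j\ge 3k$.

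Your version is more elementary, and it matches the statement more faithfully. It yields exactly $k$ fragments of power-of-two size $|q|/k$, which is what \Cref{lem:aligned} and \Cref{thm:main-lower-reduced} consume. The paper's construction instead produces $\ell\le k$ pieces of size $|q|/\ell$, which need not number $k$ and need not have power-of-two, or even integral, size.

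Your count also shows that a factor $2$ already suffices for the lemma as stated, since $2k-j\ge k$. The extra slack is what makes your closing alignment remark work: you get at least $4k-2j\ge 2k$ aligned slots, so the separate doubling in \Cref{lem:aligned} could be absorbed.

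One small correction: rounding cannot rescue requests with $|q|<k$, because such a request cannot be split into $k$ equal positive-integer fragments at all. Restricting to workloads whose request sizes are multiples of $k$ is the right fix, and the paper's proof relies on the same implicit assumption.
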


\begin{proof}
    We will simulate $\alg$ and use its output to create $\balg$.
    
    To allocate a request $q$ of size $\size{q}$ in $\balg$, we simulate an request $q'$ of size $4\size{q}$ in $\alg$.
    If $\alg$ ignores the requests, then $\balg$ does too.
    Otherwise, suppose that $\alg$ breaks $q'$ into $\ell$ fragments.
    In $\alg$'s allocation, we will call fragments of size at least $2\size{q}/\ell$ \defn{usable}.
    Then at least $4\size{q}/2=2\size{q}$ of the total mass must be allocated in usable fragments, since at most $\ell\cdot2\size{q}/\ell=2\size{q}$ mass can be allocated in unusable fragments.
    Now, a usable fragment of size $F$ can be divided into $\lfloor F\ell/\size{q} \rfloor$ fragments of size $\size{q}/\ell$, with at most $F/2$ mass remaining.
    Therefore, we can construct at least $\ell$ fragments of size $\size{q}/\ell$ within $\alg$'s allocation.
    We define $\balg$ to allocate $q$ into the first $\ell$ of these memory locations.
    
    $\balg$ ignores a request with the same probability as $\alg$ and breaks each request into equally sized fragments by construction.
    Because each allocation of $\balg$ during $\workload$ is a subset of the allocations made in $\alg$ on $4\workload$, $\hwm(\balg,\workload) \leq \hwm(\alg, 4\workload)$.
\end{proof}

To give the class of algorithms that we have reduced to so far a concrete name, say that an algorithm uses \defn{$1/4$-\trickyalignedevenfrag{$({=}k)$}} if it is $1/4$-tricky and breaks each request into $k$ equal-sized fragments. Recall that an algorithm is said to be \defn{aligned} if it allocates each request $r$ (whose size is, WLOG, a power of two) to a region of the form $((j - 1)|r|, j|r|]$ for some non-negative integer $j$. We now prove that we can reduce to algorithms which are aligned. 
\begin{lemma}\label{lem:aligned}
    Let $\alg$ be a $1/4$-\tricky algorithm that breaks each request into exactly $k = 2^i$ equal-sized fragments. Let $\mathcal{W}$ be the set of workloads in which every request is a power-of-$2$ size. 
    Then there exists a $1/4$-\trickyalignedevenfrag{$({=}2^i)$} algorithm, $\balg$, such that for all  $\workload \in \mathcal{W}$, $\hwm(\balg,\workload) \leq \hwm(\alg, 2\workload)$, where $2\workload$ refers to the workload which allocates requests of exactly $2$ times the size of those in $\workload$.
\end{lemma}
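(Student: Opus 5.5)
The plan is a simulation argument in the style of \Cref{lem:equal-size}. $\balg$ serves each request $q$ of $\workload$ by simulating $\alg$ on the corresponding request $q'$ of size $2\size{q}$ in $2\workload$, and then places $q$'s fragments inside the slots $\alg$ used for $q'$. Frees are mirrored, and if $\alg$ ignores $q'$ then $\balg$ ignores $q$. Selectivity is then inherited exactly, so $\balg$ is still $1/4$-\tricky. Since $\balg$ only ever uses slots that $\alg$ is currently using on $2\workload$, we get disjointness of live allocations and $\hwm(\balg,\workload)\le\hwm(\alg,2\workload)$ for free.

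The alignment step goes as follows. Since $\size{q}$ is a power of two and $k=2^i$, $\alg$ breaks $q'$ into $k$ equal fragments, each of size $2\size{q}/k$, which is a power of two. Consider any contiguous interval $J$ of length $2g$, where $g=\size{q}/k$ (assuming $g\ge 1$; otherwise the meaning of alignment degenerates and we can treat that case separately or assume $\size{q}\ge k$ as in the workload). Such an interval contains at least one aligned block of the form $(jg,(j+1)g]$: take the first multiple of $g$ that is at least the left endpoint of $J$. So inside each of $\alg$'s $k$ fragments for $q'$ we choose one aligned size-$g$ block, and $\balg$ places one of $q$'s $k$ fragments there. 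This gives exactly $k$ equal-sized aligned fragments, as the definition of \alignedevenfrag{$({=}k)$} requires.

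Correctness is then direct. Each fragment of $q$ lies inside a distinct live fragment of $q'$, and $\alg$'s live fragments are pairwise disjoint, so $\balg$'s live fragments are also disjoint. They are freed exactly when $q'$ is freed in the simulation, so no collision can arise later. The fact that $\alg$ is online and the choice of block is deterministic given $\alg$'s output keep $\balg$ online with the same random bits.

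The main point needing care is the definitional one: the aligned-fragment condition is relative to the fragment size $\size{q}/k$ rather than to $\size{q}$. I expect this to be the only real obstacle, and the factor-of-two inflation handles it, because a block of size twice the fragment size always contains an aligned sub-block. Everything else is bookkeeping identical to \Cref{lem:equal-size}.
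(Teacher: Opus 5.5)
Your proposal is correct and follows essentially the same route as the paper: simulate $\alg$ on the doubled request $q'$, inherit its ignore decisions, and place each of $q$'s $k$ fragments in an aligned block of size $\size{q}/k$ contained in the corresponding size-$2\size{q}/k$ fragment of $q'$. The only difference is how the aligned block is chosen: the paper takes the one containing the fragment's center, and you take the first aligned block after the left endpoint, which is equivalent. Your explicit use of $\size{q}/k$ as the alignment granularity is the right reading of the definition.
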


\begin{proof}
    We will simulate $\alg$ and use its output to create $\balg$.
    
    To allocate a request $q$ of size $\size{q}$ in $\balg$, we simulate a request $q'$ of size $2\size{q}$ in $\alg$.
    If $\alg$ ignores the request, then $\balg$ does too.
    Otherwise, for an allocated fragment $F$ of $q'$,
    the center of $F$ must fall within some $\size{q}/2$-aligned block $B$ (possibly on its edge).
    Then $F$ must contain $B$, since it is twice as large as $B$.
    $\balg$ then allocates into $B$.

    $\balg$ is $1/4$-\tricky and \alignedevenfrag{$({=}2^i)$}, because $\alg$ is and by construction.
    $\balg$'s allocations are a subset of $\alg$'s, so $\hwm(\balg,\workload) \leq \hwm(\alg, 2\workload)$.
\end{proof}

Combining \Cref{lem:tricky,lem:equal-size,lem:aligned} show that \Cref{thm:main-lower} with the WLOG assumptions that $k \ge 2$ is a power of $2$ and $M, R$ are powers of $k$, we can conclude that \Cref{thm:main-lower} reduces to \Cref{thm:main-lower-reduced}.

\end{appendix}

\end{document}